\newcommand{\bes} {\begin{subequations}}
\newcommand{\ees} {\end{subequations}}
\newcommand{\bea} {\begin{eqnarray}}
\newcommand{\eea} {\end{eqnarray}}
\newcommand{\beq} {\begin{equation}}
\newcommand{\eeq} {\end{equation}}
\newcommand{\ba} {\begin{align}}
\newcommand{\ea} {\end{align}}
\def\>{\rangle}
\def\<{\langle}
\newcommand{\ignore}[1]{}
\newtheorem{NUDD}{NUDD Theorem}
\newtheorem{mylemma}{Lemma}
\newtheorem{mycorollary}{Corollary}
\newtheorem{mydef}{Definition}
\begin{document}

\title{Universality proof and analysis of generalized nested Uhrig dynamical decoupling}
\author{Wan-Jung Kuo}
\affiliation{Department of Physics and Astronomy, Center for Quantum Information Science \&
Technology, University of Southern California, Los Angeles, California
90089, USA}
\author{Gerardo Andres Paz-Silva}
\affiliation{Center for Quantum Information Science \& Technology, University of Southern
California, Los Angeles, California 90089, USA}
\author{Gregory Quiroz}
\affiliation{Department of Physics and Astronomy, Center for Quantum Information Science \&
Technology, University of Southern California, Los Angeles, California
90089, USA}
\author{Daniel A. Lidar}
\affiliation{Departments of Electrical Engineering, Chemistry, and Physics, and Center
for Quantum Information Science \& Technology, University of Southern
California, Los Angeles, California 90089, USA}

\begin{abstract}

Nested Uhrig dynamical decoupling (NUDD) is a highly efficient quantum error suppression scheme that builds on optimized single axis UDD sequences. We prove the universality of NUDD and analyze its suppression of different error types in the setting of generalized control pulses. We present an explicit lower bound for the decoupling order of each error type, which we relate to the sequence orders of the nested UDD layers. We find that the error suppression capabilities of NUDD are strongly dependent on the {parities} and relative magnitudes of all nested UDD sequence orders. This allows us to predict the optimal arrangement of sequence orders. We test and confirm our analysis using numerical simulations.
\end{abstract}

\maketitle

%%%%%%%%%%%%%%%%%%%%%%%%%%%%%%%%%%%%%%%%%%%%%%%%%%%%%%%%%
\section{Introduction}

One of the main obstacles in building a quantum computer is the inevitable coupling between a quantum system and its environment, or bath, which typically results in decoherence and leads to computational errors \cite{Breuer:book,NielsenChuang:book,Ladd:10}.  Adapted from nuclear magnetic resonance (NMR) refocusing techniques \cite{Hahn:50,Haeberlen:book}, dynamical decoupling (DD) \cite{Yang-DD-review} is a powerful open-loop technique that can be used to suppress decoherence by applying a sequence of short and strong pulses purely on the system to mitigate unwanted system-bath interactions.

Early schemes such as periodic DD (PDD) \cite{ViolaLloyd:98,Ban:98,Duan:98e,Zanardi:99,ViolaKnillLloyd:99}, using equidistant $\pi$-pulses, have been shown to suppress general system-bath interactions up to first order in time-dependent perturbation theory, with respect to the total sequence duration $T$, thereby achieving first-order decoupling. Concatenated DD (CDD) \cite{KhodjastehLidar:05,KhodjastehLidar:07}, which recursively embeds PDD into itself, was the first explicit scheme capable of achieving $N^{\rm th}$-order decoupling for general single-qubit decoherence, i.e., complete error suppression to order $T^{N}$, and has been amply tested in recent experimental studies \cite{Peng:11,AlvarezAjoyPengSuter:10,2010arXiv1011.1903T,2010arXiv1011.6417W,Barthel:10}.  However, the number of pulses required by CDD grows exponentially with decoupling order: $4^N$ pulses are required to attain $N$th order error suppression, which is not efficient enough to implement scalable quantum computing when $N$ is large.  For scalable quantum computing to be possible, it is desirable to design a DD sequence which is both accurate (high decoupling order) and efficient (small number of pulses).

For the single-qubit pure dephasing spin-boson model,  an optimal scheme called Uhrig DD (UDD) \cite{Uhrig:07,Uhrig:08}, achieves $N^{\rm th}$ order decoupling with the smallest possible number $N$ of ideal $\pi$-pulses, applied at non-equidistant pulse timings, 
\begin{equation}
t_{j}=T\sin ^{2}\frac{j\pi }{2(N+1)},\quad j\in\{1,\dots,\bar{N}\}
\label{UDDtiming}
\end{equation}
where $\bar{N}=N$ or $N+1$ depending on whether $N$ is even or odd. $N$ is also referred to as the sequence order.
The applicability of UDD extends beyond the spin-boson model to models with pure dephasing interactions, with generic bounded baths \cite{YangLiu:08,UL:10}, rendering  it model-independent, or ``universal". Moreover, it has also been proven to be applicable to analytically time-dependent Hamiltonians \cite{PasiniUhrig:10}. Extensive numerical and experimental studies of UDD performance can be found in { \cite{DharGroverRoy:06,LeeWitzelDasSarma:08,Cywinskietal:08,Biercuk2:09,Biercuk:09,Du:09,AlvarezAjoyPengSuter:10,Elizabeth:09,Barthel:10,WangDobrovitski:11,Almog1Davidson:11,YoungWhaley:11}.} Rigorous performance bounds for UDD were found in \cite{UL:10}, and for QDD and NUDD in \cite{Xia:11}.

While UDD is applicable to single- and two-axis interactions (e.g., pure dephasing and/or pure bit-flip), it cannot overcome general, three-axis single-qubit decoherence.
Quadratic DD (QDD) \cite{WestFongLidar:10} and its generalization Nested UDD (NUDD) \cite{WangLiu:11}, {which were proposed to tackle} general single and multi-qubit decoherence respectively, exploit the decoupling efficiency of UDD by nesting two- and multi-layer UDD, where sequence orders of different UDD layers can be different in order to address the dominant sources of error more efficiently. In \cite{WangLiu:11}, NUDD (including QDD) with even sequence orders on the inner levels was verified analytically to achieve arbitrary order decoupling with only a polynomial increase {in} the number of pulses over UDD, an exponential improvement over CDD or CUDD \cite{Uhrig:09} which combines orthogonal single-axis CDD and UDD sequences. A universality proof of NUDD with arbitrary sequence orders was given in \cite{JiangImambekov:11}. The same result, for QDD with arbitrary sequence orders, was proved independently in our previous work \cite{WanLidar:11}, using a different method. Our universality proof for QDD \cite{WanLidar:11} went beyond quantifying the overall performance of QDD, in that we provided a thorough analytical study of the suppression abilities of the inner- and outer-layer UDD sequences of QDD for each  single-axis error, and obtained the decoupling order of each single-axis error. We found the decoupling order dependence on the parities and relative magnitudes of the inner and outer sequence orders. However, the analysis for the performance of NUDD with arbitrary sequence orders on each individual error is still missing. A further restriction concerns the type of control pulses, which in general QDD or NUDD proofs \cite{JiangImambekov:11, WanLidar:11} {have so far been} limited to elements of SU(2) or tensor products thereof. Numerical studies of QDD and NUDD performance are given in \cite{QuirozLidar:11,WangDobrovitski:11,KHYMLEEKANG:06,PanXiGong:11,Mukhtar2:10}

In this paper, we give a rigorous and compact proof for the universality and performance of NUDD (including QDD) with arbitrary sequence orders. The set of control {pulse} types is generalized to the mutually  orthogonal operation set (MOOS) defined in \cite{WangLiu:11}. The concept of error types is also generalized, in the sense that errors are classified according to the types of control pulses chosen.  Most importantly, we obtain the explicit decoupling order formula,  a function of the given error type, and the parities and magnitudes of all the sequence orders of NUDD.  This formula shows explicitly how each UDD layer contributes to {the suppression of} a given type of error. An immediate consequence is  that the overall suppression order of NUDD with {a} MOOS  as the control pulse set is the minimum among all sequence orders of NUDD. Moreover, our analysis identifies the condition under which the suppression ability of a given UDD layer is being hindered, or rendered totally ineffective, or enhanced by other UDD layers with odd sequence orders.  One can thus design an NUDD scheme such that the full power of each UDD layer is fully exploited. Note that our proof also shows that the performance of NUDD schemes with generalized control pulse types and arbitrary sequence orders  is universal, as in previous UDD-like schemes with Pauli group elements as control pulses \cite{YangLiu:08,UL:10,WangLiu:11,JiangImambekov:11,WanLidar:11}. In other words, the performance of general NUDD sequence remains the same for multi-qubit or multi-level systems coupling to arbitrary bounded environments. We present numerical simulations in support of our analytical results for the decoupling order of each error type, for a four-layer NUDD scheme applied to a two-qubit system.

The structure of this paper is as follows.
In Sec.~\ref{sec:formulation}, a general NUDD scheme with MOOS as the control pulse set  is formulated. In  Sec.~\ref{sec: NUDD performance}, the results of the performance of NUDD are presented.   Specifically, we present the NUDD Theorem (theorem \ref{thm:NUDD}), which gives the decoupling order formula for each error type. Corollary \ref{col:NUDDoverall}, along with its proof, gives the overall performance of NUDD. The complete proof of the NUDD Theorem is presented in Sec.~\ref{sec:NUDDproof}.  Numerical results for {a 4-layer} NUDD scheme on 2 qubits system are demonstrated in Sec.~\ref{sec:numerical result}, in support of our analysis. We conclude in Sec.~\ref{sec:conclusion}. The appendixes provides additional  technical details.

%%%%%%%%%%%%%%%%%%%%%%%%%%%%%%%%%%%%%%%%%%%

\section{NUDD formulation}
\label{sec:formulation}

\subsection{The noise model}
We assume a completely general noise Hamiltonian $H$ acting on the joint system-bath Hilbert space, the only assumption being that $\|H\| < \infty$. We allow for arbitrary interactions between the system and the bath, as well as between different parts of the system or between different parts of the bath. We use
\begin{equation}
\|A\|\equiv \sup_{\ket{\psi}}\frac{\sqrt{\bra{\psi}A^{\dagger}A\ket{\psi}}}{\braket{\psi|\psi}},
\end{equation}
to denote the sup-operator norm of any operator $A$, i.e., the largest singular value of $A$, or the largest eigenvalue (in absolute value) if $A$ is Hermitian.

\subsection{NUDD pulse timing}
A general $\ell$-layer NUDD scheme with a sequence order set $\{N_{1},N_{2},\dots, N_{\ell}\}$ is constructed by concatenating $\ell$ levels of UDD sequences, where $N_{i}$ is the sequence order of the UDD$_{N_{i}}$ sequence at the $i^{\textrm{th}}$ level of NUDD.   The sequence orders of different UDD layers can assume different values in order to address the dominant sources of error in any particular implementation more efficiently. The control {pulse} operator set $\{\Omega_{1},\Omega_{2},\dots, \Omega_{\ell}\}$,  where the subscript $i$  is the layer index, is chosen to be the mutually orthogonal operation set (MOOS) defined in \cite{WangLiu:11}, which consists of independent, mutually commuting or anticommuting system operators, each of which is both unitary and Hermitian. Obviously, each $\Omega_{i}$ is required not to commute with the total Hamiltonian, for otherwise it would not have any effect on the noise. Note that the MOOS elements $\Omega_{i}$  are not restricted to be single-qubit system operators such as a single-qubit Pauli matrix.

The normalized $\ell$-layer NUDD pulse timing  $\eta_{\,j_{\ell},j_{\ell-1},\dots,j_{1}}$ is defined as the actual NUDD timing, divided by total evolution time $T$, where  $j_{i}\in\{1,\dots,N_{i}+1\}$ is called the $i^{\textrm{th}}$ layer UDD pulse timing index. With fixed $\{j_{k}\}_{k=i+1}^{\ell}$ and $\{j_{k}=N_{k}+1\}_{k=1}^{i-1}$, $\eta_{\,j_{\ell},j_{\ell-1},\dots,j_{1}}$ with $j_{i}$ running from $1$ to $N_{i}+1$   constitute one cycle of UDD$_{N_{i}}$, of total duration $s_{j_{\ell}}s_{j_{\ell-1}}\cdots s_{j_{i+1}}$, i.e.,
\begin{align}
&\eta_{\,j_{\ell},\cdots,j_{i+1},j_{i}}=\eta_{\,j_{\ell},\dots,j_{i+1},0}\notag \\ 
&\qquad + s_{j_{\ell}}s_{j_{\ell-1}}\cdots s_{j_{i+1}}\sin ^{2}\frac{j_{i}\pi }{2(N_{i}+1)}, 
\label{ith UDDtiming}
\end{align}
where 
\bes
\begin{eqnarray}
\eta_{\,j_{\ell},\dots,j_{i}}& \equiv & \eta_{\,j_{\ell},\dots,j_{i+1},j_{i},N_{i-1}+1,\dots,N_{1}+1} \\
& \equiv & \eta_{\,j_{\ell},\dots,j_{i+1},j_{i}+1,0,\dots,0}
\end{eqnarray}
\ees
with $\eta_{j_{\ell}=0}\equiv 0$, 
and 
\begin{equation}  
s_{j_{k}}=\sin \frac{\pi }{2(N_{k}+1)}\sin \frac{(2j_{k}-1)\pi }{2(N_{k}+1)} \label{s}\\
\end{equation}
is the $j_{k}^{\textrm{th}}$ pulse interval of the normalized UDD$_{N_{k}}$ sequence.

Accordingly, the $i^{\textrm{th}}$ level $\Omega_{i}$ pulses are applied at the timings $\eta_{\,j_{\ell},\dots,j_{i}}$ with $j_{i}=1,2,\dots, \overline{N}_{i}$, where $\overline{N}_{i}=N_{i}$ if $N_{i}$ even while $\overline{N}_{i}=N_{i}+1$ if $N_{i}$ odd, and 
$\{j_{k}\in \{1,\dots, N_{k}+1\}\}_{k=i+1}^{\ell}$. The additional pulse applied at the end of the sequence when $N_{i}$ is odd, is required in order to make the total number of $\Omega_{i}$ pulses even, so that the overall effect of the $\Omega_{i}$ pulses at the final time $T$ will be to leave the qubit state unchanged \cite{WestFongLidar:10,WanLidar:11}.

%%%%%%%%%%%%%%%%%%%%%%%%%%%%%%%%%%%%%%%%%%%%%%%%%%%%%%%%%%%%%%%%%%

\subsection{$(r_{1},r_{2},\dots, r_{\ell})$-type error}

Each control operator  $\Omega_{i}$ can divide the total Hamiltonian $H$ into two parts: one {that} commutes with  $\Omega_{i}$ and another {that} anticommutes with $\Omega_{i}$. Hence, by the procedure provided in Appendix \ref{app: step H},  the total Hamiltonian can accordingly be divided into $2^{\ell}$ independent pieces by the given MOOS $\{\Omega_{1},\Omega_{2},\dots, \Omega_{\ell}\}$, as 
\begin{equation}
H=\sum_{\{r_{i}=0,1\}_{i=1}^{\ell}}H_{(r_{1},r_{2},\dots, r_{\ell})}
\label{Hr}
\end{equation} 
where 
\begin{equation}
 \begin{cases}
        \;    r_{i}=1              &  \Rightarrow  \{\Omega_{i}\,,H_{(r_{1},r_{2},\dots, r_{\ell})}\} =0 \\
         \;  r_{i}=0               & \Rightarrow   [\Omega_{i}\,,H_{(r_{1},r_{2},\dots, r_{\ell})}] =0
 \end{cases}
\label{c,anti-c}
\end{equation}
with $i=1, 2,\dots \ell$ 
and
\begin{equation}
\sum_{\{r_{i}=0,1\}_{i=1}^{\ell}}\equiv \sum_{r_{1}=0,1}\sum_{r_{2}=0,1}\dots\sum_{r_{\ell}=0,1}.
\label{r}
\end{equation}

$H_{(r_{1},r_{2},\dots, r_{\ell})}$ is classified as an $(r_{1},r_{2},\dots, r_{\ell})$-type error, a definition which includes all the operators that have the same commuting or anticommuting  relation ~\eqref{c,anti-c} as $H_{(r_{1},r_{2},\dots, r_{\ell})}$ with respect to a given  MOOS $\{\Omega_{1},\Omega_{2},\dots, \Omega_{\ell}\}$. In particular,  the $\vec{0}_{\ell}\equiv (0,0,\dots,0)$-type error, which commutes with all control pulses and hence is not suppressed by the NUDD sequence, is called a trivial error. All other error-types are non-trivial.

For example, suppose the first and second pulse types used in 2-layer NUDD  (namely, QDD) are $Z$ and $X$-type pulses, respectively. And the target quantum system we consider is a single qubit subjected to general decoherence, which can always be modeled as 
\begin{equation}
H=J_{0}I\otimes B_{0}+J_{x}\sigma _{x}\otimes B_{x}+J_{y}\sigma _{y}\otimes
B_{y}+J_{z}\sigma _{z}\otimes B_{z},
\label{H1qubit}
\end{equation}%
where $B_{\lambda}$, $\lambda \in \{0,X,Y,Z\}$, are arbitrary bath-operators, the Pauli matrices, $\sigma _{\lambda}$, $\lambda \in \{X,Y,Z\}$, are the unwanted errors acting on the system qubit, and $J_{\lambda}$, $\lambda \in \{0,X,Y,Z\}$, are bounded qubit-bath coupling coefficients.  Then the MOOS  $\{\sigma _{z}, \sigma _{x}\}$ divides the Hamiltonian \eqref{H1qubit} into four pieces: one trivial error $H_{(0,0)}=J_{0}I\otimes B_{0}$, and three non-trivial errors, $H_{(1,0)}=J_{x}\sigma_{x}\otimes B_{x}$,   $H_{(0,1)}=J_{z}\sigma_{z}\otimes B_{z}$, and $H_{(1,1)}=J_{y}\sigma_{y}\otimes B_{y}$.

%%%%%%%%%%%%%%%%%%%%%%%%%%%%%%%%%%%%%%%%%%%%%%%%%%%%%%%%%%%%%%%%%%%%%%%%%%%%%%%%%%%%%%%%%%%%%%%%%
\subsection{$(r_{1},r_{2},\dots, r_{\ell})$-type error modulation function}

Due to the discreteness of the pulse timings,  it is easier to perform the analysis in the toggling frame, i.e., the frame that rotates with the control pulses.  Up to $\pm 1$ factors, the normalized ($T=1$) control evolution operator is 
\begin{equation}
U_{c}(\eta)=\Omega_{\ell}^{j_{\ell}-1}\dots\Omega_{2}^{j_{2}-1} \Omega_{1}^{j_{1}-1}
\label{Uc} 
\end{equation}
when $\eta\in[\eta_{\,j_{\ell},j_{\ell-1},\dots,j_{1}-1},\eta_{\,j_{\ell},j_{\ell-1},\dots,j_{1}})$ and $U_{c}(1)=I$, the identity operator.

Note that the commuting and anticommuting relation, Eq.~\eqref{c,anti-c}, can be reformulated equivalently as follows, 
\begin{equation}
\Omega_{i}\,H_{\{r_{1},r_{2},\dots, r_{\ell}\}}\,\Omega_{i}=(-1)^{r_{i}} H_{(r_{1},r_{2},\dots, r_{\ell})},
\label{c,anti-c 2}
\end{equation}
for $i$ from $1$ to $\ell$,  by using  the unitary Hermitian property $\Omega_{i}^{2}=I$.

Hence, with Eqs.~\eqref{Uc} and ~\eqref{c,anti-c 2}, we obtain  the Hamiltonian in the toggling frame,
\begin{eqnarray}
\tilde{H}(\eta)&=&U^{\dagger}_{c}(\eta)H U_{c}(\eta)\notag\\
&=&\sum_{\{r_{i}=0,1\}_{i=1}^{\ell}} (f_{1})^{r_{1}}\dots (f_{\ell})^{r_{\ell}}H_{(r_{1},r_{2},\dots, r_{\ell})}
\label{H}
\end{eqnarray}
where 
\begin{equation}
f_{i}(\eta)=(-1)^{j_{i}-1} \qquad  \eta \in [\eta_{\,j_{\ell},\dots,j_{i}-1}, \eta_{\,j_{\ell},\dots,j_{i}}) 
\label{f1}
\end{equation}
is the normalized $i^{\textrm{th}}$-layer modulation function for an $\ell$-layer NUDD sequence, which switches sign only when the $i^{\textrm{th}}$ layer UDD$_{N_{i}}$ pulse index, $j_{i}$, changes. The factor $\prod_{i=1}^{\ell}f_{i}(\eta)^{r_{i}}$, i.e., the coefficient of the $(r_{1},r_{2},\dots, r_{\ell})$-type error $H_{(r_{1},r_{2},\dots, r_{\ell})}$ in Eq.~\eqref{H},  is called an $(r_{1},r_{2},\dots, r_{\ell})$-type error modulation function.

Note that since 
\begin{equation}
\prod_{p=1}^{n}f_{i}(\eta)^{r_{i}^{\,(p)}}=f_{i}(\eta)^{\sum_{p=1}^{n}r_{i}^{\,(p)}}=f_{i}(\eta)^{\oplus_{p=1}^{n}r_{i}^{\,(p)}}
\label{Z2}
\end{equation}
where $\oplus$ is the binary addition defined as ordinary integer addition followed by the modulo 2 operation,
$\{f_{i}^{r_{i}=0}, f_{i}^{r_{i}=1}\}$ forms a $Z_{2}$ group under ordinary multiplication. Likewise, for  $(r_{1},r_{2},\dots, r_{\ell})$-type error modulation functions, we have
\begin{equation}
\prod_{p=1}^{n}\prod_{i=1}^{\ell}(\,f_{i}(\theta)\,)^{r_{i}^{\,(p)}}=\prod_{i=1}^{\ell}(\,f_{i}(\theta)\,)^{\oplus_{p=1}^{n}r_{i}^{\,(p)}},
\label{productf}
\end{equation} 
which indicates that the set of the $(r_{1},r_{2},\dots, r_{\ell})$-type error modulation functions   constitutes a $Z_{2}^{\otimes \ell}$ group.

%%%%%%%%%%%%%%%%%%%%%%%%%%%%%%%%%%%%%%%%%%%%%%%%%%%%%%%%%%%%%%%%%%%%%%%
\subsection{Dyson expansion of the evolution operator in the toggling frame}

With the shorthand notations, 
\begin{equation}
\vec{r}_{\ell}\equiv (r_{1},r_{2},\dots, r_{\ell})\in\{0,1\}^{\otimes \ell}
\end{equation}
and 
\begin{equation}
\sum_{\vec{r}_{\ell}}\equiv\sum_{\{r_{i}=0,1\}_{i=1}^{\ell}}
\end{equation}
where the subscript $\ell$ of the vector $\vec{r}_{\ell}$ indicates that $\vec{r}_{\ell}$ has $\ell$ components, the Dyson series expansion of  the evolution propagator in the toggling frame, 
\begin{equation}
\widetilde{U}(T)=\widehat{T}\exp [-iT\int_{0}^{1}\,\widetilde{H}(\eta)\,d\eta],
\end{equation}
reads
\begin{equation}
\widetilde{U}(T)=\sum_{n=0}^{\infty}\sum_{\{\vec{r}_{\ell}^{\,(p)}\}_{p=1}^{n}}(-iT)^{n}\prod_{p=1}^{n}H_{\vec{r}_{\ell}^{\,(p)}}F_{\oplus_{p=1}^{n}\vec{r}_{\ell}^{\,(p)}}
\label{U}
\end{equation}
where the operators in this expression are
\begin{equation}
H^{(n)}_{\vec{r}_{\ell}}\equiv \prod_{p=1}^{n}H_{\vec{r}_{\ell}^{\,(p)}}= H_{\vec{r}_{\ell}^{(n)}}H_{\vec{r}_{\ell}^{(n-1)}}\dots H_{\vec{r}_{\ell}^{(1)}}
\label{eq:Hrln}
\end{equation} 
(note the ordering), and the scalars are
\begin{equation}
F^{(n)}_{\vec{r}_{\ell}}\equiv F_{\oplus_{p=1}^{n}\vec{r}_{\ell}^{\,(p)}}\equiv \prod_{p=1}^{n}\int_{0}^{\eta^{(p+1)}}\prod_{i=1}^{\ell}f_{i}(\eta^{\,(p)})^{r_{i}^{\,(p)}} d\eta^{\,(p)},
\label{F}
\end{equation}
with $\eta^{(n+1)}=1$, and where $r_{i}^{\,(p)}$ is the $i^{\textrm{th}}$ component of $\vec{r}_{\ell}^{\,(p)}$. Our task will be to find the conditions under which these $F_{\oplus_{p=1}^{n}\vec{r}_{\ell}^{\,(p)}}$ coefficients vanish, which will dictate the decoupling orders of the NUDD sequence.

Since
\begin{eqnarray}
&&\Omega_{i}H^{(n)}_{\vec{r}_{\ell}}\Omega_{i}=\prod_{p=1}^{n}\Omega_{i}H_{\vec{r}_{\ell}^{\,(p)}}\Omega_{i}=\prod_{p=1}^{n}(-1)^{r_{i}^{(p)}}H_{\vec{r}_{\ell}^{\,(p)}}\notag\\
&&=(-1)^{\oplus_{p^{\prime}=1}^{n}r_{i}^{(p^{\prime})}}\prod_{p=1}^{n}H_{\vec{r}_{\ell}^{\,(p)}}
\end{eqnarray}
where the first equality is obtained by inserting $\Omega_{i}^{2}=I$ between any adjacent $H_{\vec{r}_{\ell}^{\,(p)}}$ and the second equality is obtained by using Eq.~\eqref{c,anti-c 2}, $\prod_{p=1}^{n}H_{\vec{r}_{\ell}^{\,(p)}}$ either commutes ($\oplus_{p=1}^{n}r_{i}^{(p)}=0$) or anticommutes ($\oplus_{p=1}^{n}r_{i}^{(p)}=1$) with each control pulses operator $\Omega_{i}$. In other words, $\prod_{p=1}^{n}H_{\vec{r}_{\ell}^{\,(p)}}$ belongs to an $\vec{r}_{\ell}$-type error whose component $r_{i}=\oplus_{p=1}^{n}r_{i}^{(p)}$, and can be denoted as either $H_{\oplus_{p=1}^{n}\vec{r}_{\ell}^{\,(p)}=\vec{r}_{\ell}}$, or $H^{(n)}_{\vec{r}_{\ell}}$ [as in Eq.~\eqref{eq:Hrln}] if we only care about the resulting error type $\vec{r}_{\ell}$. Therefore, the set of  $2^{\ell}$  $(r_{1},r_{2},\dots, r_{\ell})$-type errors $H^{(n)}_{\vec{r}_{\ell}}$ form a $Z_{2}^{\otimes \ell}$ group under multiplication.  

Note the key role played by
\beq
\vec{r}_{\ell} = (r_{1},r_{2},\dots, r_{\ell}),\qquad r_{i}=\oplus_{p=1}^{n}r_{i}^{(p)}.
\eeq
There are $2^\ell$ such vectors, and they completely classify all the summands in the Dyson series \eqref{U}.

\subsection{NUDD coefficients}
\label{NUDDcoefficient}

From Eq.~\eqref{F}, $F_{\oplus_{p=1}^{n}\vec{r}_{\ell}^{\,(p)}}$ is a normalized $n$-nested integral (total duration $T=1$), and the $\vec{r}_{\ell}^{\,(p)}$ in its subscript indicates that  the $p^{\textrm{th}}$ integrand of $F^{(n)}_{\oplus_{p=1}^{n}\vec{r}_{\ell}^{\,(p)}}$ is the 
 $\vec{r}_{\ell}^{\,(p)}=(r_{1}^{\,(p)},r_{2}^{\,(p)},\dots, r_{\ell}^{\,(p)})$-type error modulation function $\prod_{i=1}^{\ell}f_{i}(\eta^{\,(p)})^{r_{i}^{\,(p)}}$. According to $F_{\oplus_{p=1}^{n}\vec{r}_{\ell}^{\,(p)}}$'s associated error type $H_{\oplus_{p=1}^{n}\vec{r}_{\ell}^{\,(p)}=\vec{r}_{\ell}}$ (or $H^{(n)}_{\vec{r}_{\ell}}$ for short), we can also denote $F_{\oplus_{p=1}^{n}\vec{r}_{\ell}^{\,(p)}}$ as $F_{\oplus_{p=1}^{n}\vec{r}_{\ell}^{\,(p)}=\vec{r}_{\ell}}$ or $F^{(n)}_{\vec{r}_{\ell}}$ [Eq.~\eqref{F}],
and name it the $n^{\textrm{th}}$ order normalized  $\ell$-layer NUDD $\vec{r}_{\ell}$-type error coefficient. If no confusion can arise we will sometimes call $F_{\oplus_{p=1}^{n}\vec{r}_{\ell}^{(p)}=\vec{r}_{\ell}}$ an ``NUDD coefficient"  for short.

Let us consider a couple of examples.
From Eq.~\eqref{F}, the form of the $1$-layer NUDD $(\vec{r}_{1}=1)$-type error coefficients 
\begin{equation}
F_{\oplus_{p=1}^{n}\vec{r}_{1}^{(p)}=\vec{r}_{1}}=\prod_{p=1}^{n}\int_{0}^{\eta^{(p+1)}}f_{1}(\eta^{(p)})^{r_{1}^{(p)}} d\eta^{(p)}
\end{equation}
is exactly the same as the UDD coefficients appearing in \cite{UL:10} for the pure dephasing model where the dephasing term $\sigma_{z}\otimes B_{z}$ is our  $(\vec{r}_{1}=1)$-type  error and the Pauli matrix $\sigma_{x}$ is our control pulse operator.

Moreover,  the form of the $2$-layer NUDD $\vec{r}_{2}=(r_{1},r_{2})$-type error coefficients 
\begin{equation} 
F_{\oplus_{p=1}^{n}\vec{r}_{2}^{\,(p)}=\vec{r}_{2}}=\prod_{p=1}^{n}\int_{0}^{\eta^{(p+1)}}f_{1}(\eta^{(p)})^{r_{1}^{(p)}} f_{2}(\eta^{(p)})^{r_{2}^{\,(p)}} d\eta^{(p)}
\end{equation} 
is exactly the same as the  QDD coefficients defined in our earlier work \cite{WanLidar:11}, where $\sigma_{x}\otimes B_{x}$, $\sigma_{z}\otimes B_{z}$, and $\sigma_{y}\otimes B_{y}$ are our $(1,0)$-, $(0,1)$- and $(1,1)$-type errors, respectively, while the Pauli matrices $\sigma_{z}$ and $\sigma_{x}$ are our first-layer and second-layer control pulses, respectively.

Note that from Eq.~\eqref{F}, the NUDD coefficients are actually the same no matter what the control pulse operators are, as long as they are independent and constitute a MOOS. Therefore, the proofs for the performance of UDD and QDD sequence in \cite{UL:10} and \cite{WanLidar:11}, which used single qubit Pauli matrices as control pulses, apply directly to $1$-layer NUDD and $2$-layer NUDD schemes with more general control pulses.

%%%%%%%%%%%%%%%%%%%%%%%%%%%%%%%%%%%%%%%%%%%%%%%%%%%%%%%%%%%%%%%%%%%%%%

\section{Performance of the NUDD sequence}
\label{sec: NUDD performance}

\subsection{The decoupling order of each error type }
\label{sec: decoupling order of each error}

As we observed, the summands in the Dyson series expansion of the evolution propagator Eq.~\eqref{U} are each classified as one of $2^{\ell}$ types of errors. Therefore, for a given $\vec{r}_{\ell}$-type error,  if all of its first $\check{N}_{\vec{r}_{\ell}}$ NUDD coefficients vanish, then we say that  the $\ell$-layer NUDD sequence eliminates the $\vec{r}_{\ell}$-type error to  order $\check{N}_{\vec{r}_{\ell}}$, i.e., $\check{N}_{\vec{r}_{\ell}}$ is the decoupling order of  the $\vec{r}_{\ell}$-type error.   

Let us define 
\beq
[x]_{2} \equiv (x \!\!\!\!\mod 2)
\eeq
and
\bes
\bea
p_\oplus(a,b)&\equiv&\bigoplus_{k=a}^{b}r_{k}[N_{k}]_{2}\ {\textrm{if }b\geq a,\textrm{otherwise }0}\\
p_+(a,b)&\equiv&\sum_{k=a}^{b}r_{k}[N_{k}]_{2}\ {\textrm{if }b\geq a,\textrm{otherwise }0}
\eea
\ees
The value of $[N_{k}]_{2}$ indicates the parity of the sequence order $N_{k}$ of the $k^{\textrm{th}}$ UDD layer, i.e., $[N_{k}]_{2}=0$ or $1$ when $N_{k}$ is even or odd, respectively.
Hence $p_\oplus(1,i-1)\in\{0,1\}$
gives the parity of the total number of UDD layers,  each of which has odd sequence order and control pulses that anticommute with the $\vec{r}_{\ell}$-type error, in the first $i-1$ levels of NUDD. Likewise, $p_+(i+1,\ell)$ counts the total number of  those UDD layers after the $i^{\textrm{th}}$ UDD layer, with odd  sequence orders and control pulses anticommuting with the $\vec{r}_{\ell}$-type error. With this in mind, we shall prove the following theorem in Sec.~\ref{sec:NUDDproof}:

\begin{NUDD}
An $\ell$-layer  NUDD scheme with a given sequence order set $\{N_{1},N_{2},\dots, N_{\ell}\}$  eliminates $\vec{r}_{\ell}=(r_{1},r_{2},\dots, r_{\ell})$-type errors  to order $\check{N}_{\vec{r}_{\ell}}$, i.e. 
\begin{equation}
F^{(n)}_{\vec{r}_{\ell}}=0,\quad \forall\, n\leq \check{N}_{\vec{r}_{\ell}},
\label{Fl}
\end{equation}
with the decoupling order of the $\vec{r}_{\ell}$-type error being
\begin{equation}
\check{N}_{\vec{r}_{\ell}}= \max_{i\in\{1,\dots,\ell\}}[r_{i}(p_\oplus(1,i-1)\oplus 1)\widetilde{N}_{i}+p_+(i+1,\ell)]
\label{Nrl}
\end{equation}
where
\begin{equation}
\widetilde{N}_{i}=\begin{cases}
                      N_{i}                    &   \textrm{when } i\leq o_{1}\\
                   \min[ N^{k'<i}_{o_{\min}}+1,N_{i}] &   \textrm{when } i> o_{1},
                  \end{cases}
\label{Ni}
\end{equation}
and where the $o_{1}$th layer is the first UDD layer with odd sequence order denoted as $N_{o_{1}}$, and
\begin{equation}
%N^{k'<i}_{o_{\min}}\equiv\min_{\neq 0}[\{[N_{k'}]_{2}N_{k'}\}_{k'=1}^{i-1}],
{N^{k'<i}_{o_{\min}}\equiv\min\{N_{k'}\ |\ [N_{k'}]_{2}=1\}.}
\label{Nioddmin}
\end{equation}
\label{thm:NUDD}
\end{NUDD}

Let us proceed to explain this theorem before embarking on its proof. First, note that $N^{k'<i}_{o_{\min}}$ is simply the minimum value among all the odd sequence orders of the first $i-1$ UDD layers of NUDD.
As we shall see, $\widetilde{N}_{i}$ is the suppression order of the $i^{\textrm{th}}$ UDD layer, in contrast with the sequence order $N_i$ of the same layer.
%One can understand this by examining the error which  anticommutes only with  the control pulses of the $i^{\textrm{th}}$ UDD layer.  This error is expected to be suppressed only by the $i^{\textrm{th}}$ UDD layer and be unaffected by the other UDD layers.  Therefore, the decoupling order of this error, which is $\widetilde{N}_{i}$ according to Eq.~\eqref{Ni}, should reflect the suppression ability of the $i^{\textrm{th}}$ UDD layer. 
%The suppression order $\widetilde{N}_{i}$,  is by definition of 
By Eq.~\eqref{Ni},
%, always either equal or smaller than the sequence order $N_i$, namely, 
%\begin{equation}
$\widetilde{N}_{i}\leq N_i$.
%\end{equation}
When $\widetilde{N}_{i}<N_i$ occurs, we have $\widetilde{N}_{i}=N^{k'<i}_{o_{\min}}+1$, which suggests that the suppression order of the $i^{\textrm{th}}$ UDD layer is partially hindered by the UDD layer with the smallest odd sequence order, which is nested inside the $i^{\textrm{th}}$ UDD layer.

The coefficient in front of the suppression order $\widetilde{N}_{i}$ in Eq.~\eqref{Nrl} is $1$ if and only if both $r_{i}=1$ and $ p_\oplus(1,i-1)=0$ are satisfied, and vanishes otherwise. Accordingly, there are two requirements for the $i^{\textrm{th}}$ UDD layer to be effective on a given $\vec{r}_{\ell}$-type error. First, the error must anticommute with the control pulses of this UDD layer ($r_{i}=1$). Second, the error must anticommute with a total even number of odd order UDD layers among the first $(i-1)$ layers [$p_\oplus(1,i-1)=0$]. 

In contrast, if an error  anticommutes with a total odd number of odd order UDD layers among the first $(i-1)$ layers [$p_\oplus(1,i-1)=1$],  then the $i^{\textrm{th}}$-layer UDD sequence is totally ineffective in suppressing this error type, even though this error also anticommutes with the control pulses of the $i^{\textrm{th}}$ UDD layer. 

Note that for the trivial error type, the $\vec{0}_{\ell}$-type error,   we have $\check{N}_{\vec{0}_{\ell}}=0$, which by Eq.~\eqref{Fl}  implies $F^{(n)}_{\vec{0}_{\ell}}=0$  for $n \leq 0$. This should be interpreted as saying that the vanishing of the $\vec{0}_{\ell}$-type errors cannot be concluded from Theorem~\ref{thm:NUDD}.

In the following, we discuss the decoupling order formula Eq.~\eqref{Nrl} for some particular examples of NUDD schemes.

%%%%%%%%%%%%%%%%%%%%%%%%%%%%%%%%%%%%%%%%%%%%%%%%%%%%%%%%%%%%%%%%%%%

\subsubsection{1-layer NUDD (UDD)}

Since 1-layer NUDD  has $\ell=1$, for the decoupling order of $(\vec{r}_{1}=1)$-type error, Eq.~\eqref{Nrl} gives $\check{N}_{\vec{r}_{1}=1}=\widetilde{N}_{1}$. By definition of $\widetilde{N}_{1}$ [Eq.~\eqref{Ni}], the suppression order of the first UDD layer $\widetilde{N}_{1}$ is always equal to its sequence order $N_{1}$, i.e. $\widetilde{N}_{1}=N_{1}$,  irrespective of $N_{1}$'s parity. Therefore, we have 
\begin{equation}
\check{N}_{\vec{r}_{1}=1}=N_{1} 
\end{equation}
which shows that the non-trivial error is eliminated up to the UDD's sequence order, in agreement with \cite{YangLiu:08,UL:10}.

\subsubsection{2-layer NUDD (QDD)}

For 2-layer NUDD  with a sequence order set $\{N_1,N_2\}$, the decoupling order formula Eq.~\eqref{Nrl} simplifies to
\begin{equation}
\check{N}_{\vec{r}_{2}}= \max[r_{1}\widetilde{N}_{1}+r_{2}[N_{2}]_{2}, r_{2}\,(\,r_{1}[N_{1}]_{2}\,\oplus 1\,  )\widetilde{N}_{2}\,].
\label{qddformula}
\end{equation}
Specifically, we have
\begin{eqnarray}
&&\check{N}_{(1,0)}= \widetilde{N}_{1}=N_{1}\notag\\
&&\check{N}_{(0,1)}= \max[[N_{2}]_{2}, (\,0 \oplus 1\,  )\widetilde{N}_{2}\,]=\widetilde{N}_{2}\notag\\
&&\check{N}_{(1,1)}= \max[\widetilde{N}_{1}+[N_{2}]_{2}, (\,[N_{1}]_{2}\,\oplus 1\,  )\widetilde{N}_{2}\,].
\label{qddorder}
\end{eqnarray}
If we identify the $(1,0)$-type, $(0,1)$-type, and $(1,1)$-type errors to be $\sigma_{x}$, $\sigma_{z}$, and $\sigma_{y}$ errors, respectively, Eq.~\eqref{qddorder} agrees  exactly with the results obtained in our earlier QDD work \cite{WanLidar:11}.

\subsubsection{NUDD with all even sequence orders}

For $\ell$-layer NUDD with  $[N_{i}]_{2}=0$  $\forall\, i\leq \ell$,   the suppression order of each UDD layer is equal to its corresponding sequence order $N_{i}$, $\widetilde{N}_{i}=N_{i}$ [Eq.~\eqref{Ni}]. Therefore, the decoupling order formula Eq.~\eqref{Nrl} for this type of NUDD schemes reduces to
\begin{equation}
\check{N}_{\vec{r}_{\ell}}=\max_{i\in\{1,\dots,\ell\}}[\,r_{i}\widetilde{N}_{i}]=\max_{i\in\{1,\dots,\ell\}}[\,r_{i}{N}_{i}],
\label{Nalleven}
\end{equation}
which indicates that the suppression ability of the UDD sequence in each layer is unaffected by the other nested UDD sequences, i.e., successive UDD layers do not interfere with one another. 

\subsubsection{NUDD with all even sequence orders except the outer-most UDD layer}

For $\ell$-layer NUDD with  $[N_{i}]_{2}=0$ $\forall\, i<\ell$, and $[N_{\ell}]_{2}=1$,   we again have $\widetilde{N}_{i}=N_{i}$ [Eq.~\eqref{Ni}]. Therefore, the decoupling order formula  Eq.~\eqref{Nrl} reads
\begin{equation}
\check{N}_{\vec{r}_{\ell}}=\max[\{\,r_{i}N_{i}\oplus r_{\ell}\}_{i=1}^{\ell-1},r_{\ell}N_{\ell}].
\label{Nreven}
\end{equation}
From Eq.~\eqref{Nreven}, we can see that if a given error anticommutes with the outer-most UDD layer, namely, $r_{\ell}=1$, then the outer-most layer boosts the suppression abilities of all inner UDD layers by one additional order. We call this the outer-odd-UDD effect.

%%%%%%%%%%%%%%%%%%%%%%%%%%%%%%%%%%%%%%%%%%%%%%%%%%%%%%%%%%%%%%%%%%%%%%%%%%%%%

\subsubsection{NUDD with all odd sequence orders}

For any NUDD scheme with all odd sequence orders, in general, $\widetilde{N}_{i}=N^{k'<i}_{o_{\min}}+1<N_{i}$ could occur, which suggests that the suppression ability of the $i^{\rm th}$ UDD layer is hindered by one of the odd order UDD layers inside the $i^{\textrm{th}}$ layer.

Nevertheless, for the case of $\ell$-layer NUDD with a $[N_{i}]_{2}=1$ $\forall\,i$ \emph{and} $N_{1}>N_{2}>\dots>N_{\ell}$, $\widetilde{N}_{i}=N_{i}$ [Eq.~\eqref{Ni}] is guaranteed . Moreover, it is easy to show that 
\begin{equation}
N_{i}+\sum_{k=i+1}^{\ell}r_{k}>N_{j}+\sum_{k=j+1}^{\ell}r_{k}
\label{Ni>Nj}
\end{equation}
for any $i<j$. Accordingly, due to Eq.~\eqref{Ni>Nj}, it turns out that  the maximum  value in Eq.~\eqref{Nrl} occurs at the inner-most UDD layer that the error anticommutes with. Suppose the first  non-zero component of an $\vec{r}_{\ell}$-type error is $r_{M}=1$. Then it follows that the $M^{\rm th}$ UDD layer has the maximum suppression on this error, i.e. 
\begin{equation}
\check{N}_{\vec{r}_{\ell}}=N_{M}+\sum_{k=M+1}^{\ell}r_{k}.
\label{formula:odd2}
\end{equation}
The second term of the above equation implies that the outer-odd-UDD suppression effects {generated} by  UDD layers with odd sequence orders outside the $M^{\rm th}$ layer all add up.  In other words, the outer-odd-UDD suppression effect is cumulative. 

%%%%%%%%%%%%%%%%%%%%%%%%%%%%%%%%%%%%%%%%%%%%%%%%%%%%%%%%%%%%%

\subsection{The overall performance of NUDD scheme}
\label{sec: overall performance}

The overall performance of an $\ell$-layer  NUDD scheme is quantified by the minimum over the decoupling orders of all error types, i.e., by
\beq
\check{N}_{\min} \equiv \min_{\{\vec{r}_{\ell}\}}[\check{N}_{\vec{r}_{\ell}}].
\label{eq:Nmin}
\eeq
We call $\check{N}_{\min}$ the overall decoupling order. The following corollary of the NUDD Theorem states the relationship between the overall decoupling order $\check{N}_{\min}$ and given sequence orders $\{N_i\}$:
\begin{mycorollary}
The overall decoupling order $\check{N}_{\min}$ of an $\ell$-layer NUDD scheme with a given sequence order set $\{N_{1},N_{2},\dots, N_{\ell}\}$ is 
\begin{equation}
\check{N}_{\min}=\min_{i\in\{1,\dots,\ell\}}[N_{i}]
\label{Nmin}
\end{equation}
\label{col:NUDDoverall}
\end{mycorollary}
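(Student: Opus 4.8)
The plan is to derive the two inequalities $\check{N}_{\min}\le\min_{i}N_{i}$ and $\check{N}_{\min}\ge\min_{i}N_{i}$ separately, working directly from the closed-form expression \eqref{Nrl}--\eqref{Nioddmin} supplied by Theorem~\ref{thm:NUDD}. Throughout, the minimization in \eqref{eq:Nmin} is read as running over the non-trivial error types only: the trivial type $\vec{0}_{\ell}$ is excluded because, as remarked after Theorem~\ref{thm:NUDD}, \eqref{Nrl} gives no genuine information about it (formally it would contribute the spurious value $\check{N}_{\vec{0}_{\ell}}=0$). No machinery beyond the NUDD Theorem and elementary manipulation of $\max$, $\min$ and parities is needed.

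For the upper bound I would exhibit a single error type that saturates it. Let $m$ be any index achieving $N_{m}=\min_{i}N_{i}$, and take $\vec{r}_{\ell}=e_{m}$, the unit vector with $r_{m}=1$ and $r_{k}=0$ for $k\ne m$. Substituting into \eqref{Nrl}: for $i\ne m$ the prefactor $r_{i}(\,p_\oplus(1,i-1)\oplus1\,)$ vanishes and only $p_+(i+1,\ell)$ remains, which equals $[N_{m}]_{2}\le1$ when $i<m$ and $0$ when $i>m$; for $i=m$ one has $p_\oplus(1,m-1)=0$ (all earlier components vanish) and $p_+(m+1,\ell)=0$, so that term is exactly $\widetilde{N}_{m}$. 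Hence $\check{N}_{e_{m}}=\max[\,\widetilde{N}_{m},\,[N_{m}]_{2}\,]$. The decisive point is that $m$ minimizes $N_{i}$, so in either branch of \eqref{Ni} one gets $\widetilde{N}_{m}=N_{m}$: trivially when $m\le o_{1}$; and when $m>o_{1}$, because every odd sequence order among the first $m-1$ layers is at least $N_{m}$, whence $N^{k'<m}_{o_{\min}}+1\ge N_{m}+1>N_{m}$ and the $\min$ in \eqref{Ni} collapses to $N_{m}$. Since $N_{m}\ge[N_{m}]_{2}$ always, $\check{N}_{e_{m}}=N_{m}=\min_{i}N_{i}$, which yields $\check{N}_{\min}\le\min_{i}N_{i}$.

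For the lower bound I would show $\check{N}_{\vec{r}_{\ell}}\ge\min_{i}N_{i}$ for \emph{every} non-trivial $\vec{r}_{\ell}$. Given such a vector, let $M=\min\{i:r_{i}=1\}$ be the position of its first non-zero component. Retaining only the $i=M$ term in the maximum \eqref{Nrl}, and using $p_\oplus(1,M-1)=0$ together with $p_+(M+1,\ell)\ge0$, one obtains $\check{N}_{\vec{r}_{\ell}}\ge\widetilde{N}_{M}$. It then suffices to check $\widetilde{N}_{M}\ge\min_{i}N_{i}$: if $M\le o_{1}$ this is immediate since $\widetilde{N}_{M}=N_{M}$; if $M>o_{1}$, then $\widetilde{N}_{M}=\min[\,N^{k'<M}_{o_{\min}}+1,\,N_{M}\,]$ is the minimum of two quantities each of which is $\ge\min_{i}N_{i}$ (indeed $N^{k'<M}_{o_{\min}}+1>\min_{i}N_{i}$ and $N_{M}\ge\min_{i}N_{i}$), hence so is $\widetilde{N}_{M}$. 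Combining the two bounds proves \eqref{Nmin}.

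Since the computation is short there is no serious obstacle; the only step requiring care is the interplay with the definition \eqref{Ni} of $\widetilde{N}_{i}$ --- specifically, recognizing that at a global minimizer $m$ the ``hindrance'' value $N^{k'<m}_{o_{\min}}+1$ can never be the binding term, so that the suppression order $\widetilde{N}_{m}$ equals the full sequence order $N_{m}$; this is precisely what makes the upper bound tight. One should also keep the degenerate situations in mind (all sequence orders even, so $o_{1}$ does not exist and $\widetilde{N}_{i}\equiv N_{i}$; or $\min_{i}N_{i}=0$), but these are covered by the argument above without modification.
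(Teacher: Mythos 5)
Your proposal is correct and follows essentially the same route as the paper's proof: both rest on the bound $\check{N}_{\vec{r}_{\ell}}\geq\widetilde{N}_{M}$ obtained from the innermost non-zero component, the evaluation of $\check{N}_{\vec{e}_{i}}$ for single-component error types, and the observation that the suppression orders $\widetilde{N}_{i}$ and the sequence orders $N_{i}$ share the same minimum. Your version merely reorganizes this as two explicit inequalities against $\min_{i}N_{i}$ (and is slightly more careful in noting that $\check{N}_{\vec{e}_{m}}=\max[\widetilde{N}_{m},[N_{m}]_{2}]$ before collapsing it to $\widetilde{N}_{m}$), which is a cosmetic rather than substantive difference.
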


\begin{proof}[Proof of Corollary \ref{col:NUDDoverall}]
For a given error type $\vec{r}_{\ell}\neq \vec{0}_{\ell}$, suppose the inner-most non-zero component is $r_{i}=1$ 
{with $i\geq 2$, i.e.,}
%% which also implies 
$r_{k<i}=0$, 
%leading to $\oplus_{k=1}^{i-1}r_{k}[N_{k}]_{2}=0$ (
so that $p_\oplus(1,i-1)=0$. {By definition for $i=1$ we also have $p_\oplus(1,i-1)=0$}. Due to  $r_{i}(p_\oplus(1,i-1)\oplus 1)=1$ and  $\check{N}_{\vec{r}_{\ell}}$ taking the maximum value over the set displayed in Eq.~\eqref{Nrl}, it follows that
\begin{equation}
\check{N}_{\vec{r}_{\ell}} \geq r_{i}(p_\oplus(1,i-1)\oplus 1)\widetilde{N}_{i}=\widetilde{N}_{i}
\label{decoupling>suppression}
\end{equation} 
Among $\vec{r}_{\ell}\neq \vec{0}_{\ell}$ error types, there are error types which anti-commute with only one control pulse type denoted as  $\vec{e}_{i}$ with $r_{i}=1$ and all $r_{j\neq i}=0$. According to Eq.~\eqref{Nrl}, the decoupling order of {the} $\vec{e}_{i}$-type error is 
\begin{equation}
\check{N}_{\vec{e}_{i}}=\widetilde{N}_{i}.
\label{Nei} 
\end{equation}
Owing to Eq. \eqref{decoupling>suppression} and  \eqref{Nei}, $\check{N}_{\min}$, the minimum among decoupling orders of all non-trivial error types [Eq.~\eqref{eq:Nmin}], occurs among the decoupling orders of all $\vec{e}_{i}$-type errors, i.e., 
\begin{equation}
\check{N}_{\min}\equiv\min_{\{\vec{r}_{\ell}\}}[\check{N}_{\vec{r}_{\ell}}]=\min_{i\in\{1,\dots,\ell\}}[\check{N}_{\vec{e}_{i}}]=\min_{i\in\{1,\dots,\ell\}}[\widetilde{N}_{i}]
\end{equation}
Suppose that among the $\ell$ layers, the $o_{1}^{\rm th}$, $o_{2}^{\rm th}$, $\dots$, and $o_{b}^{\rm th}$ UDD layers where $0<o_{1}<o_{2}<\dots<o_{b}\leq \ell$ are the layers with odd sequence orders.
By the definition of $\widetilde{N}_{i}$ [Eq.~\eqref{Ni}], for $i\leq o_{1}$, $\widetilde{N}_{i}=N_{i}$, while for $o_{j-1}<k\leq o_{j}$ with $j\in \{{2},\dots, {b}\}$, $\widetilde{N}_{k}=\min[N_{o_{1}}+1,N_{o_{2}}+1, \dots, N_{o_{j-1}}+1, N_{k}]$.
Then it follows that 
\begin{equation}
\min_{i\in\{1,\dots,\ell\}}[\widetilde{N}_{i}]=\min_{i\in\{1,\dots,\ell\}}[{N}_{i}]
\label{eq:NtildeN}
\end{equation}
which proves Eq.~\eqref{Nmin}.
\end{proof}

%%%%%%%%%%%%%%%%%%%%%%%%%%%%%%%%%%%%%%%%%%%%%%%%%%%%%%

%%%%%%%%%%%%%%%%%%%%%%%%%%%%%%%%%%%%%%%%%%%%%%%%%%%%%%

\section{Proof of the NUDD Theorem}
\label{sec:NUDDproof}

\subsection{Synopsis of the proof}
\label{sec: Synopsis proof}

Our proof is by induction. To establish the base case, we recall that (as already discussed in Sec.~\ref{NUDDcoefficient}) the proofs of the vanishing of UDD coefficients in \cite{UL:10} and  QDD coefficients in our earlier work \cite{WanLidar:11}, which used single qubit Pauli matrices as control pulses, are also valid for the 1-layer and the 2-layer NUDD schemes with a more general set of control pulses (MOOS). From \cite{UL:10} and \cite{WanLidar:11}, the decoupling orders of each error type for UDD and QDD match exactly with Eq.~\eqref{Nrl}.  Accordingly, the NUDD Theorem holds for the 1-layer and the 2-layer NUDD sequence.

Suppose that Theorem \ref{thm:NUDD} holds for $(\ell-1)$-layer NUDD with an arbitrary integer $\ell\geq 2$, i.e. 
\begin{equation}
F^{(n)}_{\vec{r}_{\ell-1}}=0,\quad \forall\, n \leq \check{N}_{\vec{r}_{\ell-1}}.
\label{F l-1}
\end{equation}
Then the remaining task is to show that Theorem \ref{thm:NUDD}---in particular Eqs.~\eqref{Fl} and ~\eqref{Nrl}---also holds for $\ell$-layer NUDD. The procedure is the following.

$F^{(n)}_{\vec{r}_{\ell}}$ can be re-expressed in two different forms, building on two different methods adapted from our earlier QDD proof \cite{WanLidar:11}: the outer-most-layer interval decomposition (``Method 1"), and the nested integral analysis with certain function types (``Method 2"). Each method allows us to show the vanishing of $F^{(n)}_{\vec{r}_{\ell}}$ for some error types, and when put together the two methods complete the proof.

In Sec.~\ref{sec: Fdecomposition}, using Method 1, the {$\ell$-layer} NUDD coefficient $F^{(n)}_{\vec{r}_{\ell}}$ is expressed in terms of the {$(\ell-1)$-layer} NUDD coefficients. Then, in Sec.~\ref{sec:inner performance}, we show that it is the vanishing of the {$(\ell-1)$-layer}  NUDD coefficients that makes the first $\check{N}_{\vec{r}_{\ell-1}}$ orders of the $\ell$-layer NUDD coefficient $F^{(n)}_{\vec{r}_{\ell}}$ vanish, i.e.,
\begin{equation}
F^{(n)}_{\vec{r}_{\ell}=(\vec{r}_{\ell-1}, r_{\ell})}=0,\quad \forall\, n \leq \check{N}_{\vec{r}_{\ell-1}}
\label{Finner1}
\end{equation} 
due to the inductive assumption Eq.~\eqref{F l-1}.

Further, using the outer-layer interval decomposition form of $F^{(n)}_{\vec{r}_{\ell}}$, we  show in Sec.~\ref{sec:oddeffect} that for the $(\vec{r}_{\ell-1},1)$-type error, which anticommutes with the control pulses of the $\ell^{\textrm{th}}$ UDD layer, if the $\ell^{\textrm{th}}$ sequence order is odd then due to the anti-symmetry of the $\ell^{\textrm{th}}$ UDD layer, the first $\ell-1$ UDD layers in fact suppress the error by one more order, namely,  
\begin{equation}
F^{(\check{N}_{\vec{r}_{\ell-1}}+1)}_{(\vec{r}_{\ell-1},1)}=0.
\label{F N+1}
\end{equation}

Combining Eq.~\eqref{F N+1} with Eq.~\eqref{Finner1}, Method 1 gives rise to
\begin{equation}
F^{(n)}_{\vec{r}_{\ell}}=0,\quad \forall\, n \leq \check{N}_{\vec{r}_{\ell-1}}+r_{\ell}[N_{\ell}]_{2}.
\label{Finner2}
\end{equation} 

Let us now define two special cases, by parity:
\bea
\vec{v}_{i}^{a}&\equiv& \vec{r}_{i} \textrm{ with  }p_\oplus(1,i)=a,\quad a\in\{0,1\}
\label{def:errortype}
\eea
The decoupling order of the $(\vec{v}_{\ell-1}^{\,0},1)$-type error which by Eq.~\eqref{def:errortype} includes the $(\vec{0}_{\ell-1},1)$-type error, is derived independently by Method 2 in subsections~\ref{sec: Fourier}-\ref{sec:outermost error}.

The second part of the NUDD proof  is summarized as follows. In Sec.~\ref{sec: Fourier}, we apply a piecewise linear change of variables to $F^{(n)}_{\vec{r}_{\ell}}$, such that the Fourier expansions of all its integrands, $\prod_{i=1}^{\ell}f_{i}(\eta)^{r_{i}}$, belong to certain specific function types. In particular, for NUDD with all even inner sequence orders, there are only two function types, $\{c^{N_{\ell},0}_{\rm even},\zeta^{N_{\ell},0}_{\rm odd}\}$, while for NUDD with at least one odd sequence order, there are four types, $\{c^{N_{\ell},0}_{\rm even},\zeta^{N_{\ell},0}_{\rm odd},\zeta^{N_{\ell},0}_{\rm even},c^{N_{\ell},0}_{\rm odd}\}$. These function types are defined as follows:
\begin{mydef} 
Let $k,q \in \mathbb{Z}$ with $|q|\leq n$ and let $d_{ki}\in \mathbb{R}$ be arbitrary. 
\begin{align*}
&\textrm{name}\,\, &&\quad  \textrm{function type} \\
&c^{N_{\ell},n}_{\rm odd} \quad\,\,&&\quad \sum_{k}d_{k1}\cos[(2k+1)(N_{\ell}+1)\theta + q \theta ] \\
&c^{N_{\ell},n}_{\rm even}\quad\,\,\,&&\quad  \sum_{k}d_{k2}\cos[2k(N_{\ell}+1)\theta + q \theta ]\\
&\zeta^{N_{\ell},n}_{\rm even}\quad\,\,&&\quad   \sum_{k}d_{k3}\sin[2k(N_{\ell}+1)\theta + q \theta] \\
&\zeta^{N_{\ell},n}_{\rm odd} \quad&&\quad   \sum_{k}d_{k4}\sin[(2k+1)(N_{\ell}+1)\theta + q \theta]
\end{align*}
Also, let $\diamond$ denote the product-to-sum trigonometric function operation.
\label{def:type}
\end{mydef}

By utilizing the group properties of  $\{c^{N_{\ell},0}_{\rm even},\zeta^{N_{\ell},0}_{\rm odd}\}$ (or $\{c^{N_{\ell},0}_{\rm even},\zeta^{N_{\ell},0}_{\rm odd},\zeta^{N_{\ell},0}_{\rm even},c^{N_{\ell},0}_{\rm odd}\}$), after integrating out the first $n-1$ integrals of $F^{(n)}_{\vec{r}_{\ell}}$, we obtain the second expression for $F^{(n)}_{\vec{r}_{\ell}}$,
\begin{eqnarray}
F^{(n)}_{(\vec{v}_{\ell-1}^{\,0},1)}&=& \int_{0}^{\pi}\, c^{N_{\ell},n}_{\rm odd}\, d\theta \notag\\
&=&\sum_{k}d_{k}\sin[(2k+1)(N_{\ell}+1)\theta + q \theta]|_{0}^{\pi} \notag\\
&=&0.
\label{outer error}
\end{eqnarray}
for the $(\vec{v}_{\ell-1}^{\,0},1)$-type error,  where $n \leq N_{\ell}$ for NUDD with all even inner sequence orders, and $n \leq \min[ N^{k'<\ell}_{o_{\min}}+1,N_{\ell}]$ for  NUDD with at least one odd inner sequence order. {The vanishing orders of the other types of errors cannot be deduced from this method and we denote this fact by: $F^{(n)}_{\vec{r}_{\ell}\notin(\vec{v}_{\ell-1}^{\,0},1)}=0,\quad \forall\, n \leq 0$.} In conclusion, Method 2 yields     
\begin{equation}
F^{(n)}_{\vec{r}_{\ell}}=0,\quad \forall\, n \leq r_{\ell}[( p_\oplus(1,\ell-1)\oplus 1]\widetilde{N}_{\ell}.
\label{Fouter1}
\end{equation}
%where the definition of $\widetilde{N}_{\ell}$ is Eq.~\eqref{Ni} with $i$ replaced by $\ell$.

Combining  Eq.~\eqref{Finner2} with Eq.~\eqref{Fouter1}, we have 
\begin{align}
\label{Nrl2}
&F^{(n)}_{\vec{r}_{\ell}}=0\\
&\quad \forall n\leq
\max[\check{N}_{\vec{r}_{\ell-1}}+r_{\ell}[N_{\ell}]_{2}, r_{\ell}((p_\oplus(1,\ell-1)\oplus 1)\widetilde{N}_{\ell}].\notag
\end{align}
Equation~\eqref{Nrl2} is equivalent to the decoupling order formula Eq.~\eqref{Nrl} given in Theorem \ref{thm:NUDD}, as can be seen by substituting the explicit expression for $\check{N}_{\vec{r}_{\ell-1}}$, i.e., Eq.~\eqref{Nrl} for $(\ell-1)$-layer NUDD, into Eq.~\eqref{Nrl2}. Therefore, Theorem~\ref{thm:NUDD} also holds for the $\ell$-layer NUDD scheme with arbitrary sequence orders. The induction method also implies that Theorem~\ref{thm:NUDD}  holds for any number of nested layers of NUDD.

%%%%%%%%%%%%%%%%%%%%%%%%%%%%%%%%%%%%%%%%%%%%%%%%%%%%%%%%%%%%%%%%%%%%%%%%%%%%%%%%%%%%%%%%%%%

\subsection{The outer-layer interval decomposition form of $F^{(n)}_{\vec{r}_{\ell}}$}
\label{sec: Fdecomposition}

It is expected that the $(\vec{r}_{\ell-1}\neq\vec{0}_{\ell-1},r_{\ell})$-type errors, which anticommute with one or more of the first $(\ell-1)$  inner-layer control operators, are suppressed mainly by the inner $(\ell-1)$-layer NUDD sequences. In order to extract the action of the inner $(\ell-1)$-layer NUDD scheme and factor out  the action of the outer-most ($\ell^{\textrm{th}}$-layer) UDD sequence, we employ the outer-layer interval decomposition method, which splits each integral of $F^{(n)}_{\vec{r}_{\ell}}$ [Eq.~\eqref{F}] into a sum of sub-integrals over the $\ell^{\textrm{th}}$-layer UDD$_{N_{\ell}}$ pulse intervals $s_{j_{\ell}}$ [Eq.~\eqref{s}].

The derivation of $F^{(n)}_{\vec{r}_{\ell}}$ given in Appendix \ref{app: outer decomposition} shows that each  $\ell$-layer NUDD coefficient can be expressed in terms of  the ($\ell-1$)-layer NUDD coefficients as follows,
\begin{eqnarray}
&&F^{(n)}_{\vec{r}_{\ell}=(\vec{r}_{\ell-1},r_{\ell})}\equiv F_{\oplus_{p=1}^{n}(\vec{r}_{\ell-1}^{\,(p)},r_{\ell}^{\,(p)})}
=\sum_{m=1}^{n}\sum_{\{\sum_{a=1}^{m}n_{a}=n\}}\notag\\
&&(\prod_{a=1}^{m}F^{(n_{a})}_{\vec{r}_{\ell-1}^{\,\<a\>}})
\,(\prod_{a=1}^{m}\sum_{j_{\ell}^{(a)}=a}^{j_{\ell}^{(a+1)}-1}(-1)^{(j_{\ell}^{(a)}-1)r_{\ell}^{\,\<a\>}}\,s_{j_{\ell}^{(a)}}^{\,n_{a}})
\label{Fdecomposition}
\end{eqnarray}
where $j_{\ell}^{(m+1)}-1\equiv N_{\ell}+1$, and for given $m$, $\sum_{\{\sum_{a=1}^{m}n_{a}=n\}}$  sums over all possible combinations of $\{n_{a}\}_{a=1}^{m}$ under the condition 
$\sum_{a=1}^{m}n_{a}=n$,
%\label{n}
with $1\leq n_{a}\leq n$. From the derivation of Eq.~\eqref{Fdecomposition} in Appendix \ref{app: outer decomposition}, one can see that each segment with a specific configuration $\{m, \{n_{a}\}_{a=1}^{m}\}$ defines a unique way to separate the $n$ vectors $\{\vec{r}_{\ell}^{\,(p)}\}_{p=1}^{n}$ into $m$ clusters. The $a^{\textrm{th}}$ cluster contains $n_{a}$  vectors, $\{\vec{r}_{\ell}^{\,(p)}\}_{p=\sum_{k=1}^{a-1}n_{k}+1}^{\sum_{k=1}^{a-1}n_{k}+n_{a}}$  with $n_{0}\equiv0$, and $\vec{r}_{\ell}^{\,\<a\>}$ is the resulting vector of the $a^{\textrm{th}}$ cluster, i.e., 
\begin{equation}
\vec{r}_{\ell}^{\,\<a\>}\equiv \oplus_{p=\sum_{k=1}^{a-1}n_{k}+1}^{\sum_{k=1}^{a-1}n_{k}+n_{a}}\,\vec{r}_{\ell}^{\,(p)},
\end{equation}
which implies $\oplus_{a=1}^{m}\vec{r}_{\ell}^{\,\<a\>}=\vec{r}_{\ell}$, i.e.,
\bes
\begin{eqnarray}
&& \oplus_{a=1}^{m}\vec{r}_{\ell-1}^{\,\<a\>}=\vec{r}_{\ell-1}\label{r l-1}\\
&& \oplus_{a=1}^{m}r_{\ell}^{\,\<a\>}=r_{\ell} \label{r l}.
\end{eqnarray} 
\ees
Moreover, the $a^{\textrm{th}}$ cluster is associated with $F^{(n_{a})}_{\vec{r}_{\ell-1}^{\,\<a\>}}$, an $n_{a}^{\textrm{th}}$ order   ($\ell-1$)-layer $\vec{r}_{\ell}^{\,\<a\>}$-type error  NUDD coefficient,   and with $\sum_{j_{\ell}^{(a)}=a}^{j_{\ell}^{(a+1)}-1}(-1)^{(j_{\ell}^{(a)}-1)r_{\ell}^{\,\<a\>}}\,s_{j_{\ell}^{(a)}}^{\,n_{a}}$ which contains the outermost ($\ell^{\textrm{th}}$) UDD layer's information.

 %%%%%%%%%%%%%%%%%%%%%%%%%%%%%%%%%%%%%%%%%%%%%%%%%%%%%%%%%%%%%%%%%%%%%%%%%%%%%%%%%%
\subsection{The performance of the inner  ($\ell-1$)-layer NUDD}
\label{sec:inner performance}

We shall show that it is the vanishing of the inner $(\ell-1)$-layer  NUDD coefficients,  $F^{(n_{a})}_{\vec{r}_{\ell-1}^{\,\<a\>}}=0,\, \forall\, n_{a} \leq \check{N}_{\vec{r}_{\ell-1}^{\,\<a\>}}$, that make the first $\check{N}_{\vec{r}_{\ell-1}}$ orders of $F^{(n)}_{\vec{r}_{\ell}=(\vec{r}_{\ell-1},r_{\ell})}$ vanish.

First, we have the following lemma, with the proof given in Appendix \ref{app: decomplemma}:
\begin{mylemma}
$\check{N}_{\vec{r}_{\ell-1}} \leq \sum_{a=1}^{m}\check{N}_{\vec{r}_{\ell-1}^{\,\<a\>}}$, where  $\vec{r}_{\ell-1}=\oplus_{a=1}^{m}\vec{r}_{\ell-1}^{\,\<a\>}$.
\label{lem:innersum N}
\end{mylemma}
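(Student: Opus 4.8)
The plan is to reduce Lemma~\ref{lem:innersum N} to the two--cluster case and then read off everything from the explicit formula \eqref{Nrl}. Since $\oplus$ is associative, writing $\vec r_{\ell-1}=(\oplus_{a=1}^{m-1}\vec r_{\ell-1}^{\,\<a\>})\oplus\vec r_{\ell-1}^{\,\<m\>}$ and inducting on $m$, it suffices to prove the subadditivity bound $\check N_{\vec r\oplus\vec s}\le\check N_{\vec r}+\check N_{\vec s}$ for arbitrary $(\ell-1)$--component vectors $\vec r,\vec s$. Two elementary facts will be used repeatedly: (i) the layer suppression orders $\widetilde N_i$ [Eq.~\eqref{Ni}] and the parities $[N_k]_2$ depend only on the fixed sequence orders, not on the error type; and (ii) $r_k\oplus s_k\le r_k+s_k$ componentwise, so that $p_+(i+1,\ell-1)|_{\vec r\oplus\vec s}\le p_+(i+1,\ell-1)|_{\vec r}+p_+(i+1,\ell-1)|_{\vec s}$ while $p_\oplus(1,i-1)|_{\vec r\oplus\vec s}=p_\oplus(1,i-1)|_{\vec r}\oplus p_\oplus(1,i-1)|_{\vec s}$. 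Every lower bound on $\check N_{\vec r}$ and $\check N_{\vec s}$ below will come from substituting one carefully chosen index into the maximum in \eqref{Nrl}.

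Let $i^{*}$ attain the maximum in \eqref{Nrl} for $\vec r\oplus\vec s$. If the coefficient $r_{i^{*}}(p_\oplus(1,i^{*}-1)\oplus1)$ of $\widetilde N_{i^{*}}$, evaluated at $\vec r\oplus\vec s$, is $0$, then $\check N_{\vec r\oplus\vec s}=p_+(i^{*}+1,\ell-1)|_{\vec r\oplus\vec s}$, which by (ii) and the index--$i^{*}$ bounds $\check N_{\vec r}\ge p_+(i^{*}+1,\ell-1)|_{\vec r}$ and $\check N_{\vec s}\ge p_+(i^{*}+1,\ell-1)|_{\vec s}$ is $\le\check N_{\vec r}+\check N_{\vec s}$. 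If the coefficient is $1$, then $r_{i^{*}}\oplus s_{i^{*}}=1$ and $p_\oplus(1,i^{*}-1)|_{\vec r\oplus\vec s}=0$; without loss of generality take $r_{i^{*}}=1$. In the easy subcase $p_\oplus(1,i^{*}-1)|_{\vec r}=0$, index $i^{*}$ is effective for $\vec r$, so $\check N_{\vec r}\ge\widetilde N_{i^{*}}+p_+(i^{*}+1,\ell-1)|_{\vec r}$ while $\check N_{\vec s}\ge p_+(i^{*}+1,\ell-1)|_{\vec s}$; adding and using (ii) gives $\check N_{\vec r}+\check N_{\vec s}\ge\widetilde N_{i^{*}}+p_+(i^{*}+1,\ell-1)|_{\vec r\oplus\vec s}=\check N_{\vec r\oplus\vec s}$.

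The remaining case $p_\oplus(1,i^{*}-1)|_{\vec r}=1$ (hence also $=1$ at $\vec s$), in which layer $i^{*}$ is rendered ineffective for \emph{both} constituents, is the crux, and one must ``pay'' for $\widetilde N_{i^{*}}$ using the inner odd--order layers that cause it. Let $k_r$ be the smallest index with $r_{k_r}=1$ and $[N_{k_r}]_2=1$: it exists and satisfies $k_r<i^{*}$ because $p_\oplus(1,i^{*}-1)|_{\vec r}=1$, and by minimality $p_\oplus(1,k_r-1)|_{\vec r}=0$, so index $k_r$ is effective for $\vec r$ and $\check N_{\vec r}\ge\widetilde N_{k_r}+p_+(k_r+1,\ell-1)|_{\vec r}\ge\widetilde N_{k_r}+p_+(i^{*}+1,\ell-1)|_{\vec r}$, the last step because $k_r<i^{*}$ and $p_+$ sums nonnegative terms over a nested range. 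Define $k_s$ analogously for $\vec s$. The key auxiliary claim is a monotonicity property of Eq.~\eqref{Ni}: for every odd--order layer $j<i^{*}$ one has $\widetilde N_j\ge N^{k'<i^{*}}_{o_{\min}}$, which I would verify by splitting on $j=o_1$ (then $\widetilde N_{o_1}=N_{o_1}\ge N^{k'<i^{*}}_{o_{\min}}$) versus $j>o_1$ (then $\widetilde N_j=\min[N^{k''<j}_{o_{\min}}+1,N_j]$, and $\{k''<j\}\subseteq\{k'<i^{*}\}$ makes both arguments of the $\min$ at least $N^{k'<i^{*}}_{o_{\min}}$). Applying this to $k_r$ and $k_s$ and summing gives $\check N_{\vec r}+\check N_{\vec s}\ge 2N^{k'<i^{*}}_{o_{\min}}+p_+(i^{*}+1,\ell-1)|_{\vec r\oplus\vec s}$. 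Finally, the odd--order layer $k_r<i^{*}$ forces $i^{*}>o_1$, so \eqref{Ni} gives $\widetilde N_{i^{*}}\le N^{k'<i^{*}}_{o_{\min}}+1$, and since $N^{k'<i^{*}}_{o_{\min}}$ is an odd sequence order it is $\ge1$; hence $2N^{k'<i^{*}}_{o_{\min}}\ge N^{k'<i^{*}}_{o_{\min}}+1\ge\widetilde N_{i^{*}}$, and $\check N_{\vec r}+\check N_{\vec s}\ge\widetilde N_{i^{*}}+p_+(i^{*}+1,\ell-1)|_{\vec r\oplus\vec s}=\check N_{\vec r\oplus\vec s}$. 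This closes the two--cluster case and, with the induction on $m$, the lemma.

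I expect this last case to be the main obstacle: the easy cases and the reduction to $m=2$ are routine, but one must see that when the maximizing layer $i^{*}$ is neutralized for both constituent errors the two ``wasted'' inner odd layers (at $k_r$ for $\vec r$, at $k_s$ for $\vec s$) still jointly account for at least $\widetilde N_{i^{*}}$, which rests on the monotonicity claim for $\widetilde N$ together with the elementary but essential bound $N^{k'<i^{*}}_{o_{\min}}\ge1$.
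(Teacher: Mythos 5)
Your proof is correct and follows essentially the same route as the paper's: identify the layer that attains the maximum in Eq.~\eqref{Nrl} for the combined error, handle the easy cases by monotonicity of $p_+$ and $p_\oplus$, and in the parity-blocked case charge $\widetilde{N}_{i^*}$ to the effective inner odd-order layers of two odd-parity constituents via $\widetilde{N}_{i^*}\leq N^{k'<i^*}_{o_{\min}}+1$. The only differences are organizational — you reduce to two clusters by induction on $m$ and use the symmetric bound $2N^{k'<i^*}_{o_{\min}}\geq \widetilde{N}_{i^*}$ with the innermost odd layer, whereas the paper works with general $m$ directly, picks the outermost odd component, and splits the payment as $(\widetilde{N}_{M}-1)+1$ between clusters $a'$ and $a''$.
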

With Lemma \ref{lem:innersum N}, for $F^{(n)}_{\vec{r}_{\ell}=(\vec{r}_{\ell-1},r_{\ell})}$ with $n \leq \check{N}_{\vec{r}_{\ell-1}}$, each $\prod_{a=1}^{m}F^{(n_{a})}_{\vec{r}_{\ell-1}^{\,\<a\>}}$ with $\oplus_{a=1}^{m}\vec{r}_{\ell-1}^{\,\<a\>}=\vec{r}_{\ell-1}$ in Eq.~\eqref{Fdecomposition} satisfies
\begin{equation}
\sum_{a=1}^{m}n_{a} =n \leq \check{N}_{\vec{r}_{\ell-1}} \leq \sum_{a=1}^{m}\check{N}_{\vec{r}_{\ell-1}^{\,\<a\>}}.
\label{n<N}
\end{equation}  
From Eq.~\eqref{n<N}, it follows that there must exist at least one $a$ from $1$ to $m$ such that $n_{a}\leq \check{N}_{\vec{r}_{\ell-1}^{\,\<a\>}}$, because if $n_{a} > \check{N}_{\vec{r}_{\ell-1}^{\,\<a\>}}$ for all $a$, it leads to 
\begin{equation}
\sum_{a=1}^{m}n_{a}> \sum_{a=1}^{m}\check{N}_{\vec{r}_{\ell-1}^{\,\<a\>}}
\label{counterargument}
\end{equation}
which contradicts the assumption Eq.~\eqref{n<N}. Accordingly,  each $\prod_{a=1}^{m}F^{(n_{a})}_{\vec{r}_{\ell-1}^{\,\<a\>}}$ term of  $F^{(n)}_{\vec{r}_{\ell}=(\vec{r}_{\ell-1},r_{\ell})}$ with $n \leq \check{N}_{\vec{r}_{\ell-1}}$ contains at least one $F^{(n_{a})}_{\vec{r}_{\ell-1}^{\,\<a\>}}$ with $n_{a} \leq \check{N}_{\vec{r}_{\ell-1}^{\,\<a\>}}$, which turns out to be zero due to the assumption Eq.~\eqref{F l-1} for $(\ell-1)$ NUDD coefficients. Therefore, the first step  $F^{(n)}_{\vec{r}_{\ell}=(\vec{r}_{\ell-1},r_{\ell})}=0$ for all $n \leq \check{N}_{\vec{r}_{\ell-1}}$ Eq.~\eqref{Finner1} is proven.

From the proof,  one can see that the effect of the outer $\ell$-layer UDD sequence which is entirely contained in the outer part $\prod_{a=1}^{m}\sum_{j_{\ell}^{(a)}=a}^{j_{\ell}^{(a+1)}-1}(-1)^{(j_{\ell}^{(a)}-1)r_{\ell}^{\,\<a\>}}\,s_{j_{\ell}^{(a)}}^{\,n_{a}}$ in Eq.~\eqref{Fdecomposition} does not interfere with the elimination ability of the inner $(\ell-1)$-layer NUDD sequence.

Note that for the $\vec{r}_{\ell}=(\vec{0}_{\ell-1},1)$-type error, $N_{\vec{0}_{\ell-1}}=0$ by  Eq.~\eqref{Nrl}  gives rise to $F^{(n)}_{(\vec{0}_{\ell-1},1)}=0$ for all $n \leq N_{\vec{0}_{\ell-1}}=0$  [Eq.~\eqref{Finner1}]. However, this does not mean that  $F^{(n)}_{(\vec{0}_{\ell-1},1)}\neq 0$ in our context, but rather that  the vanishing of $F^{(n)}_{(\vec{0}_{\ell-1},1)}$ cannot be deduced by the just-mentioned method, which only considers the contribution of the first $(\ell-1)$ UDD layers. This makes sense because the $\vec{r}_{\ell}=(\vec{0}_{\ell-1},1)$-type error, which commutes with all the control pulses of the inner $(\ell-1)$ UDD layers, is supposed to be suppressed by the $\ell^{\rm \,th}$-layer UDD sequence.

%%%%%%%%%%%%%%%%%%%%%%%%%%%%%%%%%%%%%%%%%%%%%%%%%%%%%%%%%%%%%%%%%%%%%%%%%%%

\subsection{One more order suppression of  the $(\vec{r}_{\ell-1},1)$-type error}
\label{sec:oddeffect}

In the previous section, we proved that $F^{(n)}_{\vec{r}_{\ell}=(\vec{r}_{\ell-1},r_{\ell})}=0$ for all $n \leq \check{N}_{\vec{r}_{\ell-1}}$ [Eq.~\eqref{Finner1}]. Now we shall show that for $(\vec{r}_{\ell-1},1)$-type error, there is an additional order suppression, i.e., $F^{(\check{N}_{\vec{r}_{\ell-1}}+1)}_{(\vec{r}_{\ell-1},1)}=0$ [Eq.~\eqref{F N+1}], if $N_{\ell}$ is odd. 

First, we need the following lemma:
\begin{mylemma}
For $\sum_{a=1}^{m}n_{a}=n=\check{N}_{\vec{r}_{\ell-1}}+1$ with $m \geq 2$, there must exist at least one $a$ such that $n_{a}\leq \check{N}_{\vec{r}_{\ell-1}^{\,\<a\>}}$
\label{lem:N+1}
\end{mylemma}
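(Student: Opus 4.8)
## Proof Plan for Lemma \ref{lem:N+1}

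The plan is to argue by contradiction, mirroring (and in fact slightly strengthening) the counting argument used in Sec.~\ref{sec:inner performance} to establish Eq.~\eqref{Finner1}. Suppose, for the sake of contradiction, that every cluster index $a\in\{1,\dots,m\}$ satisfies $n_{a}\geq \check{N}_{\vec{r}_{\ell-1}^{\,\<a\>}}+1$. Summing over all $m$ clusters and using $\sum_{a=1}^{m}n_{a}=n=\check{N}_{\vec{r}_{\ell-1}}+1$ then gives
\begin{equation}
\check{N}_{\vec{r}_{\ell-1}}+1 \;=\; \sum_{a=1}^{m} n_{a} \;\geq\; \sum_{a=1}^{m}\bigl(\check{N}_{\vec{r}_{\ell-1}^{\,\<a\>}}+1\bigr) \;=\; \Bigl(\sum_{a=1}^{m}\check{N}_{\vec{r}_{\ell-1}^{\,\<a\>}}\Bigr) + m .
\label{eq:N+1count}
\end{equation}

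The next step is to invoke Lemma \ref{lem:innersum N}, which, since $\vec{r}_{\ell-1}=\oplus_{a=1}^{m}\vec{r}_{\ell-1}^{\,\<a\>}$ by Eq.~\eqref{r l-1}, tells us $\check{N}_{\vec{r}_{\ell-1}}\leq \sum_{a=1}^{m}\check{N}_{\vec{r}_{\ell-1}^{\,\<a\>}}$. Substituting this into \eqref{eq:N+1count} yields
\begin{equation}
\check{N}_{\vec{r}_{\ell-1}}+1 \;\geq\; \check{N}_{\vec{r}_{\ell-1}} + m,
\end{equation}
hence $m\leq 1$, contradicting the hypothesis $m\geq 2$. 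Therefore at least one cluster must have $n_{a}\leq \check{N}_{\vec{r}_{\ell-1}^{\,\<a\>}}$, which is the claim.

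I expect the argument itself to be essentially immediate once Lemma \ref{lem:innersum N} is in hand — the only subtlety, and the step I would double-check most carefully, is that Lemma \ref{lem:innersum N} is being applied to the correct collection of cluster vectors: namely that each $\vec{r}_{\ell-1}^{\,\<a\>}$ is a genuine $(\ell-1)$-component error type to which the inductive decoupling-order formula applies, that $1\leq n_{a}\leq n$ (so no cluster is empty, which is what makes "$m$ clusters" meaningful and gives the $+m$ term its full weight), and that the subadditivity bound of Lemma \ref{lem:innersum N} indeed holds for the $m$-fold $\oplus$-decomposition and not merely for $m=2$. Assuming those bookkeeping points — all of which are guaranteed by the setup of Eq.~\eqref{Fdecomposition} and the statement of Lemma \ref{lem:innersum N} — there is no real obstacle; the lemma is a short pigeonhole-style consequence of the earlier subadditivity result, and the extra "$+1$" in $n=\check{N}_{\vec{r}_{\ell-1}}+1$ is exactly absorbed because $m\geq 2$ forces the strict gap.
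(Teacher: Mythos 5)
Your proof is correct and follows essentially the same route as the paper's: both rest on the subadditivity bound of Lemma \ref{lem:innersum N} combined with an integer-counting contradiction; the paper merely phrases it in two stages (pick one cluster, subtract, then apply the pigeonhole to the rest), whereas you sum over all clusters at once and extract the contradiction $m\leq 1$ directly. Your one-step version is, if anything, slightly cleaner, and the bookkeeping points you flag (nonempty clusters, $m$-fold validity of Lemma \ref{lem:innersum N}) are indeed guaranteed by the setup of Eq.~\eqref{Fdecomposition}.
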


\begin{proof}[Proof of Lemma \ref{lem:N+1}]
With Lemma \ref{lem:innersum N},  we have 
\begin{equation}
\sum_{a=1}^{m}n_{a}\leq \sum_{a=1}^{m}\check{N}_{\vec{r}_{\ell-1}^{\,\<a\>}}+1.
\label{sum na}
\end{equation}
For $m \geq 2$,  pick an arbitrary $n_{a'}$ among $\{n_{a}\}_{a=1}^{m}$. If $n_{a'}\leq \check{N}_{\vec{r}_{\ell-1}^{\<a'\>}}$, then Lemma \ref{lem:N+1} is true. On the other hand, if  $n_{a'}\geq \check{N}_{\vec{r}_{\ell-1}^{\<a'\>}}+1$,  subtracting $n_{a'}$ from  Eq.~\eqref{sum na} leads to 
\begin{equation}
\sum_{a\neq a'}n_{a} \leq \sum_{a=1}^{m}\check{N}_{\vec{r}_{\ell-1}^{\,\<a\>}}+1-n_{a}\leq \sum_{a\neq a'}\check{N}_{\vec{r}_{\ell-1}^{\,\<a\>}}.
\label{n-na}
\end{equation}
Accordingly, by the same counterargument {mentioned} in Sec.~\ref{sec:inner performance}, there must exist at least one $a''\neq a'$ such that $n_{a''}\leq \check{N}_{\vec{r}_{\ell-1}^{\,<a''>}}$, for otherwise we would obtain $\sum_{a\neq a'}n_{a}\geq \sum_{a\neq a'}\check{N}_{\vec{r}_{\ell-1}^{\,\<a\>}}$ which contradicts Eq.~\eqref{n-na}.
\end{proof}

With Lemma \ref{lem:N+1}, it follows that for the outer-layer decomposition form of $F^{(\check{N}_{\vec{r}_{\ell-1}}+1)}_{(\vec{r}_{\ell-1},1)}$ Eq.~\eqref{Fdecomposition}, in each  $\prod_{a=1}^{m}F^{(n_{a})}_{\vec{r}_{\ell-1}^{\,\<a\>}}$  with $m\geq 2$ and $\sum_{a=1}^{m}n_{a}=n=\check{N}_{\vec{r}_{\ell-1}}+1$, there must exist at least one $F^{(n_{a})}_{\vec{r}_{\ell-1}^{\,\<a\>}}$ with $n_{a} \leq \check{N}_{\vec{r}_{\ell-1}^{\,\<a\>}}$. Due to the induction assumption Eq.~\eqref{F l-1},  all $\prod_{a=1}^{m}F^{(n_{a})}_{\vec{r}_{\ell-1}^{\,\<a\>}}$  with $m\geq 2$ in Eq.~\eqref{Fdecomposition} vanish.
Therefore, only  the $\{m=1, n_{a}=n\}$ term, $F^{(\check{N}_{\vec{r}_{\ell-1}}+1)}_{\vec{r}_{\ell-1}}$,  remains in the outer-layer decomposition form of $F^{(\check{N}_{\vec{r}_{\ell-1}}+1)}_{\vec{r}_{\ell}}$. In particular, for the $(\vec{r}_{\ell-1},1)$-type error, we have
\begin{equation}
F^{(\check{N}_{\vec{r}_{\ell-1}}+1)}_{(\vec{r}_{\ell-1},1)}=F^{(\check{N}_{\vec{r}_{\ell-1}}+1)}_{\vec{r}_{\ell-1}}\,\sum_{j_{\ell}=1}^{N_{\ell}+1}(-1)^{(j_{\ell}-1)}\,s_{j_{\ell}}^{\,n}
\end{equation}
Despite $F^{(\check{N}_{\vec{r}_{\ell-1}}+1)}_{\vec{r}_{\ell-1}}$,  the outer part $\sum_{j_{\ell}=1}^{N_{\ell}+1}(-1)^{(j_{\ell}-1)}\,s_{j_{\ell}}^{\,n}$ turns out to be zero  due to the cancellation of opposite signs between $j_{\ell}$ and $N_{\ell} + 2-j_{\ell}$ when $N_{\ell}$ is odd, but otherwise equal terms $s_{j_{\ell}}=s_{N_{\ell}+2-j_{\ell}}$ (time-symmetric property of UDD) in the sum.  Therefore,  $F^{(\check{N}_{\vec{r}_{\ell-1}}+1)}_{(\vec{r}_{\ell-1},1)}=0$ and Eq.~\eqref{F N+1} is proven. Combining Eq.~\eqref{F N+1} with Eq.~\eqref{Finner1}, which we proved in the previous section, leads to $F^{(n)}_{\vec{r}_{\ell}}=0$ for $n \leq \check{N}_{\vec{r}_{\ell-1}}+r_{\ell}[N_{\ell}]_{2}$, i.e., Eq.~\eqref{Finner2}.

From the proof, one can see that  the anti-symmetry of  the $\ell^{\textrm{th}}$ outer-most UDD layer with odd sequence order can help  the inner $(\ell-1)$ NUDD sequences to suppress the $(\vec{r}_{\ell-1},1)$-type error by one more order.

%%%%%%%%%%%%%%%%%%%%%%%%%%%%%%%%%%%%%%%%%%%%%%%%%%%%%%

\subsection{Fourier expansion after linear change of variables}
\label{sec: Fourier}

We shall complete the proof of Theorem \ref{thm:NUDD} by another approach which analyzes the nested integral  with certain Fourier function types. First apply a piecewise linear transformation, 
\begin{equation}
\theta =\frac{\pi }{N_{\ell}+1}(\frac{\eta-\eta_{j_{\ell}-1}}{s_{j_{\ell}}})+\frac{(j_{\ell}-1)\pi}{N_{\ell}+1} \quad \eta \in \lbrack \eta_{j_{\ell}-1},\eta_{j_{\ell}})
\label{linear}
\end{equation}
with $j_{\ell}\in\{1,\dots,N_{\ell}\}$, to the $n$-fold nested integral $F^{(n)}_{\vec{r}_{\ell}}$. With $d\eta=G_{1}(\theta )\,d\theta$, $F^{(n)}_{\vec{r}_{\ell}}=F_{\oplus_{p=1}^{n}\vec{r}_{\ell}^{\,(p)}}$ Eq.~\eqref{F} is reexppressed as
\begin{equation}
F^{(n)}_{\vec{r}_{\ell}}=\prod_{p=1}^{n}\int_{0}^{\theta^{(p+1)}} G_{1}(\theta^{\,(p)})\,\prod_{i=1}^{\ell}f_{i}(\theta^{\,(p)})^{r_{i}^{\,(p)}}  d\theta^{\,(p)}
\label{Ftheta}
\end{equation}
with 
\begin{eqnarray}
\theta^{(n+1)}&=&\pi,\\
f_{i}(\theta)&=&(-1)^{j_{i}-1} \qquad  \theta \in [\theta_{\,j_{\ell},\dots,j_{i}-1}, \theta_{\,j_{\ell},\dots,j_{i}}), \label{f2}\\
G_{1}(\theta)&=&\frac{N_{\ell}+1}{\pi}s_{j_{\ell}}\qquad \theta \in \lbrack \theta _{j_{\ell}-1},\theta _{j_{\ell}})
\end{eqnarray}
where $\theta_{\,j_{\ell},\dots,j_{i}}$ is the new pulse timing.

The advantage of using the piecewise linear transformation Eq.~\eqref{linear} is that all the modulation functions become periodic functions. Consequently, as explained in Appendix \ref{app:Fourier}, it turns out {that} the Fourier expansion of each {function appearing} in $F_{\oplus_{p=1}^{n}\vec{r}_{\ell}^{\,(p)}}$ belongs  to one of the function types from Definition \ref{def:type},    
\begin{eqnarray}
\Psi(f_{\ell}(\theta ))&=&\zeta^{N_{\ell},0}_{\rm odd}, \label{psifl}\\
\Psi(f_{i<\ell}(\theta ))&=& \begin{dcases}
                c^{N_{\ell},0}_{\rm even}\textrm{\quad  when $N_{i}$ even}  \label{psifi} \\
               \zeta^{N_{\ell},0}_{\rm even}\textrm{\quad  when $N_{i}$ odd}
               \end{dcases}\\
\Psi(G_{1}(\theta))&=& \zeta^{N_{\ell},1}_{\rm even}, \label{psiG}
\end{eqnarray}
where $\Psi$ maps a function to the function type of its Fourier expansion up to unimportant coefficients.

Note that the sets $\{c^{N_{\ell},0}_{\rm even},\zeta^{N_{\ell},0}_{\rm odd}\}$ and $\{c^{N_{\ell},0}_{\rm even},\zeta^{N_{\ell},0}_{\rm odd},\zeta^{N_{\ell},0}_{\rm even},c^{N_{\ell},0}_{\rm odd}\}$ constitute  $Z_{2}$ and $Z_{2}\times Z_{2}$ groups with $c^{N_{\ell},0}_{\rm even}$ as identity, respectively, under the binary operation $\diamond$ (the product-to-sum trigonometric formula). Then one can obtain the function types of the $\vec{r}_{\ell}$-type error modulation functions by employing the group algebra of $Z_{2}$ and $Z_{2}\times Z_{2}$ to
\begin{equation}
\Psi(\,\prod_{i=1}^{\ell}f_{i}(\theta )^{r_{i}}\,)=\diamond_{i=1}^{\ell}\Psi(\,f_{i}(\theta )^{r_{i}}\,).
\end{equation}
where $\Psi(f_{i}(\theta )^{0}=1)=c^{N_{\ell},0}_{\rm even}$ for all $i$. For example, for the $\vec{v}_{\ell-1}^{\,1}$-type error [see Eq.~\eqref{def:errortype}], its {$\diamond_{i=1}^{\ell-1}\Psi(\,f_{i}(\theta )^{r_{i}}\,)$ }would result in $ \zeta^{N_{\ell},0}_{\rm even}$, because the condition $\oplus_{k=1}^{\ell-1}r_{k}[N_{k}]_{2}=1$, that  the components of  $\vec{v}_{\ell-1}^{\,1}$ satisfy, implies that there is a  total odd number of $r_{i}=1$ associated with $[N_{i}]_{2}=1$ such that $\Psi(\,f_{i}(\theta )^{r_{i}}\,)=\zeta^{N_{\ell},0}_{\rm even}$. It turns out that $\Psi(\,\prod_{i=1}^{\ell}f_{i}(\theta )^{r_{i}}\,)$ is determined by only two values, $\oplus_{k=1}^{\ell-1}r_{k}[N_{k}]_{2}$ and $r_{\ell}$, (see Table \ref{table:modulation type}).
\begin{table}[htbp]
\begin{tabular}{|ccc|c|}
\hline
 \multicolumn{3}{|c|}{error type $\vec{r}_{\ell}$} & $\Psi(\,\prod_{i=1}^{\ell}f_{i}(\theta )^{r_{i}}\,)$ \\  \hline 
$(1)$  $(\vec{v}_{\ell-1}^{\,0},0)$:& $\oplus_{k=1}^{\ell-1}r_{k}[N_{k}]_{2}=0$, & $r_{\ell}=0$ & \,$c^{N_{\ell},0}_{\rm even}$ \\  
$(2)$ $(\vec{v}_{\ell-1}^{\,0},1)$:  & $\oplus_{k=1}^{\ell-1}r_{k}[N_{k}]_{2}=0$, & $r_{\ell}=1$ & \,$\zeta^{N_{\ell},0}_{\rm odd}$\\  
$(3)$ $(\vec{v}_{\ell-1}^{\,1},0)$: & $\oplus_{k=1}^{\ell-1}r_{k}[N_{k}]_{2}=1$, & $r_{\ell}=0$ & \, $\zeta^{N_{\ell},0}_{\rm even}$\\   
$(4)$ $(\vec{v}_{\ell-1}^{\,1},1)$:  & $\oplus_{k=1}^{\ell-1}r_{k}[N_{k}]_{2}=1$, & $r_{\ell}=1$ & \,$c^{N_{\ell},0}_{\rm odd}$ \\  \hline 
\end{tabular}
\caption{The first column classifies $2^{\ell}$ $\vec{r}_{\ell}$-type errors into four groups  by two values, $\oplus_{k=1}^{\ell-1}r_{k}[N_{k}]_{2}$ and $r_{\ell}$, where $\vec{v}_{\ell-1}^{\,0}$ and $\vec{v}_{\ell-1}^{\,1}$ are defined in Eq.~\eqref{def:errortype}. The second column shows the function types of the Fourier expansion of the $\vec{r}_{\ell}$-type error modulation functions. }
\label{table:modulation type}
\centering
\end{table}

Note that for the $\ell$-layer NUDD with $[N_{i}]_{2}=0\,\forall\,i\leq \ell-1$, $\oplus_{k=1}^{\ell-1}r_{k}[N_{k}]_{2}$ is zero for all $2^{\ell}$ $\vec{r}_{\ell}$-type errors.  Therefore, $\Psi(\,\prod_{i=1}^{\ell}f_{i}(\theta )^{r_{i}}\,)$ for all error types in this case {is}  either $\zeta^{N_{\ell},0}_{\rm odd}$  or $c^{N_{\ell},0}_{\rm even}$, the first two rows in Table \ref{table:modulation type}. On the other hand,  for the $\ell$-layer NUDD with at least one UDD layer with odd sequence order, there are four functions types as shown in Table \ref{table:modulation type}.

The following expression of $F^{(n)}_{\vec{r}_{\ell}}$ focuses on the function types of the integrands, while the coefficients in their Fourier expansion are unimportant in the proof,  
\begin{equation}
F^{(n)}_{\vec{r}_{\ell}}=\diamond _{p=1}^{n}\int_{0}^{\theta^{(p+1)}} d\theta^{\,(p)}\, \Psi(\,\prod_{i=1}^{\ell}f_{i}(\theta^{\,(p)})^{r_{i}^{\,(p)}}\,) \,\diamond \,\Psi(\,G_{1}(\theta^{\,(p)})
\label{FourierF} 
\end{equation}

As a matter of fact, {due to}
\begin{eqnarray}
&&\Psi(\,\prod_{i=1}^{\ell}f_{i}(\theta )^{r_{i}^{\,(1)}}\,)\diamond \Psi(\,\prod_{i=1}^{\ell}f_{i}(\theta )^{r_{i}^{\,(2)}}\,)\label{homomorphism}\\
&&=\Psi(\,\prod_{i=1}^{\ell}f_{i}(\theta )^{r_{i}^{\,(1)}}*\prod_{i=1}^{\ell}f_{i}(\theta\, )^{r_{i}^{\,(2)}}\,)=\Psi(\,\prod_{i=1}^{\ell}f_{i}(\theta )^{\oplus_{p=1}^{2}r_{i}^{\,(p)}}\,),\notag
\end{eqnarray}
$\Psi$ is a homomorphism from  the $Z_{2}^{\otimes \ell}$ group $\{\prod_{i=1}^{\ell}f_{i}(\theta )^{r_{i}}\}$ to either the $Z_{2}$ group $\{c^{N_{\ell},0}_{\rm even},\zeta^{N_{\ell},0}_{\rm odd}\}$ or the $Z_{2}\times Z_{2}$ group $\{c^{N_{\ell},0}_{\rm even},\zeta^{N_{\ell},0}_{\rm odd},\zeta^{N_{\ell},0}_{\rm even},c^{N_{\ell},0}_{\rm odd}\}$.

%%%%%%%%%%%%%%%%%%%%%%%%%%%%%%%%%%%%%%%%%%%%%%%%%%%%%%%%%%%%%%%%%%%%%%%%%%%%%%%%%%%%%%%%%%%%%%%%%%%%%%%%%%%%%%

\subsection{Integrating out the first $n-1$ integrals of $F^{(n)}_{\vec{r}_{\ell}}$}
\label{sec: evaluate F}

We shall prove the following lemma,
\begin{mylemma}
For all $n\leq \Lambda+1$, the form of $F^{(n)}_{\vec{r}_{\ell}}$ after $n-1$ integrations becomes
\begin{equation}
F^{(n)}_{\vec{r}_{\ell}}=\int_{0}^{\pi} d\theta^{\,(n)}\,\Psi(\,\prod_{i=1}^{\ell}f_{i}(\theta^{\,(n)})^{r_{i}}\,)\,\diamond\,\Psi(\,G_{n}(\theta^{\,(n)})\,)
\label{integrateF}
\end{equation}
up to different unimportant coefficients in the Fourier expansion, where $\Psi(\,G_{n}(\theta^{\,(n)})\,)\equiv \zeta^{N_{\ell},n}_{\rm even}$. With $\Psi(\,\prod_{i=1}^{\ell}f_{i}(\theta^{\,(n)})^{r_{i}}\,)$ given explicitly in Table \ref{table:modulation type}, all the possible function types of the resulting integrands of $F^{(n)}_{\vec{r}_{\ell}}$ $\Psi(\,\prod_{i=1}^{\ell}f_{i}(\theta^{\,(n)})^{r_{i}}\,)\,\diamond\,\Psi(\,G_{n}(\theta^{\,(n)})\,)$ in Eq.  \eqref{integrateF} are listed in Table \ref{table:Ftype}:
\begin{table}[htbp]
\begin{tabular}{|c|c|c|}
\hline
error type $\vec{r}_{\ell}$ & the resulting integrand of $F^{(n)}_{\vec{r}_{\ell}}$ & $\Lambda_{\vec{r}_{\ell}}$\\  \hline 
$(1)$ $(\vec{v}_{\ell-1}^{\,0},0)$ :&  \,$ \zeta^{N_{\ell},n}_{\rm even}$ & $\infty$ \\ 
$(2)$ $(\vec{v}_{\ell-1}^{\,0},1)$: &  \,$c^{N_{\ell},n}_{\rm odd}$  & $N_{\ell}$\\  
$(3)$ $(\vec{v}_{\ell-1}^{\,1},0)$: &  \,$ c^{N_{\ell},n}_{\rm even}$  & $\check{N}_{\vec{r}_{\ell-1}}$\\   
$(4)$ $(\vec{v}_{\ell-1}^{\,1},1)$: &  \,$\zeta^{N_{\ell},n}_{\rm odd}$ & $\infty$\\  \hline 
\end{tabular}
\caption{The second column shows the  function type of the resulting integrand of $F^{(n)}_{\vec{r}_{\ell}}$ after $n-1$ integrations. The third column shows the maximum order $\Lambda_{\vec{r}_{\ell}}$ up to which the function type of the resulting integrand of $F^{(n)}_{\vec{r}_{\ell}}$  does not contain a constant term.}
\label{table:Ftype}
\centering
\end{table}

$\Lambda_{\vec{r}_{\ell}}$ is defined as the maximum order such that the function type of the resulting integrand of $F^{(n)}_{\vec{r}_{\ell}}$  does not contain any constant term, and 
\begin{equation}
\Lambda\equiv\min[\{\Lambda_{\vec{r}_{\ell}}\}].
\label{order of no constant}
\end{equation}
\label{lem:integrateF}
\end{mylemma}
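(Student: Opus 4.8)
\textbf{Proof plan for Lemma \ref{lem:integrateF}.}

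The plan is to prove the lemma by induction on the number of integrations already performed, using the group-theoretic structure of the function types established in Sec.~\ref{sec: Fourier}. For the base case ($n=1$, i.e.\ zero integrations performed), Eq.~\eqref{FourierF} with $\theta^{(2)}=\pi$ already has the claimed form with $\Psi(G_1)=\zeta^{N_\ell,1}_{\rm even}$, which is exactly Eq.~\eqref{psiG}; the four possible integrand types follow by applying the homomorphism property Eq.~\eqref{homomorphism} of $\Psi$ and reading off the $\diamond$-product of each entry of Table~\ref{table:modulation type} with $\zeta^{N_\ell,1}_{\rm even}$. The inductive step is the heart of the argument: assuming that after $n-2$ integrations $F^{(n-1)}_{\vec r_\ell}$ has the form Eq.~\eqref{integrateF} with inner variable $\theta^{(n-1)}$ and $\Psi(G_{n-1})=\zeta^{N_\ell,n-1}_{\rm even}$, I would perform the $(n-1)$st integration, i.e.\ integrate the innermost integrand $\Psi(\prod_i f_i(\theta^{(n-1)})^{r_i^{(n-1)}})\diamond\zeta^{N_\ell,n-1}_{\rm even}$ over $\theta^{(n-1)}$ from $0$ to $\theta^{(n)}$, and show that the antiderivative, viewed as a function of $\theta^{(n)}$, has type $\zeta^{N_\ell,n}_{\rm even}$ — so it can be absorbed into a redefined $G_n$ with $\Psi(G_n)=\zeta^{N_\ell,n}_{\rm even}$, whereupon Eq.~\eqref{integrateF} reappears at level $n$.

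Concretely, the integrand at level $n-1$ is a finite sum of terms $\cos[m(N_\ell+1)\theta^{(n-1)}+q\theta^{(n-1)}]$ or $\sin[m(N_\ell+1)\theta^{(n-1)}+q\theta^{(n-1)}]$ with $|q|\le n-1$ and $m$ of a definite parity dictated by the error type; integrating term-by-term sends $\sin$ to $-\cos/\omega$ and $\cos$ to $\sin/\omega$ (with $\omega = m(N_\ell+1)+q$), plus a constant of integration coming only from the $\sin$ terms. The key points I must verify are: (i) the resulting function of $\theta^{(n)}$ again lies in one of the four types of Definition~\ref{def:type} and in fact the integration of any of the four innermost-integrand types, multiplied (via $\diamond$) back against $\Psi(\prod_i f_i(\theta^{(n)})^{r_i})$, reproduces the Table~\ref{table:Ftype} entry — this is again bookkeeping with the $Z_2$ or $Z_2\times Z_2$ group law; (ii) the range parameter grows by exactly one, $|q|\le n-1 \mapsto |q|\le n$, because the new $\theta^{(n)}$-dependence enters linearly; and (iii) for error types $(1)$ and $(4)$ the antiderivative type $\zeta^{N_\ell,n}_{\rm even}$ resp.\ $\zeta^{N_\ell,n}_{\rm odd}$ contains \emph{no} constant term for any $n$ (hence $\Lambda_{\vec r_\ell}=\infty$), while for types $(2)$ and $(3)$ a constant can appear, and I must pin down the threshold at which it first does — namely $\Lambda_{\vec r_\ell}=N_\ell$ for type $(2)$ and $\Lambda_{\vec r_\ell}=\check N_{\vec r_{\ell-1}}$ for type $(3)$. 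The constant term in an integrated $\zeta$-type function is $\sum_k d_k/\omega_k$ evaluated at the lower limit; it is forced to vanish as long as no $\omega_k$ in the sum is zero, i.e.\ as long as the frequency index $m(N_\ell+1)+q$ cannot equal zero, which for odd $m$ requires $|q|<N_\ell+1$ and for even $m$ requires excluding $m=0$, i.e.\ $q\ne 0$. Tracking how $q$ accumulates with each integration up to $\pm n$, and combining with the parity constraint on $m$ from the error type, yields precisely the stated thresholds; for type $(3)$ the bound $\check N_{\vec r_{\ell-1}}$ must instead be imported from the inductive NUDD hypothesis Eq.~\eqref{F l-1} applied to the $(\ell-1)$-layer coefficient $F^{(n)}_{\vec r_{\ell-1}}$ that appears when $r_\ell=0$ decouples the $\ell$th layer.

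The main obstacle I anticipate is part (iii): carefully controlling \emph{which} frequency indices actually occur in the Fourier expansion after each integration, so as to certify that $m(N_\ell+1)+q\ne 0$ for all terms when $n\le\Lambda_{\vec r_\ell}$, and conversely that a zero-frequency term genuinely appears at $n=\Lambda_{\vec r_\ell}+1$. The odd-$m$ case (types $(2)$,$(4)$) is cleanest since $m(N_\ell+1)$ is then an odd multiple of $N_\ell+1$ and $|q|\le n$ keeps the sum away from zero until $n$ reaches $N_\ell+1$ — but types $(2)$ and $(4)$ differ only in whether the overall function is $\cos$-type (has a possible constant) or $\sin$-type (never does), so the sine/cosine parity has to be threaded through the induction in lockstep with the frequency bookkeeping. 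I would organize the four error types into the $2\times 2$ table exactly as in Table~\ref{table:Ftype} and carry all four cases through the induction simultaneously, since the $\diamond$-product mixes them; the detailed Fourier coefficient manipulations I would relegate to the appendix referenced as Appendix~\ref{app:Fourier}.
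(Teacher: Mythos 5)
Your plan follows essentially the same route as the paper's proof: an induction on $n$ that pushes the four function types of Definition~\ref{def:type} through the combined operation ``integrate the innermost level, then $\diamond$-multiply by $\Psi(G_{1})$ and the level-$n$ modulation function,'' using the $Z_2$/$Z_2\times Z_2$ structure, with the absence of constant terms in the integrand (i.e.\ $n\leq \Lambda+1$) as the condition that keeps the result inside the four types, and with $\Lambda_{(\vec{v}_{\ell-1}^{\,0},1)}=N_{\ell}$ obtained exactly as you describe from $|q|\leq n$ versus odd multiples of $N_{\ell}+1$. The one correction you will need concerns type $(3)$: setting $r_{\ell}=0$ does \emph{not} reduce the $\ell$-layer coefficient to the $(\ell-1)$-layer coefficient $F^{(n)}_{\vec{r}_{\ell-1}}$ --- the nested integral still runs over the $\ell$-layer timing grid and decomposes as the sum of products in Eq.~\eqref{Fdecomposition} --- so the vanishing you must import is $F^{(n)}_{(\vec{v}_{\ell-1}^{\,1},0)}=0$ for $n\leq \check{N}_{\vec{r}_{\ell-1}}$, i.e.\ Eq.~\eqref{Finner2}, which is derived from the inductive hypothesis Eq.~\eqref{F l-1} via the outer-layer decomposition (Method 1) and, combined with $\int_{0}^{\pi}\cos[m\theta]\,d\theta=0$ for integer $m\neq 0$, forces the constant coefficient of $c^{N_{\ell},n}_{\rm even}$ to vanish up to that order, exactly as in the paper.
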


The proof of Lemma \ref{lem:integrateF} is done by induction. It is trivial that Lemma \ref{lem:integrateF} is true for the first order $F^{(1)}_{\vec{r}_{\ell}}$  based on the form of Eq.~\eqref{FourierF}. Suppose Lemma \ref{lem:integrateF} holds for all the $(n-1)^{\textrm{th}}$ order NUDD coefficients where $n-1\leq \Lambda$, i.e.,
\begin{align}
\label{F n-1}
&F^{(n-1)}_{\vec{r}_{\ell}}=\\
&\quad
\int_{0}^{\pi} d\theta^{\,(n-1)}\,\Psi(\,\prod_{i=1}^{\ell}f_{i}(\theta^{\,(n-1)})^{r_{i}}\,)\,\diamond\,\Psi(\,G_{n-1}(\theta^{\,(n-1)})\,), \notag
\end{align}
where $\Psi(\,\prod_{i=1}^{\ell}f_{i}(\theta^{\,(n-1)})^{r_{i}}\,)\,\diamond\,\Psi(\,G_{n-1}(\theta^{\,(n-1)})\,)$ belongs to one of $ \{c^{N_{\ell},n-1}_{\rm even},\zeta^{N_{\ell},n-1}_{\rm odd},\zeta^{N_{\ell},n-1}_{\rm even},c^{N_{\ell},n-1}_{\rm odd}\}$.

To proceed to the next order, first compare the forms of  Eq.~\eqref{FourierF} between the $n^{\textrm{th}}$ order and the $(n-1)^{\textrm{th}}$ order NUDD coefficients. One can see that the $n^{\textrm{th}}$ order NUDD coefficients can actually be viewed as  one integral nested with one order lower [$(n-1)^{\textrm{th}}$-order]  NUDD coefficients, i.e., 
\begin{align}
\label{F n to n-1}
&F^{(n)}_{\vec{r}_{\ell}}=\\
&\ \int_{0}^{\pi} d\theta^{(n)} \Psi(\,\prod_{i=1}^{\ell}f_{i}(\theta^{\,(n)})^{r_{i}^{\,(n)}}\,) \diamond \Psi(\,G_{1}(\theta^{\,(n)}))\diamond 
F^{(n-1),\theta^{(n)}}_{\vec{r}^{\,'}_{\ell}}\notag
\end{align}
where $\vec{r}_{\ell}=\vec{r}_{\ell}^{\,(n)}\oplus\vec{r}^{\,'}_{\ell}$ implies that $\vec{r}^{\,'}_{\ell}$ could be a different vector from  $\vec{r}_{\ell}$, and the extra superscript $\theta^{(n)}$ of $F^{(n-1)}_{\vec{r}^{\,'}_{\ell}}$ indicates that the upper integration limit, $\pi$, of the last integral of $F^{(n-1)}_{\vec{r}^{\,'}_{\ell}}$ is replaced by $\theta^{(n)}$. Therefore,  we can just substitute the results of $F^{(n-1)}_{\vec{r'}_{\ell}}$ [Eq.~\eqref{F n-1}] directly into Eq.~\eqref{F n to n-1} to evaluate $F^{(n)}_{\vec{r}_{\ell}}$.

From the definition of $\Lambda$,  the  resulting integrands of $F^{(n-1)}_{\vec{r}^{\,'}_{\ell}}$ for all possible $\vec{r}^{\,'}_{\ell}$ are guaranteed to  contain no  constant term for $n-1\leq \Lambda$. Accordingly, it is straightforward to check that  the operation 
\begin{equation}
\Psi(\,G_{1}(\theta^{\,(n)}))\,\diamond \,\int_{0}^{\theta^{(n)}} d\theta^{\,(n-1)}
\label{int operation}
\end{equation}
 maps $\{c^{N_{\ell},n-1}_{\rm even},\zeta^{N_{\ell},n-1}_{\rm odd},\zeta^{N_{\ell},n-1}_{\rm even},c^{N_{\ell},n-1}_{\rm odd}\}$ to its corresponding $\{c^{N_{\ell},n}_{\rm even},\zeta^{N_{\ell},n}_{\rm odd},\zeta^{N_{\ell},n}_{\rm even},c^{N_{\ell},n}_{\rm odd}\}$ regardless of the unimportant change of the coefficients in the linear combination. Therefore, up to different coefficients in the linear combination, the operation Eq.~\eqref{int operation}  gives rise to the following mapping,
\begin{eqnarray}
&&\Psi(\,\prod_{i=1}^{\ell}f_{i}(\theta^{\,(n-1)})^{r'_{i}}\,)\,\diamond\,\Psi(\,G_{n-1}(\theta^{\,(n-1)})\,) \xrightarrow \notag\\
&&\Psi(\,\prod_{i=1}^{\ell}f_{i}(\theta^{\,(n)})^{r'_{i}}\,)\,\diamond\,\Psi(\,G_{n}(\theta^{\,(n)})\,). 
\label{operation: int}
\end{eqnarray}

By substituting Eq.~\eqref{operation: int} into Eq \eqref{F n to n-1}, the resulting integrand of $F^{(n)}_{\vec{r}_{\ell}}$ becomes
\begin{equation}
\Psi(\,\prod_{i=1}^{\ell}f_{i}(\theta^{\,(n)})^{r_{i}^{\,(n)}}\,) \diamond\Psi(\,\prod_{i=1}^{\ell}f_{i}(\theta^{\,(n)})^{r'_{i}}\,)\,\diamond\,\Psi(\,G_{n}(\theta^{\,(n)})\,).
\end{equation}
Applying the homomorphism property Eq.~\eqref{homomorphism} to the above equation, we obtain Eq.~\eqref{integrateF} for the  $n^{\rm th}$ order where $n \leq \Lambda +1$.

For the order $n=\Lambda +1$, the resulting integrands of  $F^{(n)}_{ \vec{r}_{\ell}}$ for some of the errors  begin to contain a constant term. Then  the operation $\Psi(\,G_{1}(\theta^{\,(n+1)})\,\diamond \,\int_{0}^{\theta^{(n+1)}} d\theta^{\,(n)}$ will map them to different functions other than a purely cosine series or a purely sine series. Hence, it follows that Eq.~\eqref{integrateF} or the {second} column of Table \ref{table:Ftype} are no longer true  for  the $\Lambda +2$ or higher order   NUDD coefficients.

Now {let us} prove the claimed values of $\Lambda_{\vec{r}_{\ell}}$  shown in the last column of Table \ref{table:Ftype}. By Definition \ref{def:type}, $c^{N_{\ell},n}_{\rm odd}$, which is the function type of the resulting integrand of the $(\vec{v}_{\ell-1}^{\,0},1)$-type error NUDD coefficients in Table \ref{table:Ftype},  does not contain a cosine function with zero argument (a constant 1 term)  when $n \leq N_{\ell}$. Therefore, $\Lambda_{(\vec{v}_{\ell-1}^{\,0},1)}=N_{\ell}$. Neither $\zeta^{N_{\ell},n}_{\rm even}$ nor $\zeta^{N_{\ell},n}_{\rm odd}$ ever have constant terms, which indicates that  $\Lambda_{(\vec{v}_{\ell-1}^{\,0},0)}=\Lambda_{(\vec{v}_{\ell-1}^{\,1},1)}=\infty$.   Only for the $(\vec{v}_{\ell-1}^{\,1},0)$-type error,  $c^{N_{\ell},n}_{\rm even}$ is in general (by definition) allowed to have a cosine function with zero argument, a constant term. However, in Sec.~\ref{sec:inner performance}, we already showed that $F^{(n)}_{\vec{r}_{\ell}}=0$ for all $n \leq \check{N}_{\vec{r}_{\ell-1}}$ [Eq.~\eqref{Finner2}].  Hence, with  Eq.~\eqref{Finner2}, 
\begin{equation}
F^{(n)}_{ \vec{r}_{\ell}\in (\vec{v}_{\ell-1}^{\,1},0)}=0=\int_{0}^{\pi} c^{N_{\ell},n}_{\rm even}\,d\theta \quad \forall n \leq \check{N}_{\vec{r}_{\ell-1}\in \vec{v}_{\ell-1}^{\,1}}
\label{F10}
\end{equation}
suggests that $c^{N_{\ell},n}_{\rm even}$  in Eq.~\eqref{F10} has no constant term for the  first $\check{N}_{\vec{r}_{\ell-1}\in \vec{v}_{\ell-1}^{\,1}}$ orders. Accordingly, 
\begin{equation}
\Lambda_{\vec{r}_{\ell}\in (\vec{v}_{\ell-1}^{\,1},0)}=\check{N}_{\vec{r}_{\ell-1}\in \vec{v}_{\ell-1}^{\,1} }.
\label{Nv1}
\end{equation}

%%%%%%%%%%%%%%%%%%%%%%%%%%%%%%%%%%%%%%%%%%%%%%%%%%%%%%%%%%%%%%%%%%%%%%%%%%%%%%%%%%%%%%%%%%%%%%%%%%%%%%%%%%%%%%%%%%%%%

\subsection{The vanishing of the $(\vec{v}_{\ell-1}^{\,0},1)$-type error}
\label{sec:outermost error}

According to Lemma \ref{lem:integrateF}, for the $(\vec{v}_{\ell-1}^{\,0},1)$-type error,  $F^{(n)}_{(\vec{v}_{\ell-1}^{\,0},1)}=\int_{0}^{\pi}\, c^{N_{\ell},n}_{\rm odd}\, d\theta$, where $c^{N_{\ell},n}_{\rm odd}$ does not contain any constant term, for all $n\leq\min[\Lambda+1,N_{\ell}]$, where  $n\leq\Lambda+1$ ensures that the function type of the resulting integrand of $F^{(n)}_{(\vec{v}_{\ell-1}^{\,0},1)}$ is $c^{N_{\ell},n}_{\rm odd}$, and $n\leq \Lambda_{(\vec{v}_{\ell-1}^{\,0},1)}=N_{\ell}$  ensures that there is no constant term in it.  Then as shown in Eq.~\eqref{outer error}, it follows that $F^{(n)}_{(\vec{v}_{\ell-1}^{\,0},1)}=0$  for
\begin{equation}
n\leq\min[\Lambda+1,N_{\ell}].
\label{v01even}
\end{equation}
where the value of $\Lambda$ expressed explicitly in terms of the sequence orders shall be determined as follows.

For the  NUDD with $[N_{i}]_{2}=0$ for all $i<\ell$, $\oplus_{k=1}^{\ell-1}r_{k}[N_{k}]_{2}=0$ is always true for all errors, so there are only two error types, $(\vec{v}_{\ell-1}^{\,0},0)$- and  $(\vec{v}_{\ell-1}^{\,0},1)$-type errors in this case.  Therefore, from the definition of $\Lambda$ [Eq.~\eqref{order of no constant}], we have
\begin{equation}
\Lambda=\min[ \Lambda_{(\vec{v}_{\ell-1}^{\,0},0)}, \Lambda_{(\vec{v}_{\ell-1}^{\,0},1)}]=\min[\infty, N_{\ell}]=N_{\ell}.
\label{validorder: even}
\end{equation}
which  leads to $\min[N_{\ell}+1,N_{\ell}]=N_{\ell}$ in Eq.~\eqref{v01even}. Therefore, we have {proven} that  
\begin{equation}
F^{(n)}_{(\vec{v}_{\ell-1}^{\,0},1)}=0 \qquad \forall\,n\leq N_{\ell}
\label{Feven}
\end{equation} 
for the  NUDD with all even inner sequence orders.

For NUDD with at least one odd sequence order in the first $\ell-1$ UDD layers,  from the definition of $\Lambda$ Eq.~\eqref{order of no constant},
\begin{eqnarray}
&&\Lambda=\min[\Lambda_{(\vec{v}_{\ell-1}^{\,0},0)},\Lambda_{(\vec{v}_{\ell-1}^{\,0},1)}, \Lambda_{\vec{r}_{\ell}\in (\vec{v}_{\ell-1}^{\,1},0)}, \Lambda_{(\vec{v}_{\ell-1}^{\,1},1)}]\notag \\
&&=\min[ \infty, N_{\ell},\{\check{N}_{\vec{r}_{\ell-1}\in \vec{v}_{\ell-1}^{\,1} }\},\infty]\notag \\
&&=\min[N_{\ell}, \{\check{N}_{\vec{r}_{\ell-1}\in \vec{v}_{\ell-1}^{\,1} }\} ] .
\label{validorder}
\end{eqnarray}
Due to 
\begin{mylemma}
$\min[\{\check{N}_{\vec{r}_{\ell-1}\in \vec{v}_{\ell-1}^{\,1} }\}]=N^{k'<\ell}_{o_{\min}} $,
\label{lem: min Nv1}
\end{mylemma}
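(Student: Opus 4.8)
The plan is to prove Lemma \ref{lem: min Nv1} by establishing matching upper and lower bounds on $\min[\{\check{N}_{\vec{r}_{\ell-1}\in \vec{v}_{\ell-1}^{\,1}}\}]$, using only the explicit decoupling-order formula Eq.~\eqref{Nrl} for $(\ell-1)$-layer NUDD (available here by the inductive hypothesis), the structure of the suppression orders $\widetilde{N}_i$ in Eq.~\eqref{Ni}, and the fact $\check{N}_{\vec{e}_i}=\widetilde{N}_i$ from Eq.~\eqref{Nei}. Throughout, write $\mu\equiv N^{k'<\ell}_{o_{\min}}$ for the smallest odd sequence order among layers $1,\dots,\ell-1$, and let $o^{*}$ denote the smallest layer index with $N_{o^{*}}$ odd and $N_{o^{*}}=\mu$.

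For the upper bound I would exhibit a single error type in $\vec{v}_{\ell-1}^{\,1}$ whose decoupling order equals $\mu$. The natural candidate is $\vec{e}_{o^{*}}$, the error that anticommutes only with the control pulses of layer $o^{*}$: since $[N_{o^{*}}]_{2}=1$ we get $p_\oplus(1,\ell-1)=1$, so $\vec{e}_{o^{*}}\in\vec{v}_{\ell-1}^{\,1}$, and by Eq.~\eqref{Nei} we have $\check{N}_{\vec{e}_{o^{*}}}=\widetilde{N}_{o^{*}}$. It then suffices to show $\widetilde{N}_{o^{*}}=\mu$. If $o^{*}=o_{1}$ this is immediate from Eq.~\eqref{Ni}. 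If $o^{*}>o_{1}$, then minimality of $o^{*}$ forces every odd-order layer before $o^{*}$ to have order strictly larger than $\mu$, hence (both being odd) at least $\mu+2$, so $N^{k'<o^{*}}_{o_{\min}}+1\ge\mu+3>\mu=N_{o^{*}}$ and therefore $\widetilde{N}_{o^{*}}=\min[N^{k'<o^{*}}_{o_{\min}}+1,\,N_{o^{*}}]=\mu$. This gives $\min[\{\check{N}_{\vec{r}_{\ell-1}\in\vec{v}_{\ell-1}^{\,1}}\}]\le\mu$.

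For the lower bound I would take an arbitrary $\vec{r}_{\ell-1}\in\vec{v}_{\ell-1}^{\,1}$ and let $k_{1}$ be the \emph{first} layer index with $r_{k_{1}}=1$ and $N_{k_{1}}$ odd; such a $k_{1}$ exists because $p_\oplus(1,\ell-1)=1$ forces an odd, hence nonzero, number of such indices. By construction $p_\oplus(1,k_{1}-1)=0$, so $r_{k_{1}}(p_\oplus(1,k_{1}-1)\oplus 1)=1$, and retaining only the $i=k_{1}$ term inside the maximum in Eq.~\eqref{Nrl} yields $\check{N}_{\vec{r}_{\ell-1}}\ge\widetilde{N}_{k_{1}}+p_+(k_{1}+1,\ell-1)\ge\widetilde{N}_{k_{1}}$. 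It then remains to check $\widetilde{N}_{k_{1}}\ge\mu$: since $k_{1}$ is an odd-order layer we have $k_{1}\ge o_{1}$, so either $k_{1}=o_{1}$, giving $\widetilde{N}_{k_{1}}=N_{k_{1}}\ge\mu$ by Eq.~\eqref{Ni}, or $k_{1}>o_{1}$, in which case both arguments of $\min[N^{k'<k_{1}}_{o_{\min}}+1,\,N_{k_{1}}]$ are at least $\mu$ (any odd order among layers $<\ell$ is $\ge\mu$), so again $\widetilde{N}_{k_{1}}\ge\mu$. Combining with the upper bound yields $\min[\{\check{N}_{\vec{r}_{\ell-1}\in\vec{v}_{\ell-1}^{\,1}}\}]=\mu=N^{k'<\ell}_{o_{\min}}$.

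The one step requiring genuine care — and the place I expect the bookkeeping to be most delicate in a careful write-up — is keeping straight the distinction between $o_{1}$ (the \emph{first} odd-order layer) and $o^{*}$ (the first layer attaining the \emph{minimal} odd order), together with the elementary but essential observation that two distinct odd integers differ by at least $2$; this is exactly what prevents the ``$+1$'' appearing in Eq.~\eqref{Ni} from pushing $\widetilde{N}_{o^{*}}$ or $\widetilde{N}_{k_{1}}$ below $\mu$. Everything else is a direct substitution into Eqs.~\eqref{Nrl}, \eqref{Ni} and \eqref{Nei}, and does not require any new integral estimate.
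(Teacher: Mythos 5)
Your proof is correct and follows essentially the same route as the paper's: both arguments reduce the minimum over $\vec{v}_{\ell-1}^{\,1}$ to the single-odd-layer errors $\vec{e}_{o_i}$ by picking out the inner-most anticommuting odd-order layer (your $k_{1}$) to get the lower bound $\check{N}_{\vec{r}_{\ell-1}}\geq \widetilde{N}_{k_1}\geq N^{k'<\ell}_{o_{\min}}$, and then identify $\min_i \widetilde{N}_{o_i}$ with $\min_i N_{o_i}$. Your write-up is slightly more explicit than the paper's in justifying why the ``$+1$'' in Eq.~\eqref{Ni} cannot drop $\widetilde{N}_{o^{*}}$ below the minimal odd order, but this is a refinement of detail, not a different method.
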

\noindent which we prove in Appendix \ref{app:min Nv1},  Eq.~\eqref{validorder} reads
\begin{equation}
\Lambda=\min[N_{\ell}, N^{k'<\ell}_{o_{\min}}] ,
\end{equation}
which gives rise to  
\begin{eqnarray}
\min[\Lambda+1,N_{\ell}]
&&=\min[\min[N_{\ell}, N^{k'<\ell}_{o_{\min}}]+1,N_{\ell}]\notag \\
&&=\min[ N^{k'<\ell}_{o_{\min}} +1,N_{\ell}]
\end{eqnarray}
in Eq.~\eqref{v01even}.
Therefore, we have {proven} that  
\begin{equation}
F^{(n)}_{(\vec{v}_{\ell-1}^{\,0},1)}=0 \qquad \forall\,n\leq \min[ N^{k'<\ell}_{o_{\min}} +1,N_{\ell}]
\label{Fodd}
\end{equation} 
for NUDD with  at least one odd inner sequence order.

Combining  Eqs.~\eqref{Feven} and \eqref{Fodd} with the fact that the vanishing of the other error types cannot be deduced from this approach due to their function types (shown in Table \ref{table:Ftype}), we obtain 
$F^{(n)}_{\vec{r}_{\ell}}=0$ for all $n \leq r_{\ell}[( p_\oplus(1,\ell-1)\oplus 1]\widetilde{N}_{\ell}$ [Eq.~\eqref{Fouter1}],  where $\widetilde{N}_{\ell}=N_{\ell}$ for NUDD with all even inner sequence orders, and $\widetilde{N}_{\ell}=\min[ N^{k'<\ell}_{o_{\min}}+1,N_{\ell}]$ for  NUDD with at least one odd inner sequence order, which is equivalent to the definition given in Eq.~\eqref{Ni}.  This completes the proof of the NUDD Theorem.

%%%%%%%%%%%%%%%%%%%%%%%%%%%%%%%%%%%%%%%%%%%%%%%%%%%%%%%%%%%%%%%%%%%%%%%%%%%%%%%%%%%%
\section{Numerical Results: 4-layer NUDD}
\label{sec:numerical result}

In this section, numerical simulations are employed to examine the performance of a 4-layer NUDD scheme for two contrasting MOOS. Given
the UDD sequence order set $\{N_j\}^{4}_{j=1}$, the proof given above ensures a minimum suppression of all non-trivial error types to at least $N=\min[N_{1},N_{2},N_{3},N_{4}]$, resulting in $U(T)\sim\mathcal{O}(T^{N+1})$. We reconcile our analytical results with numerical simulations to confirm the predicted scaling of $U(T)$ and show that this scaling is indeed MOOS-independent. Furthermore, we analyze the scaling of all $2^{4}$ individual error types to convey the validity of the decoupling order formula Eq.~\eqref{Nrl}.

%%%%%%%%%%%%%%%%%%%%%%%%%%%%%%%%%%%%%%%%%%%%%%%%%%%%%%%%%%%%%%%%%%5

\subsection{Model}

We consider a $2$-qubit system  coupled to a generic quantum bath to analyze the performance of the $4$-layer NUDD. The total Hamiltonian, which includes a pure-bath term and system-bath interactions, is given by
\begin{equation}
H=\sum_{\lambda_{1}=0,x,y,z}\sum_{\lambda_{2}=0,x,y,z}J_{\lambda_{1}\lambda_{2}}\sigma_{\lambda_{1}}\otimes \sigma_{\lambda_{2}}\otimes B_{\lambda_{1}\lambda_{2}}
\label{eq:2H}
\end{equation}
where $\sigma_{\lambda_{j}}$ are the standard Pauli matrices with $\sigma_{0}\equiv I$ for the $j=1,2$ system qubits,  $B_{\lambda_{1}\lambda_{2}}$ are arbitrary bath-operators with $\| B_{\lambda_{1}\lambda_{2}}\|=1$, and $J_{\lambda_{1}\lambda_{2}}$ are bounded coupling coefficients between the qubits and the bath. 

Modeling the environment as a spin bath consisting of four spin-1/2 particles with randomized couplings between them, the operator $B_{\lambda_{1}\lambda_{2}}$ is given by
\begin{equation}
B_{\lambda_{1}\lambda_{2}}=\sum_{\lambda_{3},\lambda_{4},\lambda_{5},\lambda_{6}}c_{\lambda_{1}\lambda_{2}}^{\lambda_{3}\lambda_{4}\lambda_{5}\lambda_{6}}\sigma_{\lambda_{3}}\otimes \sigma_{\lambda_{4}}\otimes \sigma_{\lambda_{5}}\otimes \sigma_{\lambda_{6}},
\label{eq:bathOps}
\end{equation}
where the interactions are non-restrictive, i.e. $1$- to $4$-local interactions are permitted. The index $\lambda_{j}=0,x,y,z$ where $j=3,4,5,6$ stands for the bath qubit and coefficients $c_{\lambda_{1}\lambda_{2}}^{\lambda_{3}\lambda_{4}\lambda_{5}\lambda_{6}}\in[0,1]$ are chosen randomly from a uniform distribution.

We focus on the case of uniform coupling, $J_{\lambda_{1}\lambda_{2}}=J$ $\forall \lambda_j$ except $\lambda_1=\lambda_2=0$, where we discriminate between the strength of the system-bath interactions and the pure bath dynamics $J_{00}$. For all simulations, $J=1$MHz and $J_{00}=1$kHz are considered so that $J_{00}\ll J$. In this particular regime, DD has been shown to be the most beneficial since the environment dynamics are effectively static with respect to system-environment interactions \cite{WestFongLidar:10,QuirozLidar:11}.

%%%%%%%%%%%%%%%%%%%%%%%%%%%%%%%%%%%%%%%%%%%%%%%%%%%%%%%%%%%%%%%%%%%%%%%%%%%%%%%%%%%

\subsection{Overall decoupling order}

The overall performance of NUDD is quantified with respect to the state-independent distance measure
\begin{equation}
D=\frac{1}{\sqrt{d_S d_B}}\,\,\min_{\Phi}\|U(T)-I\otimes\Phi\|_{F},
\label{eq:Dist}
\end{equation}
where $d_S$ is the system Hilbert space dimension, $d_B$ is the environment dimension,  and $\Phi$ is a bath operator \cite{Grace:10}. The minimum order scaling of NUDD is expected to be $D\sim\mathcal{O}[(JT)^{N+1}]$ for a total sequence duration $T$. Therefore, the overall numerical decoupling order of the NUDD scheme, $n_{\min}$, is obtained by
\begin{equation}
n_{\min}=(\log_{10} D )-1
\end{equation}
In the subsequent simulations, the scaling of $D$ is extracted by varying the minimum pulse delay $\tau \equiv Ts_{j_{\ell}=1}s_{j_{\ell-1}=1}\dots s_{j_{2}=1}s_{j_{1}=1}$ instead of the total time $T$. We utilize $\tau$ since this quantity is usually the most relevant experimental time constraint.

%%%%%%%%%%%%%%%%%%%%%%%%%%%%%%%%%%%%%%%%%%%%%%%%%%%%%%%%%%%%%%%%%%%

\subsection{$\vec{r}_{4}=(r_{1},r_{2},r_{3},r_{4})$-type error}
\label{sec:errorType}

Given a MOOS $\{\Omega_{j}\}^{4}_{j=1}$, the general 2-qubit Hamiltonian {\eqref{eq:2H}} can be partitioned into $2^{4}$ error types $H_{\vec{r}_{4}}=H_{(r_{1},r_{2},r_{3},r_{4})}$ [Eq.~\eqref{Hr}] via the procedure discussed in Appendix \ref{app: step H}. We also note that it is possible to generate $\{H_{(r_{1},r_{2},r_{3},r_{4})}\}$ using $H_{(1,0,0,0)}$, $H_{(0,1,0,0)}$, $H_{(0,0,1,0)}$, and $H_{(0,0,0,1)}$ as the generators of the $Z_{2}^{\otimes 4}$ group, where the complete set of error types can be obtained by all possible products of the error generators: $H_{\vec{r}^{\,(1)}_{4}}H_{\vec{r}^{\,(2)}_{4}}=H_{\vec{r}^{\,(1)}_{4}\oplus\vec{r}^{\,(2)}_{4}}$. We explicitly display the $16$ error types in Table \ref{table:Htype}, where each error type is generated by a product of the corresponding outer-most column and row elements.

\begin{table}[h]
\begin{tabular}{c|c|c|c|c}
	*						& $H_{(0,0,0,0)}$ & $H_{(0,0,1,0)}$ &  $H_{(0,0,0,1)}$ & $H_{(0,0,1,1)}$\\
							\hline
$H_{(0,0,0,0)}$	&         $H_{(0,0,0,0)}$		&			$H_{(0,0,1,0)}$			&			$H_{(0,0,0,1)}$			&			$H_{(0,0,1,1)}$		 \\
$H_{(1,0,0,0)}$ &		  $H_{(1,0,0,0)}$	    &			$H_{(1,0,1,0)}$			&			$H_{(1,0,0,1)}$			&			$H_{(1,0,1,1)}$				 \\
$H_{(0,1,0,0)}$	&		  $H_{(0,1,0,0)}$		&			$H_{(0,1,1,0)}$			&			$H_{(0,1,0,1)}$			&			$H_{(0,1,1,1)}$			   \\
$H_{(1,1,0,0)}$	&		  $H_{(1,1,0,0)}$		&			$H_{(1,1,1,0)}$			&			$H_{(1,1,0,1)}$			&			$H_{(1,1,1,1)}$				 
\end{tabular}
\caption{16 error types $\vec{r}_{4}$ for the $4$-layer NUDD scheme.}
\label{table:Htype}
\end{table}

In order to illustrate the procedure by which the error types are generated, we consider the two MOOS sets utilized to convey the unbiased nature of the minimum scaling for NUDD. The first,
\begin{equation}
\{I \otimes \sigma_{z},I\otimes \sigma_{x},\sigma_{z}\otimes I,\sigma_{x}\otimes I\}
\label{eq:MOOS1}
\end{equation} 
is composed of single-qubit operators with the corresponding error generators 
\begin{eqnarray}
H_{(1,0,0,0)}=I\otimes \sigma_{x}, \quad H_{(0,1,0,0)}=I \otimes \sigma_{z},\notag\\
H_{(0,0,1,0)}=\sigma_{x}\otimes I, \quad H_{(0,0,0,1)}=\sigma_{z}\otimes I ,
\label{generator1}
\end{eqnarray} 
up to arbitrary bath operators which are omitted in the above equations. Substituting Eq.~\eqref{generator1} into Table \ref{table:Htype}, we obtain the explicit forms for all error types, as shown as shown in Table \ref{table:Htype1}.
\begin{table}[h]
\begin{tabular}{c|c|c|c|c}
	*						& $I\otimes I$ & $\sigma_{x}\otimes I$ &  $\sigma_{z}\otimes I$ & $\sigma_{y}\otimes I$\\
							\hline
$I\otimes I$	&              $I\otimes I$	       	    &		$\sigma_{x}\otimes I$		      &			$\sigma_{z}\otimes I$		&	$\sigma_{y}\otimes I$	 \\
$I\otimes \sigma_{x}$ &		   $I\otimes \sigma_{x}$    &		$\sigma_{x}\otimes \sigma_{x}$	 &	$\sigma_{z}\otimes\sigma_{x}$		&	$\sigma_{y}\otimes \sigma_{x}$			 \\
$I \otimes \sigma_{z}$	&	   $I \otimes \sigma_{z}$	&		$\sigma_{x}\otimes \sigma_{z}$	 &	$\sigma_{z}\otimes\sigma_{z}$		&	$\sigma_{y}\otimes \sigma_{z}$			   \\
$I \otimes \sigma_{y}$ &	   $I \otimes \sigma_{y}$ 	&		$\sigma_{x}\otimes \sigma_{y}$	&	$\sigma_{z}\otimes \sigma_{y}$		&	$\sigma_{y}\otimes \sigma_{y}$				 
\end{tabular}
\caption{The $16$ error types $\vec{r}_{4}$ for the $4$-layer NUDD scheme with the MOOS $\{I \otimes \sigma_{z},I\otimes \sigma_{x},\sigma_{z}\otimes I,\sigma_{x}\otimes I\}$.}
\label{table:Htype1}
\end{table}

The second MOOS is chosen similarly as 
\begin{equation}
\{\sigma_{z} \otimes \sigma_{z}, I \otimes \sigma_{x}, \sigma_{z}\otimes I, \sigma_{x}\otimes I\},
\label{eq:MOOS2}
\end{equation}
{where} we replace the $\sigma_{z}\otimes I$ operator with a two-qubit $\sigma_z\otimes\sigma_z$ operator. The error generators for this MOOS, up to arbitrary bath operators, are
\begin{eqnarray}
H_{(1,0,0,0)}=I\otimes \sigma_{x}, \quad H_{(0,1,0,0)}=I \otimes \sigma_{z},\notag\\
H_{(0,0,1,0)}=\sigma_{x}\otimes \sigma_{x}, \quad H_{(0,0,0,1)}=\sigma_{z}\otimes I.
\label{generator2}
\end{eqnarray}
and the error types are given in Table \ref{table:Htype2}.
\begin{table}[h]
\begin{tabular}{c|c|c|c|c}
	*						& $I\otimes I$               &      $\sigma_{x}\otimes \sigma_{x}$      &  $\sigma_{z}\otimes I$             & $\sigma_{y}\otimes \sigma_{x}$\\
							\hline  
$I\otimes I$	&              $I\otimes I$	       	    &		$\sigma_{x}\otimes \sigma_{x}$	   &	$\sigma_{z}\otimes I$		&	$\sigma_{y}\otimes \sigma_{x}$	 \\
$I\otimes \sigma_{x}$ &		   $I\otimes \sigma_{x}$    &		$\sigma_{x}\otimes I$	       &	$\sigma_{z}\otimes\sigma_{x}$		&	$\sigma_{y}\otimes I$			 \\
$I \otimes \sigma_{z}$	&	   $I \otimes \sigma_{z}$	&		$\sigma_{x}\otimes \sigma_{y}$	 &	$\sigma_{z}\otimes\sigma_{z}$		&	$\sigma_{y}\otimes \sigma_{y}$			   \\
$I \otimes \sigma_{y}$ &	   $I \otimes \sigma_{y}$ 	&		$\sigma_{x}\otimes \sigma_{z}$	&	$\sigma_{z}\otimes \sigma_{y}$		&	$\sigma_{y}\otimes \sigma_{z}$				 
\end{tabular}.
\caption{The $16$ error types $\vec{r}_{4}$ for the $4$-layer NUDD scheme with the MOOS $\{\sigma_{z} \otimes \sigma_{z}, I \otimes \sigma_{x}, \sigma_{z}\otimes I, \sigma_{x}\otimes I\}$}
\label{table:Htype2}
\end{table}

%%%%%%%%%%%%%%%%%%%%%%%%%%%%%%%%%%%%%%%%%%%%%%%%%%%%%%%%%%%%%%%%%%%%%%%%%%%%%%%%%%%%%%%%%%%%%%%

\subsection{Decoupling order of each error type}

The decoupling order for each error type is extracted using an equivalent procedure to Ref.~\cite{QuirozLidar:11}, where
\begin{equation}
E_{\vec{r}_{4}}=\|\textrm{Tr}_{S}\left(U(T)H_{\vec{r}_{4}}\right)\|_F,  
\label{E_mu}
\end{equation}
represents the effective error for error type $H_{\vec{r}_{4}}$ present at the end of the DD evolution $U(T)$. Here, $\|A\|_F$ is the Frobenius norm of $A$, 
\begin{equation}
\|A\|_F = \mathrm{Tr}\sqrt{A^\dagger A},
\end{equation}
the sum of singular values of $A$. (The choice of norm is somewhat arbitrary; we could have used any other unitarily invariant norm as well, e.g., the trace norm.). 

The expected scaling for each effective error type can be shown to follow
\begin{equation}
E_{\vec{r}_{4}}\sim \mathcal{O}[(JT)^{\check{N}_{\vec{r}_{4}}+1}]\sim \mathcal{O}[(J\tau)^{\check{N}_{\vec{r}_{4}}+1}] ,
\end{equation} 
%for $J>\beta$, 
where $\check{N}_{\vec{r}_{4}}$ is given by the Eq.~\eqref{Nrl}. 
Therefore, the numerical decoupling order of $\vec{r}_{4}$-type error,  $n_{\vec{r}_{4}}$, is obtained by
\begin{equation}
n_{\vec{r}_{4}}=(\log_{10} E_{\vec{r}_{4}})-1 
\end{equation}

%%%%%%%%%%%%%%%%%%%%%%%%%%%%%%%%%%%%%%%%%%%%%%%%%%%%%%%%%%%%%%%%%5

\subsection{Comparison of analytical predictions with numerical results}

\begin{figure*}[t]
\centering
\subfigure[]{\includegraphics[width=\columnwidth]{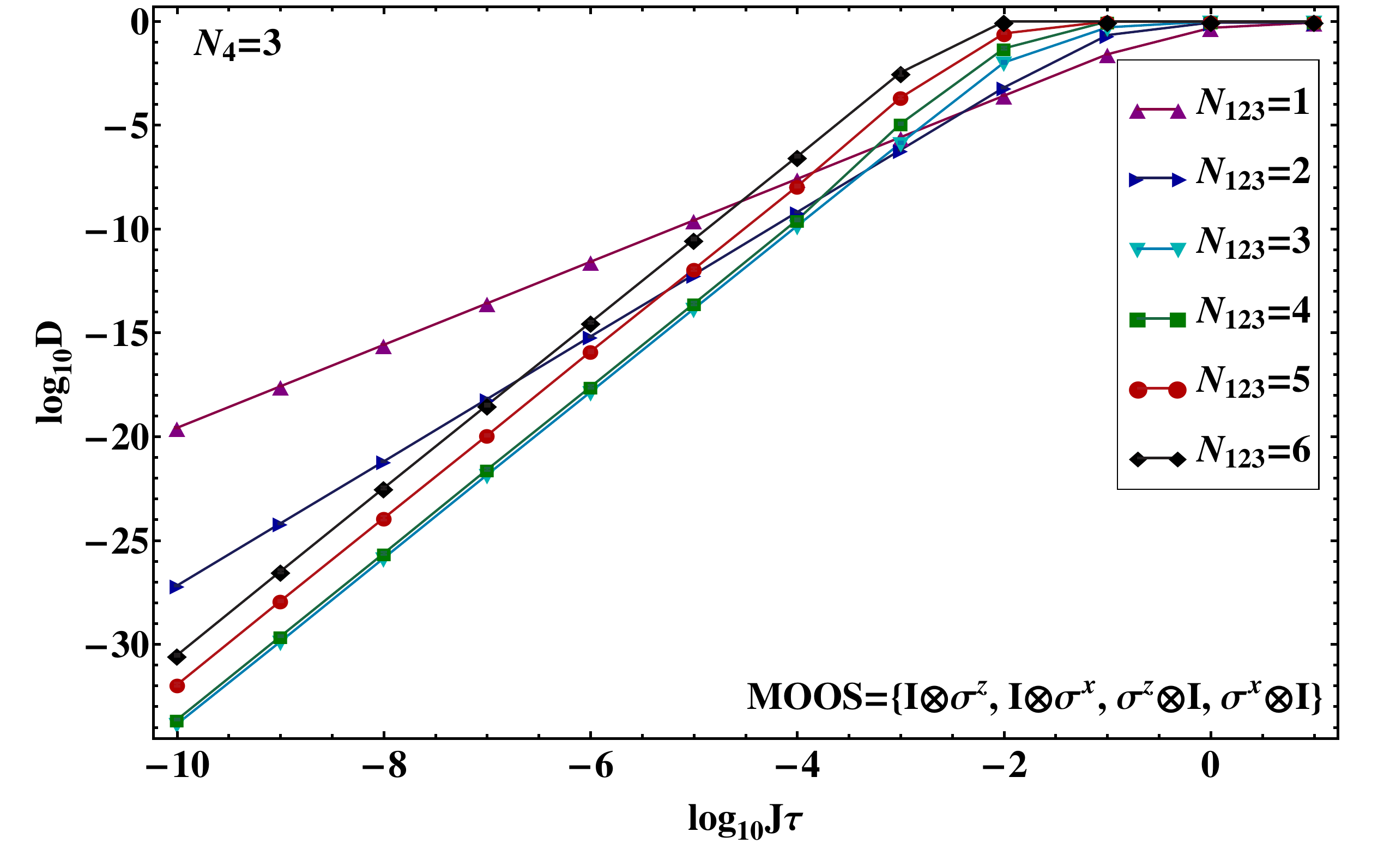}}
\subfigure[]{\includegraphics[width=\columnwidth]{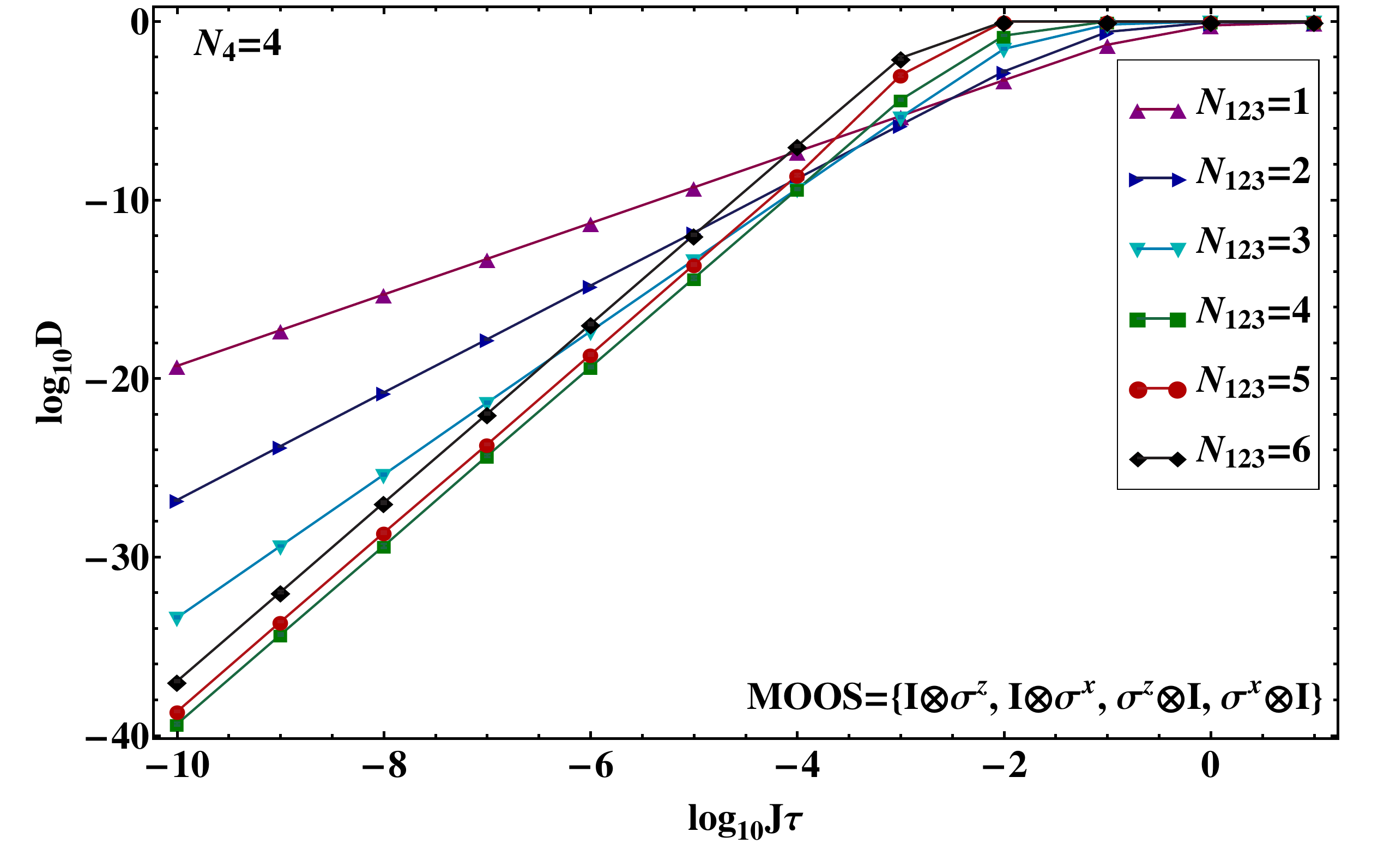}}\\
\subfigure[]{\includegraphics[width=\columnwidth]{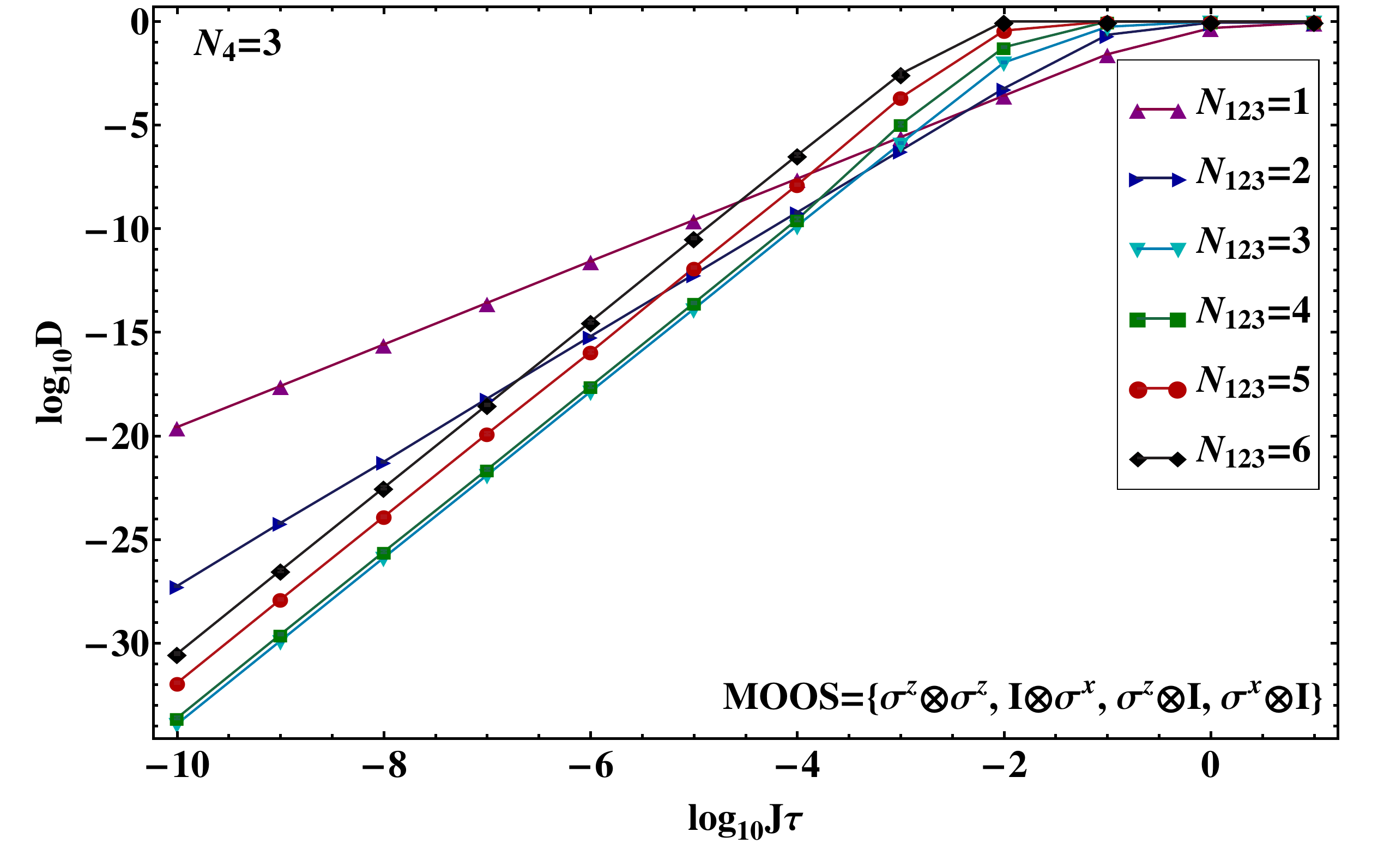}}
\subfigure[]{\includegraphics[width=\columnwidth]{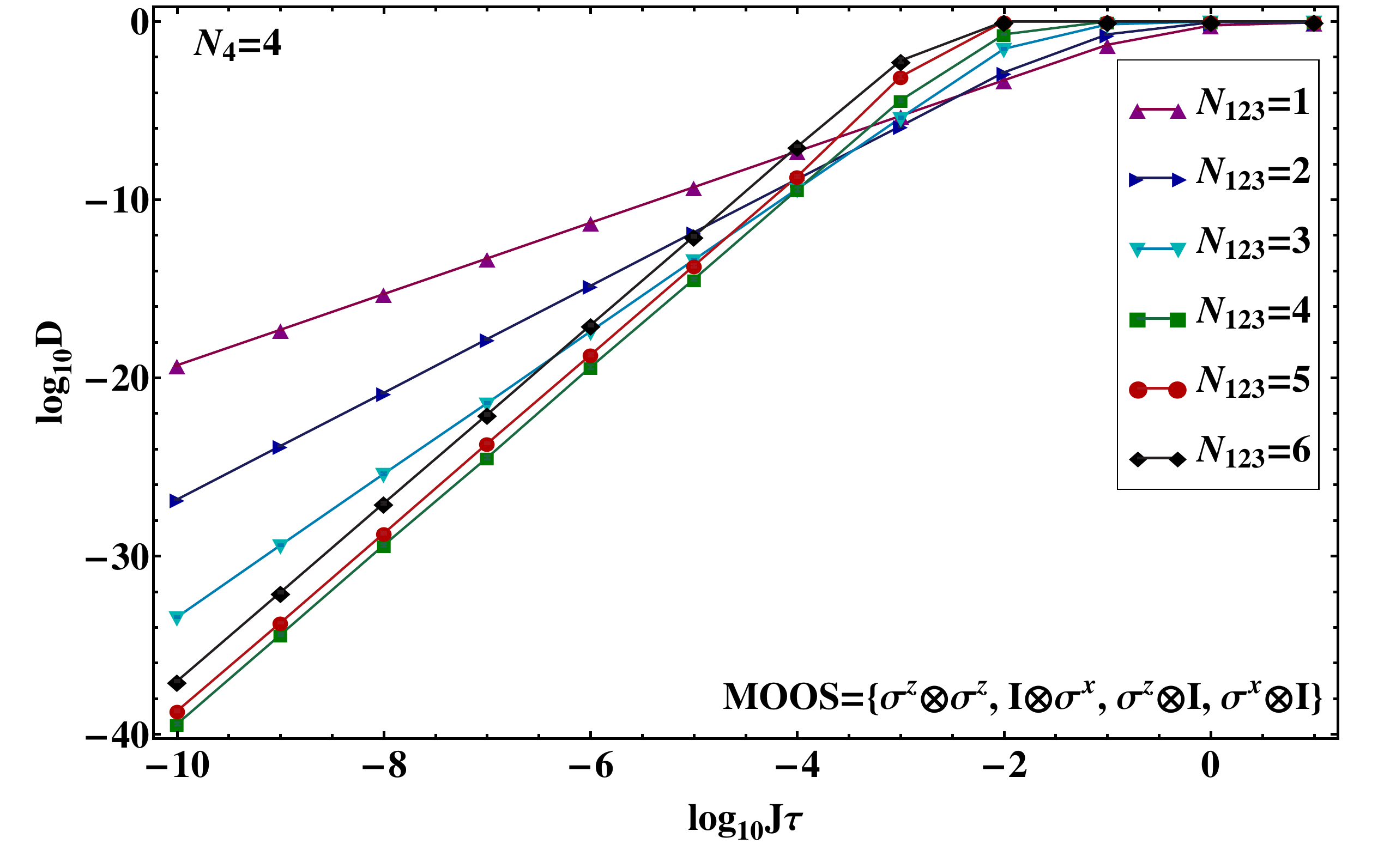}}
\caption{Performance of NUDD as a function of the minimum pulse interval $\tau$ for the two-qubit system specified by Eqs.~\eqref{eq:2H} and \eqref{eq:bathOps}, averaged over $15$ random realizations of $B_{\lambda_1\lambda_2}$. Error bars are included, but are very small. A four-layer construction is utilized to address all system-bath interactions, where the MOOS sets are chosen as Eq.~\eqref{eq:MOOS1} in (a) and (b), and Eq.~\eqref{eq:MOOS2} in (c) and (d). {The orders of the first three layers are identical: $N_{123}=N_j$, $j=2,3,4$ and are varied as $N_{123}=1,2,\ldots,6$.}  In particular, the outer layer sequence {order is} chosen as {$N_{4}=3$} in (a) and (c), and {$N_{4}=4$} in (b) and (d). Linear interpolation from $\log_{10} J\tau=-10$ to approximately $\log_{10} J\tau=-2$ confirms the MOOS-independent scaling $D\sim\mathcal{O}[(J\tau)^{n_{\min}+1}]$, $n_{\min}=\min \{N_j\}^{4}_{j=1}$, for all simulations.}
\label{fig:performance}
\end{figure*}

%%%%%%%%%%%%%%%%%%%%%%%%%%%%%%%%%%%%%%%%%%%%%%%%%%%%%%%%%%%%%%%%%
In order to convey the MOOS-independent scaling of NUDD performance, we consider the case where the three {inner-most} layers contain the same number of pulses, i.e. {$N_{123}=N_j$, $j=1,2,3$,} for both MOOS sets given above; see Eqs.~(\ref{eq:MOOS1}) and (\ref{eq:MOOS2}). We choose the NUDD layering such that the right-most control pulse operator corresponds to the {outer-most} UDD sequence with sequence order {$N_4$} and the left-most operator generates the {inner-most} UDD sequence with sequence order {$N_1$}. Fixing { the sequence order for the outer-most layer, $N_{4}$, and varying $N_{123}$}, a qualitative and quantitative similarity can be expected between the QDD analysis given in Ref.\cite{QuirozLidar:11} and the 4-layer NUDD construction considered here. Hence, the expected scaling of the distance measure described by Eq.~(\ref{eq:Dist}) is $D\sim\mathcal{O}[(J\tau)^{n_{\min}+1}]$, where {$n_{\min}=\min[N_4,N_{123}]$.}

 %%%%%%%%%%%%%%%%%%%%%%%%%%%%%%%%%%%%%%%%%%%%%%%%%%%%%%%%%%%%%%%%%%

\begin{table}[t]
\begin{tabular}{|c|c|c|c|}
\hline
$\check{N}_{(0,0,0,0)}$            &  $\check{N}_{(0,0,1,0)}$     &     $\check{N}_{(0,0,0,1)}$       &      $\check{N}_{(0,0,1,1)}$      \\
$0$            &  $6$     &     $8$       &      $8$      \\\hline
$\check{N}_{(1,0,0,0)}$            &  $\check{N}_{(1,0,1,0)}$     &     $\check{N}_{(1,0,0,1)}$       &      $\check{N}_{(1,0,1,1)}$       	 \\
$2$            &  $6$     &     $8$       &      $8$       	 \\\hline
$\check{N}_{(0,1,0,0)}$            &  $\check{N}_{(0,1,1,0)} $     &     $\check{N}_{(0,1,0,1)}$       &      $\check{N}_{(0,1,1,1)}$       	 \\
$4$            &  $6$     &     $8$       &      $8$       	 \\\hline
$\check{N}_{(1,1,0,0)}$            &  $\check{N}_{(1,1,1,0)}$     &     $\check{N}_{(1,1,0,1)}$       &      $\check{N}_{(1,1,1,1)}$       	 \\
$4$            &  $6$     &     $8$       &      $8$       	 \\\hline
\end{tabular}
\caption{For the case with $N_1=2$, $N_2=4$, $N_3=6$, $N_4=8$, analytical and numerical decoupling orders for all error types are in complete agreement. The analytical overall decoupling order $\check{N}_{\min}=\min[2,4,6,8]=2$ also agrees with the actual overall decoupling order. }
\label{ex:even}
\end{table}
%%%%%%%%%%%%%%%%%%%%%%%%%%%%%%%%%%%%%%%%%%%%%%%%%%%%%%%%%%%%%%%%%%%%%%%%%%%%%%%%%%%%%%%%%%
In Fig. \ref{fig:performance}, the performance is shown as a function of the minimum pulse interval $\tau$ for both MOOS sets in the case of {$N_{4}=3$} in (a) and (c), and {$N_{4}=4$} in (b) and (d). The inner sequence orders are $N_1=1,2,\ldots,6$ for all simulations, which are averaged over $15$ random realizations of $B_{\lambda_1\lambda_2}$ using $120$ digits of precision.

Comparing Figs. \ref{fig:performance}(a) and (c), we find that variations in the MOOS do not change the qualitative features of NUDD performance. Furthermore, we can confirm by linear interpolation that the slope of each curve indeed satisfies {$n_{\min}=\min[N_{123},N_4]$}. Similar results are obtained for {$N_{4}=4$} in Figs.~\ref{fig:performance}(b) and (d) as well. By this comparison, we show that NUDD is a MOOS-independent construction and confirm this result for two specific even and odd parity values of {$N_{4}$}. We expect the scaling to remain consistent for general $4$-layer NUDD, where the scaling becomes $n_{\min}=\min[N_1,N_2,N_3,N_4]$, and for more general multi-layer NUDD as well.

The scaling for each error type is characterized as a function of the minimum pulse interval using Eq.~\eqref{E_mu} specifically for the MOOS given by Eq.~\eqref{eq:MOOS1}. For each error type, we display the decoupling order in tabular form, following the format described in Sec.~\ref{sec:errorType}. The numerically obtained decoupling order is denoted in black, analytical prediction from Eq.~\eqref{Nrl} in blue and the "naive" decoupling order in red.  If a blue (red) number is absent, it means our analytical decoupling order (naive decoupling order ) is exactly the same as the actual decoupling order.   We refer to
\begin{equation}
\check{N}_{\vec{r}_{\ell}}=\max[\{\,r_{i}N_{i}\}_{i=1}^{\ell}].
\label{naiveformula:even}
\end{equation}
as the ``naive" decoupling order since it assumes that each nested layer acts independently.  A previous analysis for QDD has shown that inter-layer interference may alter the expected UDD efficiency even when the layers do not provide decoupling for the same error type \cite{QuirozLidar:11}. We expect a similar characteristic for NUDD as well and seek to identify the conditions when the naive decoupling order is incorrect.

%%%%%%%%%%%%%%%%%%%%%%%%%%%%%%%%%%%%%%%%%%%%%%%%%%%%%%%%%%%%%%%%%%%%%%%%%%%%%%%%%%%%%%%%%%%%%5

Tables \ref{ex:even}, \ref{ex: inner even outer odd}, and \ref{ex:odd>} contain typical examples where our analytically predicted decoupling {orders} for {all} error {types} exactly match those obtained numerically. Note that in these particular NUDD schemes, the suppression {order} of each UDD layer is equal to its sequence order, i.e., $\widetilde{N}_{i}=N_{i}$. In Table \ref{ex:even}, all sequence orders are even parity, i.e. $[N_{i}]_{2}=0$ $\forall$ $i$. Specifically, we consider $N_1=2$, $N_2=4$, $N_3=6$, and $N_4=8$, although we expect equivalent results for arbitrary even parity $N_j$. According to Eq.~\eqref{Nrl}, the decoupling order formula for such cases coincides with the naive decoupling order formula Eq.~\eqref{naiveformula:even}; interference between layers is not observed. The analytical overall decoupling order $N=\min[2,4,6,8]=2$ is found to agree with the actual overall decoupling order as well.

%%%%%%%%%%%%%%%%%%%%%%%%%%%%%%%%%%%%%%%%%%%%%%%%%%%%%%%%%%%%%%%%%%%%%%%%%%%%%%%
\begin{table}[t]
\begin{tabular}{|c|c|c|c|}
\hline
$\check{N}_{(0,0,0,0)}$            &  $\check{N}_{(0,0,1,0)}$     &     $\check{N}_{(0,0,0,1)}$       &      $\check{N}_{(0,0,1,1)}$      \\
$0$            &  $6$     &     $3$       &      ${\color{red}6},7$        	 \\\hline
$\check{N}_{(1,0,0,0)}$            &  $\check{N}_{(1,0,1,0)}$     &     $\check{N}_{(1,0,0,1)}$       &      $\check{N}_{(1,0,1,1)}$       	 \\
$2$            &  $6$     &     $3$       &      ${\color{red}6},7$       	 \\\hline
$\check{N}_{(0,1,0,0)}$            &  $\check{N}_{(0,1,1,0)} $     &     $\check{N}_{(0,1,0,1)}$       &      $\check{N}_{(0,1,1,1)}$       	 \\
$4$            &  $6$     &     ${\color{red}4},5$       &      ${\color{red}6},7$       	 \\\hline
$\check{N}_{(1,1,0,0)}$            &  $\check{N}_{(1,1,1,0)}$     &     $\check{N}_{(1,1,0,1)}$       &      $\check{N}_{(1,1,1,1)}$       	 \\
$4$            &  $6$     &     ${\color{red}4},5$       &      ${\color{red}6},7$       	 \\\hline
\end{tabular}
\caption{For the case with $N_1=2$, $N_2=4$, $N_3=6$, $N_4=3$, analytical and numerical decoupling order (marked as black) are in complete agreement for all error types. The analytical overall decoupling order $N=\min[2,4,6,3]=2$ also agrees with the actual overall decoupling order. The difference between the naive (marked as red) and the actual decoupling orders comes from outer-odd-UDD suppression effect.}
\label{ex: inner even outer odd}
\end{table}

%%%%%%%%%%%%%%%%%%%%%%%%%%%%%%%%%%%%%%%%%%%%%%%%%%%%%%%%%%%%%%%%%%%%%%%%%%%%%%%
\begin{table}[t]
\begin{tabular}{|c|c|c|c|}
\hline
$\check{N}_{(0,0,0,0)}$            &  $\check{N}_{(0,0,1,0)}$     &     $\check{N}_{(0,0,0,1)}$       &      $\check{N}_{(0,0,1,1)}$      \\
$0$  &  $3$  & $1$ &  ${\color{red}3},4$  \\\hline
$\check{N}_{(1,0,0,0)}$            &  $\check{N}_{(1,0,1,0)}$     &     $\check{N}_{(1,0,0,1)}$       &      $\check{N}_{(1,0,1,1)}$       	 \\
$7$ & ${\color{red}7},8$ & ${\color{red}7},8$  &  ${\color{red}7},9$ \\\hline
$\check{N}_{(0,1,0,0)}$            &  $\check{N}_{(0,1,1,0)} $     &     $\check{N}_{(0,1,0,1)}$       &      $\check{N}_{(0,1,1,1)}$       	 \\
$5$ & ${\color{red}5},6$ & ${\color{red}5},6$  &  ${\color{red}5},7$  \\\hline
$\check{N}_{(1,1,0,0)}$            &  $\check{N}_{(1,1,1,0)}$     &     $\check{N}_{(1,1,0,1)}$       &      $\check{N}_{(1,1,1,1)}$       	 \\
${\color{red}7},8$ & ${\color{red}7},9$ & ${\color{red}7},9$ &  ${\color{red}7},10$   \\\hline
\end{tabular}
\caption{For the case with $N_1=7$, $N_2=5$, $N_3=3$, $N_4=1$ analytical and numerical decoupling order for all error types are in complete agreement. The analytical overall decoupling order $\check{N}_{\min}=\min[7,5,3,1]=1$ also agrees with the actual overall decoupling order. The difference between the naive (marked as red) and the actual decoupling orders (marked as black) comes from outer-odd-UDD suppression effect. }
\label{ex:odd>}
\end{table}

The second example we consider, Table \ref{ex: inner even outer odd}, is one of the cases with all even sequence orders except the outer-most UDD layer. Analytical and numerical decoupling orders (black) are in complete agreement for all error types. There is one order difference between the naive decoupling order (red) and the actual decoupling order (black) in the last column and the last two entries in the third column. From the decoupling order formula  Eq.~\eqref{Nreven}, the difference comes from the fact that the outermost ($4^{\rm th}$) UDD layer with odd order $3$ boosts the decoupling order for error types that are also addressed by one of the inner layers. Clearly, the deviation from the standard UDD scaling can be attributed to the asymmetry of the outer layer.

The third example, shown in Table \ref{ex:odd>}, is for all odd sequence orders such that $N_1>N_2>N_3>N_4$ with $[N_{i}]_{2}=1$ for all $i$. The analytical decoupling order formula is given in Eq.~\eqref{formula:odd2}. The second term in Eq.~\eqref{formula:odd2} is called the outer-odd-UDD suppression effect, which implies a direct relationship between the number of odd parity layers which address the error, and the additional orders of error suppression achievable for a particular error type. Comparing the analytical/numerical results with the naive decoupling order, we find that the increase in decoupling order is attributed to this  outer-odd-UDD suppression effect.

%%%%%%%%%%%%%%%%%%%%%%%%%%%%%%%%%%%%%%%%%%%%%%%%%%%%%%%%%%%%%%%%%%
%%%%%%%%%%%%%%%%%%%%%%%%%%%%%%%%%%%%%%%%%%%%%%%%%%%%%%%%%%%%%%%%%%%%%%%%%%%%%%%%%%%%%%%%%%%%%%%%%%%%%%%%%%%%

Discrepancies between numerical results and analytical predictions occur when the inner UDD layers contain odd parity sequence orders that are smaller than the outer layers beyond the odd order. In Tables \ref{ex:2416} and \ref{ex:1357}, we consider two cases where deviations from the analytical predictions occur. It is important to note that although the numerical and analytical decoupling orders differ, we do not predict error suppression beyond what is {achieved}. The formula given by Eq.~(\ref{Nrl}) can be thought of as a lower bound for the actual decoupling order. Essentially, we are predicting the minimum order of error suppression for each error type.

Table \ref{ex:2416} is an example where the third layer is the only odd parity layer; the sequence orders are chosen specifically as $N_1=2$, $N_2=4$, $N_3=1$, $N_4=6$. Our analytical prediction from Eq.~\eqref{Nrl} is 
\begin{equation}
\check{N}_{\vec{r}_{4}}=\max[\{\,r_{i}N_{i}+r_{3}\}_{i=1}^{2}, r_{3}N_{3},r_{4}(r_{3}[N_{3}]_2\oplus 1)\widetilde{N}_{4}],
\label{N2416}
\end{equation}
where the suppression ability of the fourth UDD layer, $\widetilde{N}_{4}=\min[N_{4},N_{3}+1]$, is hindered by the third UDD layer if $N_{3} < N_{4}-1$. For the particular sequence orders we consider here, the analytical prediction is $\widetilde{N}_{4}=\min[6,1+1]=2$. Comparing the analytical results to numerical simulations, it appears that the analytical decoupling orders (marked in blue) are lower than the numerically calculated values for certain error types; see column 3 in Table \ref{ex:2416}. However, despite the discrepancy Eq.~(\ref{N2416}) captures the inhibiting characteristics of the odd parity inner sequence order on the decoupling order of the outer-most layer. As analytically predicted, the outer-odd-UDD suppression effects are observed on the error types, which anticommute with the third and its inner UDD layers, in the second column  by comparing their naive and the numerical decoupling orders. For the  error types  with $r_{3}=r_{4}=1$ in the fourth column of Table \ref{ex:2416}, due to $r_{3}[N_{3}]_2\oplus 1=0$, our analytical formula Eq.~(\ref{N2416}) shows that the fourth layer of UDD is totally ineffective due the odd sequence order of the third UDD layer. Indeed, the numerical decoupling order for the error types in the fourth column of Table \ref{ex:2416} is smaller than $6$, the sequence order of the fourth UDD layer.

%%%%%%%%%%%%%%%%%%%%%%%%%%%%%%%%%%%%%%%%%%%%%%%%%%%%%%%%%%%%%%%%%%%%%%%%%%%%%%%
\begin{table}[t]
\begin{tabular}{|c|c|c|c|}
\hline
$\check{N}_{(0,0,0,0)}$            &  $\check{N}_{(0,0,1,0)}$     &     $\check{N}_{(0,0,0,1)}$       &      $\check{N}_{(0,0,1,1)}$      \\
$0$ & $1$ & ${\color{red}6},3,{\color{blue}2}$  &  ${\color{red}6},1$    \\\hline
$\check{N}_{(1,0,0,0)}$            &  $\check{N}_{(1,0,1,0)}$     &     $\check{N}_{(1,0,0,1)}$       &      $\check{N}_{(1,0,1,1)}$       	 \\
$2$ & ${\color{red}2},3$ & ${\color{red}6},5,{\color{blue}2}$  &  ${\color{red}6},3$   \\\hline
$\check{N}_{(0,1,0,0)}$            &  $\check{N}_{(0,1,1,0)} $     &     $\check{N}_{(0,1,0,1)}$       &      $\check{N}_{(0,1,1,1)}$       	 \\
$4$ & ${\color{red}4},5$  &  $6,{\color{blue}4}$  &  ${\color{red}6},5$  \\\hline
$\check{N}_{(1,1,0,0)}$            &  $\check{N}_{(1,1,1,0)}$     &     $\check{N}_{(1,1,0,1)}$       &      $\check{N}_{(1,1,1,1)}$       	 \\
$4$ &  ${\color{red}4},5$  &  $6,{\color{blue}4}$ &  ${\color{red}6},5$  \\\hline
\end{tabular}
\caption{For the case with $N_1=2$, $N_2=4$, $N_3=1$, $N_4=6$, the analytical overall decoupling order $\check{N}_{\min}=\min[2,4,1,6]=1$ agrees with the actual overall decoupling order. However, the numerical decoupling order (marked in black) and our analytical predictions (marked in blue) have different values for error types in the third column. Naive decoupling order is marked in red.}
\label{ex:2416}
\end{table}
%%%%%%%%%%%%%%%%%%%%%%%%%%%%%%%%%%%%%%%%%%%%%%%%%%%%%%%%%%%%%%%%%%%%%%%%%%%%%%%
\begin{table}[t]
\begin{tabular}{|c|c|c|c|}
\hline
$\check{N}_{(0,0,0,0)}$            &  $\check{N}_{(0,0,1,0)}$     &     $\check{N}_{(0,0,0,1)}$       &      $\check{N}_{(0,0,1,1)}$      \\
$0$ & ${\color{red}5},3,{\color{blue}2}$  & ${\color{red}7},3,{\color{blue}2}$   &  ${\color{red}7},4,{\color{blue}3}$        	 \\\hline
$\check{N}_{(1,0,0,0)}$            &  $\check{N}_{(1,0,1,0)}$     &     $\check{N}_{(1,0,0,1)}$       &      $\check{N}_{(1,0,1,1)}$       	 \\
$1$ &  ${\color{red}5},2$     &     ${\color{red}7},2$       &      ${\color{red}7},5,{\color{blue}3}$       	 \\\hline
$\check{N}_{(0,1,0,0)}$            &  $\check{N}_{(0,1,1,0)} $     &     $\check{N}_{(0,1,0,1)}$       &      $\check{N}_{(0,1,1,1)}$       	 \\
$3,{\color{blue}2}$ &  ${\color{red}5},4,{\color{blue}3}$   &  ${\color{red}7},4,{\color{blue}3}$  &     $7,{\color{blue}4}$       	 \\\hline
$\check{N}_{(1,1,0,0)}$            &  $\check{N}_{(1,1,1,0)}$     &     $\check{N}_{(1,1,0,1)}$       &      $\check{N}_{(1,1,1,1)}$       	 \\
${\color{red}3},2$  &  $5,{\color{blue}3}$     &     ${\color{red}7},5,{\color{blue}3}$       &      ${\color{red}7},6,{\color{blue}4}$       	 \\\hline
\end{tabular}
\caption{For the case with  $N_1=1$, $N_2=3$, $N_3=5$, $N_4=7$, the analytical overall decoupling order $\check{N}_{\min}=\min[1,3,5,7]=1$ agrees with the actual overall decoupling order. However, the numerical decoupling order (marked in black) and our analytical predictions (marked in blue) have different values for some error types. Naive decoupling order is marked in red.}
\label{ex:1357}
\end{table}
%%%%%%%%%%%%%%%%%%%%%%%%%%%%%%%%%%%%%%%%%%%%%%%%%%%%%%%%%%%%%%%%%%%%%%%%%%%%%%%

As a final example, we consider the sequence orders $N_1=1$, $N_2=3$, $N_3=5$, and $N_4=7$. Since $\widetilde{N}_{j}=\min[N_{1}+1, N_{j}]=2< N_{j}$ for $j=2,3,4$, this indicates that the suppression abilities of the outer UDD layers are diminished due to the $N_1=1$ sequence order of the inner-most UDD layer. Indeed, the numerical decoupling orders are lower than the naive UDD scaling for a substantial fraction of error types, indicating interference between layers. However, we find that the analytically predicted decoupling orders still differ from the numerically obtained values. It appears that the actual decoupling order is not only determined by the sequence orders, but also by the error types in a complicated fashion not fully captured by the analytical decoupling formula, which, however, continues to provide a reliable lower bound on the decoupling order for each error type, as discussed earlier.

%%%%%%%%%%%%%%%%%%%%%%%%%%%%%%%%%%%%%%%%%%%%%%%%%%%%%%%%%%%%%%%%%%%%%%%%%%%%%%%

\section{Conclusions}
\label{sec:conclusion}

The NUDD scheme, which nests multiple layers of UDD sequences, is the most efficient scheme currently known for general multi-qubit decoherence suppression, assuming ideal pulses and a bath with a sharp cutoff or bounded operator norm.
In this work, we have given a rigorous analysis, along with a compact formulation, of the universality and performance of general NUDD sequence with a MOOS as {the} control pulse set.   We proved that the overall suppression order of NUDD with general control pulses is the minimum of  all the sequence orders of the individual UDD sequences comprising the NUDD scheme. Moreover, we also  obtained lower bounds on the decoupling order of each type of error, given a sequence order set. Our decoupling order formula Eq.~\eqref{Nrl} shows that only for NUDD sequences with all even sequence orders, all UDD layers work independently, i.e., the suppression ability of each UDD layer is unaffected by the presence of the other layers. For all other NUDD schemes, with at least one odd sequence order, the interference phenomenon between UDD layers appears and is summarized as follows,
\begin{enumerate}
\item For a given UDD layer, say the $i^{\rm th}$ UDD layer with sequence order $N_{i}$, if there are inner layers with odd sequence order and the lowest odd order is smaller than  $N_{i}-1$, then the suppression ability of the $i^{\rm th}$ UDD layer is hindered by this inner UDD layer with lowest odd order, i.e., it  cannot achieve $N_{i}^{\textrm{th}}$-order decoupling. [$\widetilde{N}_{i}=N^{k'<i}_{o_{\min}}+1<N_{i}$ in Eq.~\eqref{Nrl}.]
\item For the $i^{\textrm{th}}$ UDD layer to be effective against a given error type, this error  needs to not only anticommute with the  control pulses  of the $i^{\textrm{th}}$ UDD layer ($r_{i}=1$) but also anticommute with an even number of UDD layers with odd  sequence orders before the $i^{\textrm{th}}$ UDD layer ($p_\oplus(1,i-1)=0$).
\item For a given error type, if there is a total odd number of UDD layers with odd sequence orders before the $i^{\textrm{th}}$ UDD layer that the error anticommutes with ($p_\oplus(1,i-1)=1$), then the $i^{\textrm{th}}$ UDD layer is totally ineffective ($p_\oplus(1,i-1)\oplus 1=0$) irrespective of whether this error anticommutes with this layer or not.
\item For a given error type,  each odd order UDD layer that the error anticommutes with and is nested outside the $i^{\textrm{th}}$ UDD layer, can enhance the suppression ability of the $i^{\textrm{th}}$ UDD layer by one more order (outer-odd-UDD effect) on this error type. In other words, the outer-odd-UDD suppression effect is cumulative and is responsible for the $p_+(i+1,\ell)$ term in the decoupling order formula Eq.~\eqref{Nrl}.
\end{enumerate}
Since our analysis identifies the conditions under which the suppression ability of a given UDD layer is inhibited, or made totally ineffective, or rather enhanced by other UDD layers with odd sequence orders, one can use it to design an \emph{optimally ordered} NUDD scheme that exploits the full power of each UDD layer. To be more specific, suppose one would like to design an NUDD scheme from some UDD sequences whose  control pulse types and sequence orders are given. From the analysis presented here, in order to reach optimal efficiency, first one should nest all the UDD layers with even sequence orders together, where the nesting orders can be arbitrary, and denote this resulting sequence as NUDD$_{e}$; second, nest all the UDD layers with odd sequence orders together such that the sequence orders from the inner-most to the outer-most layers are decreasing, and denote this resulting sequence as NUDD$_{o}$; the final and optimal NUDD scheme is constructed by nesting NUDD$_{e}$ as the inner sequence with NUDD$_{o}$ as the outer sequence. 

An important challenge is to generalize the analysis we have presented here to the setting of non-ideal, finite width pulses.  

\begin{acknowledgments}
This research was supported by the ARO MURI grant W911NF-11-1-0268, by the Department of Defense, by the Intelligence Advanced Research Projects Activity (IARPA)
via Department of Interior National Business Center contract number
D11PC20165, and by
NSF grants No. CHE-924318 and CHE-1037992. 
The U.S. Government is authorized to reproduce and distribute
reprints for Governmental purposes notwithstanding any copyright annotation
thereon. The views and conclusions contained herein are those of
the authors and should not be interpreted as necessarily representing the
official policies or endorsements, either expressed or implied, of IARPA,
DoI/NBC, or the U.S. Government.
\end{acknowledgments}

%\bibliographystyle{apsrev4-1}
%\bibliography{DDbib}

%merlin.mbs 2010-03-15 4.21a (PWD, AO, DPC)
%Control: key (0)
%Control: author (8) initials jnrlst
%Control: editor formatted (1) identically to author
%Control: production of article title (-1) disabled
%Control: page (0) single
%Control: year (1) truncated
%Control: production of eprint (0) enabled
%

%%%%%%%%%%%%%%%%%%%%%%%%%%%%%%%%%%%%%%%%%%%%%%%%%%%%%%%%%%%%%%%
\appendix

\section{Procedure to obtain $H_{(r_{1},r_{2},\dots, r_{\ell})}$ }
\label{app: step H}

Starting with the control pulse operator composing the first UDD layer, $\Omega_{1}$, the Hamiltonian can be partitioned into
\begin{equation}
H=\sum_{r_{1}=0,1}H_{(r_{1})}
\end{equation} 
where
\begin{equation}
H_{(r_{1})}\equiv \frac{H+(-1)^{r_{1}}\Omega_{1}H\Omega_{1}}{2}.
\label{H1}
\end{equation}
The Hamiltonian $H_{(0)}$ commutes with $\Omega_{1}$, while $H_{(1)}$ anticommutes with $\Omega_{1}$. Continuing the procedure for the next layer of NUDD, each $H_{(r_{1})}$, with $r_{1}=0\textrm{ or }1$, can be divided further into commuting and anticommuting parts for the control {pulse} operator $\Omega_{2}$ as 
\begin{equation}
H_{(r_{1}, r_{2})}\equiv \frac{H_{(r_{1})}+(-1)^{r_{2}}\Omega_{2}H_{(r_{1})}\Omega_{2}}{2}.
\end{equation}
Using Eq.~\eqref{H1} iteratively for the first $l$ layers, we have 
\begin{equation}
H_{(r_{1},\dots, r_{\ell})}\equiv [H_{(r_{1},\dots, r_{\ell-1})}+(-1)^{r_{\ell}}\Omega_{\ell}H_{(r_{1},\dots, r_{\ell-1})}\Omega_{\ell}]/2
\end{equation}
which commutes with $\Omega_{i}$ if $r_{i}=0$ and anticommutes with $\Omega_{i}$ if $r_{i}=1$, where $i\in\{1,\dots,\ell\}$.

%%%%%%%%%%%%%%%%%%%%%%%%%%%%%%%%%%%%%%%%%%%%%%%%%%%%%%%%%%%%%%%%%%%%%%%%%%%%%%%%%%%%%%%%%%%%%%%%%%%%%%

\section{The outer-most UDD interval decomposition}
\label{app: outer decomposition}

We shall derive Eq.~\eqref{Fdecomposition} by splitting each integral of $F_{\oplus_{p=1}^{n}\vec{r}_{\ell}^{\,(p)}}$ [Eq.~\eqref{F}] into a sum of sub-integrals over the normalized outer-most layer intervals $s_{j_{\ell}}$. Since $F_{\oplus_{p=1}^{n}\vec{r}_{\ell}^{\,(p)}}$ comprises a series of time-ordered nested integrals, our procedure for decomposing $F_{\oplus_{p=1}^{n}\vec{r}_{\ell}^{\,(p)}}$ is to evaluate each nested integrals one by one, from $\eta^{(n)}$ to $\eta^{(1)}$.

We call the sub-integral over the $j_{\ell}^{\textrm{th}}$ outer-most interval ``sub-integral-$j_{\ell}$''. Suppose the integral of the integration variable $\eta^{(p)}$ follows the sub-integral-$j_{\ell}^{(p+1)}$ of the previous variable $\eta^{(p+1)}$. By splitting up the integral of $\eta^{(p)}$ with respect to the normalized outer-most intervals $s_{j_{\ell}}$, we have 
\begin{eqnarray}
\label{eq:B1}
&& \int_{0}^{\eta^{(p+1)}}\prod_{i=1}^{\ell}f_{i}(\eta^{\,(p)})^{r_{i}^{\,(p)}} d\eta^{\,(p)}\\
=&&\sum_{j_{\ell}^{(p)}=1}^{j_{\ell}^{(p+1)}-1}f_{\ell}(j_{\ell}^{(p)})^{r_{\ell}^{\,(p)}} \int_{\eta_{j^{(p)}_{\ell}-1}}^{\eta_{j^{(p)}_{\ell}}}\prod_{i=1}^{\ell-1}f_{i}(\eta^{\,(p)})^{r_{i}^{\,(p)}} d\eta^{\,(p)}\notag \\
+&&f_{\ell}(j_{\ell}^{(p+1)})^{r_{\ell}^{\,(p)}}\int_{\eta_{j^{(p+1)}_{\ell}-1}}^{\eta^{(p+1)}}\prod_{i=1}^{\ell-1}f_{i}(\eta^{\,(p)})^{r_{i}^{\,(p)}} d\eta^{\,(p)}\notag
\end{eqnarray}
where 
$\eta_{j^{(p)}_{\ell}}$ is UDD$_{N_{\ell}}$ pulse timing and $f_{\ell}(j_{\ell}^{(p)})=(-1)^{j_{\ell}^{(p)}-1}$.

For each sub-integral with the integration domain from $[\eta_{j^{(p)}_{\ell}-1},\eta_{j^{(p)}_{\ell}})$, with pulse interval $s_{j_{\ell}^{(p)}}$, we make the following linear change of variable,  
\begin{equation}
\tilde{\eta}^{(p)}=\frac{\eta^{(p)}-\eta_{j_{\ell}^{(p)}-1}}{s_{j_{\ell}^{(p)}}},
\end{equation} 
to normalize the  integration domain to $1$. Accordingly, each $f_{i}(\tilde{\eta}^{(p)})$ with $i \leq \ell-1$ becomes the normalized $i^{\textrm{th}}$-layer modulation function for $\ell-1$ layers of NUDD,
%\begin{equation}
%f_{i}(\tilde{\eta}^{(p)})=(-1)^{j_{i}-1} \qquad  \tilde{\eta}^{(p)} \in [\eta_{\,j_{\ell-1},\dots,j_{i}-1}, \eta_{\,j_{\ell-1},\dots,j_{i}}).
%\label{f'} 
%\end{equation}
and is the same function for all of the outermost pulse intervals. Consequently {the summands in Eq.~\eqref{eq:B1} become}
\bes
\begin{align}
&\int_{0}^{1}d\tilde{\eta}^{(p)} \prod_{i=1}^{\ell-1}f_{i}(\tilde{\eta}^{(p)})^{r_{i}^{\,(p)}}\sum_{j_{\ell}^{(p)}=1}^{j_{\ell}^{(p+1)}-1}f_{\ell}(j_{\ell}^{(p)})^{r_{\ell}^{\,(p)}} s_{j_{\ell}^{(p)}}\label{breakinta}\\
&\ +\int_{0}^{\tilde{\eta}^{(p+1)}}d\tilde{\eta}^{(p)}\prod_{i=1}^{\ell-1}f_{i}(\tilde{\eta}^{(p)})^{r_{i}^{\,(p)}} f_{\ell}(j_{\ell}^{(p+1)})^{r_{\ell}^{\,(p)}}  s_{j_{\ell}^{(p+1)}}\label{breakintb}
\end{align}
\ees
where $\int_{0}^{1}\prod_{i=1}^{\ell-1}f_{i}(\tilde{\eta}^{(p)})^{r_{i}^{\,(p)}}\,d\tilde{\eta}^{(p)}$ is taken out of the summation.  

It is possible to rewrite Eq.~\eqref{breakinta} and Eq.~\eqref{breakintb} in similar form by introducing the configuration number $\xi_{p}$, where $\xi_{p}=0$ stands for Eq.~\eqref{breakinta} while $\xi_{p}=1$ stands for Eq.~\eqref{breakintb}. In this notation, Eq.~\eqref{breakinta} becomes
\begin{align}
&\int_{0}^{(\,\eta^{(p+1)}\,)^{\xi_{p}}}\prod_{i=1}^{\ell-1}f^{\prime}_{i}(\eta^{\,(p)})^{r_{i}^{\,(p)}} d\eta^{\,(p)} \times \notag \\
&\quad \sum_{j_{\ell}^{(p)}=1}^{j_{\ell}^{(p+1)}-(\xi_{p}\oplus 1)} f_{\ell}(j_{\ell}^{(p)})^{r_{\ell}^{\,(p)}}\,s_{j_{\ell}^{(p)}},
\label{inout1}
\end{align}
where $f^{\prime}_{i}(\eta^{\,(p)})\equiv f_{i}(\tilde{\eta}^{(p)})$,
and Eq.~\eqref{breakintb} with $\xi_{p}=1$  becomes
\begin{align}
&\int_{0}^{(\,\eta^{(p+1)}\,)^{\xi_{p}}}\prod_{i=1}^{\ell-1}f^{\prime}_{i}(\eta^{\,(p)})^{r_{i}^{\,(p)}} d\eta^{\,(p)} \times \notag \\
&\quad \sum_{j_{\ell}^{(p)}=\xi_{p}j_{\ell}^{(p+1)}}^{j_{\ell}^{(p+1)}-(\xi_{p}\oplus 1)} f_{\ell}(j_{\ell}^{(p)})^{r_{\ell}^{\,(p)}}\,s_{j_{\ell}^{(p)}},
\label{inout2}
\end{align}
where 
\begin{equation}
\sum_{j_{\ell}^{(p)}=\xi_{p}j_{\ell}^{(p+1)}}^{j_{\ell}^{(p+1)}-(\xi_{p}\oplus 1)} = \sum_{j_{\ell}^{(p)}=j_{\ell}^{(p+1)}}^{j_{\ell}^{(p+1)}}
\end{equation}
which means that $j_{\ell}^{(p)}=j_{\ell}^{(p+1)}$.

Note that both forms of Eqs.~\eqref{inout1} and ~\eqref{inout2} are naturally decomposed into an inner part, which is an integral of the first $\ell-1$ modulation functions for $(\ell-1)$-layer NUDD, and an outer part, which is sum over the outer-most ($\ell^{\textrm{th}}$) layer modulation function.

Each integral of  $F_{\oplus_{p=1}^{n}\vec{r}_{\ell}^{\,(p)}}$ can be split into  Eqs.~\eqref{inout1} and ~\eqref{inout2}, with one exception: if $j_{\ell}^{(p+1)}=1$ the subsequent sub-integrals from $\eta^{(p)}$ to $\eta^{(1)}$ will only contain Eq.~\eqref{inout2}, whose configuration number is $1$.

By substituting Eqs.~\eqref{inout1} and ~\eqref{inout2} into each integral of $F_{\oplus_{p=1}^{n}\vec{r}_{\ell}^{\,(p)}}$, in sequence from $\eta^{(n)}$ to $\eta^{(1)}$ (taking the exception into account) and collecting all the inner (outer) parts of all sub-integrals accordingly,  we obtain 
\begin{align}
\label{D1}
&F_{\oplus_{p=1}^{n}\vec{r}_{\ell}^{\,(p)}}= \\
&\quad \sum_{\{\xi_{k}=0,1\}_{k=1}^{n-1}}\Phi^{\rm in}_{\xi_{n}\dots \xi_{1}}(\{\vec{r}_{\ell-1}^{\,(p)}\}_{p=1}^{n})\,
\Phi^{\rm out}_{\xi_{n}\dots \xi_{1}}(\{r_{\ell}^{\,(p)}\}_{p=1}^{n}),\notag
\end{align}
where 
\begin{align}
&\Phi^{\rm in}_{\xi_{n}\dots \xi_{1}}(\{\vec{r}_{\ell-1}^{\,(p)}\}_{p=1}^{n})=\\
&\quad \prod_{p=1}^{n}\int_{0}^{(\,\eta^{(p+1)}\,)^{\xi_{p}}}\prod_{i=1}^{\ell-1}f^{\prime}_{i}(\eta^{\,(p)})^{r_{i}^{\,(p)}} d\eta^{\,(p)},\notag
\label{innerphi}
\end{align}
and
\begin{eqnarray}
&&\Phi^{\rm out}_{\xi_{n}\dots \xi_{1}}(\{r_{\ell}^{\,(p)}\}_{p=1}^{n})=\sum_{j_{\ell}^{(n)}=\sum_{k=1}^{n}(\xi_{k}\oplus 1)}^{N_{\ell}+ 1} f_{\ell}(j_{\ell}^{(n)})^{r_{\ell}^{\,(n)}}\,s_{j_{\ell}^{(n)}}\times\notag\\
&&\prod_{p=1}^{n-1}  \sum_{j_{\ell}^{(p)}=\xi_{p}j_{\ell}^{(p+1)}+(\xi_{p}\oplus 1)\sum_{k=1}^{p}(\xi_{k}\oplus 1)}^{j_{\ell}^{(p+1)}-(\xi_{p}\oplus 1)} f_{\ell}(j_{\ell}^{(p)})^{r_{\ell}^{\,(p)}}\,s_{j_{\ell}^{(p)}},
\label{outerphi}
\end{eqnarray} 
with $\sum_{\{\xi_{k}=0,1\}_{k=1}^{n-1}}$ summing over all possible nested integration (sum) configurations $\xi_{n}\dots\xi_{2}\,\xi_{1}$  with $\xi_{n}\equiv 0$  for the inner part $\Phi^{\rm in}_{\xi_{n}\dots \xi_{1}}$ (the outer part $\Phi^{\rm out}_{\xi_{n}\dots \xi_{1}}$).

Note that in Eq.~\eqref{outerphi} for $\xi_{p}=1$,
\begin{equation}
\sum_{j_{\ell}^{(p)}=j_{\ell}^{(p+1)}+(1\oplus 1)\sum_{k=1}^{p}(\xi_{k}\oplus 1)}^{j_{\ell}^{(p+1)}-(1\oplus 1)} = \sum_{j_{\ell}^{(p)}=j_{\ell}^{(p+1)}}^{j_{\ell}^{(p+1)}},
\label{xi1}
\end{equation}
which means that the variables $\eta^{(p)}$ and $\eta^{(p+1)}$ in Eq.~\eqref{F} are in the same  interval $j_{\ell}^{(p)}=j_{\ell}^{(p+1)}$, while for $\xi_{p}=0$
\begin{equation}
\sum_{j_{\ell}^{(p)}=0+(0\oplus 1)\sum_{k=1}^{p}(\xi_{k}\oplus 1)}^{j_{\ell}^{(p+1)}-(0\oplus 1)}= \sum_{j_{\ell}^{(p)}=\sum_{k=1}^{p}(\xi_{k}\oplus 1)}^{j_{\ell}^{(p+1)}-1}.
\label{xi0}
\end{equation}
In the latter case, the variable $\eta^{(p)}$ in Eq.~\eqref{F} is located in the earlier interval than the variable $\eta^{(p+1)}$, namely, $j_{\ell}^{(p)}<j_{\ell}^{(p+1)}$. $\sum_{k=1}^{p}(\xi_{k}\oplus 1)$ counts the number of $\xi_{k}=0$ from $k=1$ to $k=p$.

Furthermore, each $\Phi^{\rm in}_{\xi_{n}\dots \xi_{1}}(\{\vec{r}_{\ell-1}^{\,(p)}\}_{p=1}^{n})$ is a product of multiple nested integrals with the modulation functions of $\ell-1$ layers of NUDD as integrands. In fact, each nested integral contained in $\Phi^{\rm in}_{\xi_{n}\dots \xi_{1}}(\{\vec{r}_{\ell-1}^{\,(p)}\}_{p=1}^{n})$ is actually one of the $(\ell-1)$-layers NUDD coefficients. For example, an inner term appearing in the expansion of the $8^{\textrm{th}}$ order  $\ell$-layers NUDD coefficients $F_{\oplus_{p=1}^{8}\vec{r}_{\ell}^{\,(p)}}$ reads
\begin{eqnarray}
\Phi^{\rm in}_{01101011}(\,\{\vec{r}_{\ell-1}^{\,(p)}\}_{p=1}^{8}\,)&=&F_{\oplus_{p=6}^{8}\vec{r}_{\ell-1}^{\,(p)}}F_{\oplus_{p=4}^{5}\vec{r}_{\ell-1}^{\,(p)}}F_{\oplus_{p=1}^{3}\vec{r}_{\ell-1}^{\,(p)}}\notag\\
&=& F^{(3)}_{\vec{r}_{\ell-1}^{\,'''}}F^{(2)}_{\vec{r}_{\ell-1}^{\,''}}F^{(3)}_{\vec{r}_{\ell-1}^{\,'}}
\label{inner ex}
\end{eqnarray}
where $\vec{r}_{\ell-1}^{\,'''}=\oplus_{p=6}^{8}\vec{r}_{\ell-1}^{\,(p)}$, $\vec{r}_{\ell-1}^{\,''}=\oplus_{p=4}^{5}\vec{r}_{\ell-1}^{\,(p)}$, and $\vec{r}_{\ell-1}^{\,'}=\oplus_{p=1}^{3}\vec{r}_{\ell-1}^{\,(p)}$.
Its corresponding outer part reads
\begin{eqnarray}
&&\Phi^{\rm out}_{01101011}(\,\{r_{\ell}^{\,(p)}\}_{p=1}^{8}\,)
=\sum_{j_{\ell}^{(8)}=3}^{N_{\ell}+ 1} f_{\ell}(j_{\ell}^{(8)})^{\oplus_{p=6}^{8}r_{\ell}^{\,(p)}}\,s_{j_{\ell}^{(8)}}^3 \times \notag\\
&&\sum_{j_{\ell}^{(5)}=2}^{j_{\ell}^{(8)}-1} f_{\ell}(j_{\ell}^{(5)})^{\oplus_{p=4}^{5}r_{\ell}^{\,(p)}}\,s_{j_{\ell}^{(5)}}^2\times
\sum_{j_{\ell}^{(3)}=1}^{j_{\ell}^{(5)}-1} f_{\ell}(j_{\ell}^{(3)})^{\oplus_{p=1}^{3}r_{\ell}^{\,(p)}}\,s_{j_{\ell}^{(3)}}^3\notag\\
\label{outer ex}
\end{eqnarray}
where we used Eq.~\eqref{Z2}.

As suggested by Eqs.~\eqref{inner ex} and ~\eqref{outer ex}, one can see that each configuration $\xi_{n}\dots\xi_{1}$ defines a  way to separate the $n$ vectors $\{\vec{r}_{\ell}^{\,(p)}\}_{p=1}^{n}$ into several clusters with order configuration numbers and error vectors as $\xi_{n}\vec{r}_{\ell}^{\,(n)}\dots \xi_{1}\vec{r}_{\ell}^{\,(1)}$. A cluster of vectors is defined as a contiguous set of vectors only connected by configuration numbers whose value is 1. Different clusters are separated by configuration numbers whose value is 0. For  $\xi_{8}\dots\xi_{1}=01101011$ in Eqs.~\eqref{inner ex} and ~\eqref{outer ex} as an example, we have 
\begin{equation}
0\,\vec{r}_{\ell}^{\,(8)}1\vec{r}_{\ell}^{\,(7)}1\vec{r}_{\ell}^{\,(6)}\,0\,\vec{r}_{\ell}^{\,(5)}1\vec{r}_{\ell}^{\,(4)}\,0\,\vec{r}_{\ell}^{\,(3)}1\vec{r}_{\ell}^{\,(2)}1\vec{r}_{\ell}^{\,(1)} 
\end{equation}
which divides the $8$ vectors into three clusters. Suppose for a given configuration $\xi_{n}\dots\xi_{2}\,\xi_{1}$, which separates the $n$ vectors into $m$ clusters, the $a^{\textrm{th}}$ cluster (counting from right to left)  contains  $n_{a}$ vectors $\{\vec{r}_{\ell}^{\,(p)}\}_{p=\nu}^{\mu}$ with $n_{a}=\mu-\nu+1$. Then, according to the result of $\oplus_{p=\nu}^{\mu}\vec{r}_{\ell}^{\,(p)}$, its corresponding inner part is $F^{(n_{a})}_{\vec{r}_{\ell-1}^{\,\<a\>}}$  with $\vec{r}_{\ell-1}^{\,\<a\>}\equiv \oplus_{p=\nu}^{\mu}\vec{r}_{\ell-1}^{\,(p)}$ and the corresponding outer part is $\sum_{j_{\ell}^{(a)}=a}^{j_{\ell}^{(a+1)}-1}f_{\ell}(j_{\ell}^{(a)})^{r_{\ell}^{\,\<a\>}}\,s_{j_{\ell}^{(a)}}^{\,n_{a}}$
with $r_{\ell}^{\,\<a\>}\equiv \oplus_{p=\nu}^{\mu}r_{\ell}^{\,(p)}$. Note that for the last cluster, $a=m$, the upper bound of the sum $j_{\ell}^{(m+1)}-1$ is replaced by $N_{\ell}+1$.
Therefore, Eq.~\eqref{D1} can be re-expressed in the more compact form of Eq.~\eqref{Fdecomposition}.

%%%%%%%%%%%%%%%%%%%%%%%%%%%%%%%%%%%%%%%%%%%%%%%%%%%%%%%%%%%%%%%%%%%%%

\section{$F^{(n)}_{\vec{r}_{\ell}}$ in terms of $F^{(n')}_{\vec{r'}_{\ell-1}}$}
\label{app: decomplemma}

We prove Lemma \ref{lem:innersum N}: $\check{N}_{\vec{r}_{\ell-1}} \leq \sum_{a=1}^{m}\check{N}_{\vec{r}_{\ell-1}^{\,\<a\>}}$, where  $\vec{r}_{\ell-1}=\oplus_{a=1}^{m}\vec{r}_{\ell-1}^{\,\<a\>}$.

\begin{proof}[Proof of Lemma \ref{lem:innersum N}]

For a given $\vec{r}_{\ell-1}$-type error with $\vec{r}_{\ell-1}=\oplus_{a=1}^{m}\vec{r}_{\ell-1}^{\,\<a\>}$ [Eq.~\eqref{r l-1}], its decoupling order is [Eq.~\eqref{Nrl}] 
\begin{equation}
\check{N}_{\vec{r}_{\ell-1}}= \max_{i\in\{1,\dots,\ell-1\}}[r_{i}(p_\oplus(1,i-1)\oplus 1)\widetilde{N}_{i}+p_+(i+1,\ell-1)]
\end{equation}
Suppose the maximum  occurs at the $M^{\textrm{th}}$ UDD layer, i.e.,
\begin{equation}
\check{N}_{\vec{r}_{\ell-1}}=\widetilde{N}_{M}+  p_+(M+1,\ell-1)
\end{equation}
where the coefficient of $\widetilde{N}_{M}$ is equal to 1, implying that 
\bes
\begin{eqnarray}
&&r_{M}=1 \label{r M}\\
&& p_\oplus(1,M-1)=0 \label{r 1 to M-1}.
\end{eqnarray}
\ees
With  Eq.~\eqref{r l-1}, Eq.~\eqref{r M}, Eq.~\eqref{r 1 to M-1} , we have the following equalities and inequality.  First,
\begin{equation}
\oplus_{a=1}^{m}r_{M}^{\,\<a\>}=r_{M}=1, 
\label{sumra=rM}
\end{equation}
second,
\begin{eqnarray}
p_\oplus(1,M-1)&=&\oplus_{k=1}^{M-1}r_{k}[N_{k}]_{2}=\oplus_{k=1}^{M-1}(\oplus_{a=1}^{m}r_{k}^{\,\<a\>})[N_{k}]_{2}\notag\\
&=&\oplus_{a=1}^{m}(\oplus_{k=1}^{M-1}r_{k}^{\,\<a\>}[N_{k}]_{2})\notag\\
&=&\oplus_{a=1}^{m}|\nu_{M}^{\<a\>}|,
\end{eqnarray}
where to keep the notation more compact we defined
\beq
|\nu_{i}^{\<a\>}|\equiv\oplus_{k=1}^{i-1}r^{\<a\>}_{k}[N_{k}]_{2} .
\eeq
Thus, using Eq.~\eqref{r 1 to M-1}
\begin{equation}
\oplus_{a=1}^{m}|\nu_{M}^{\<a\>}|=0 ,
\label{v}
\end{equation}
and third,
\begin{eqnarray}
p_+(M+1,\ell-1)&=&\sum_{k=M+1}^{\ell-1}r_{k}[N_{k}]_{2}\notag \\
&=&\sum_{k=M+1}^{\ell-1}(\oplus_{a=1}^{m}r_{k}^{\,\<a\>})[N_{k}]_{2}\notag\\
&\leq& \sum_{k=M+1}^{\ell-1}\sum_{a=1}^{m}r_{k}^{\,\<a\>}[N_{k}]_{2}\notag\\
&\leq& \sum_{a=1}^{m}\sum_{k=M+1}^{\ell-1}r_{k}^{\,\<a\>}[N_{k}]_{2}\label{r M to l}.
\end{eqnarray}
By using the properties
$\max[A,B,C]\geq \max[A,B]\label{max1}$ and 
$\max[A+c, B+c]=\max[A, B]+c$
we have
\begin{eqnarray}
&&\sum_{a=1}^{m}\check{N}_{\vec{r}_{\ell-1}^{\,\<a\>}}=\notag\\
&&\sum_{a=1}^{m}\max[\{\,r_{i}^{\,\<a\>}(|\nu^{\<a\>}_{i}|\oplus 1 )\widetilde{N}_{i}+\sum_{k=i+1}^{\ell-1}r_{k}^{\,\<a\>}[N_{k}]_{2}\,\}_{i=1}^{\ell-1}]\notag\\
&&\geq \sum_{a=1}^{m}\max[\{\,r_{i}^{\,\<a\>}(|\nu^{\<a\>}_{i}|\oplus 1 )\widetilde{N}_{i}+\sum_{k=i+1}^{\ell-1}r_{k}^{\,\<a\>}[N_{k}]_{2}\,\}_{i=1}^{M}]\notag\\
&&\geq \sum_{a=1}^{m}\max[\{\,r_{i}^{\,\<a\>}(|\nu^{\<a\>}_{i}|\oplus 1 )\widetilde{N}_{i}+\sum_{k=i+1}^{M}r_{k}^{\,\<a\>}[N_{k}]_{2}\,\}_{i=1}^{M}] \notag\\
&&+\sum_{a=1}^{m}\sum_{k=M+1}^{\ell-1}r_{k}^{\,\<a\>}[N_{k}]_{2}\notag\\
\end{eqnarray}
Equivalently, 
\begin{equation}
\sum_{a=1}^{m}\check{N}_{\vec{r}_{\ell-1}^{\,\<a\>}}\geq\sum_{a=1}^{m}\check{N}_{\vec{r}_{M}^{\,\<a\>}} +\sum_{a=1}^{m}\sum_{k=M+1}^{\ell-1}r_{k}^{\,\<a\>}[N_{k}]_{2}.
\label{sumNa}
\end{equation}
In light of the inequality~\eqref{r M to l}, the remaining task is to show 
\begin{equation}
\sum_{a=1}^{m}\check{N}_{\vec{r}_{M}^{\,\<a\>}} \geq \widetilde{N}_{M}
\label{sumNa>NM}
\end{equation}
 which will immediately lead to Lemma \ref{lem:innersum N}: $\sum_{a=1}^{m}\check{N}_{\vec{r}_{\ell-1}^{\,\<a\>}}\geq \check{N}_{\vec{r}_{\ell-1}}$ .

The condition $\oplus_{a=1}^{m}r_{M}^{\,\<a\>}=r_{M}=1$ [Eq.~\eqref{sumra=rM}] implies that  there must exist at least one $a'$ such that $r_{M}^{\<a'\>}=1$. For this $\vec{r}_{M}^{\<a'\>}$ vector with $r_{M}^{\<a'\>}=1$, its first $M-1$ components either satisfy  $|\nu^{\<a'\>}_{M}|=0$ or  $|\nu^{\<a'\>}_{M}|=1$.

For the $\vec{r}_{M}^{\,\<a\>}$-type  error with $r_{M}^{\<a'\>}=1$ and $|\nu^{\<a'\>}_{M}|=0$,  the coefficient $r_{M}^{\<a'\>}(|\nu^{\<a'\>}_{M}|\oplus 1)$ of $\widetilde{N}_{M}$ in $\check{N}_{\vec{r}_{M}^{\<a'\>}}$ [Eq.~\eqref{Nrl}] is not zero. Therefore, %with the property Eq.~\eqref{max1}
\begin{equation}
\check{N}_{\vec{r}_{M}^{\<a'\>}}= \max[\{\cdots\}_{i=1}^{M-1},\widetilde{N}_{M}]\geq \widetilde{N}_{M}
\end{equation}
which leads to Eq.~\eqref{sumNa>NM}.

However, for  the $\vec{r}_{M}^{\,\<a\>}$-type  error with $r_{M}^{\<a'\>}=1$ and $v^{\<a'\>}_{M}=1$, the coefficient $r_{M}^{\<a'\>}(|\nu^{\<a'\>}_{M}|\oplus 1)$ of $\widetilde{N}_{M}$ in Eq.~\eqref{Nrl} ends up vanishing, leading to \begin{equation}
\check{N}_{\vec{r}_{M}^{\<a'\>}}= \max[\{\cdots\}_{i=1}^{M-1},0]
\end{equation} 
which cannot determine whether $\check{N}_{\vec{r}_{M}^{\<a'\>}}\geq \widetilde{N}_{M}$. This notwithstanding, $v^{\<a'\>}_{M}\equiv\oplus_{k=1}^{M-1}r_{k}^{\<a'\>}[N_{k}]_{2}=1$ indicates that among the $M-1$ components of the vector $\vec{r}_{M}^{\<a'\>}$, there is a total odd number of components each of which is equal to 1, and associates with an odd sequence order.  Suppose the $o'$ component is the outermost nonzero component $r_{o'}=1$ with $[N_{o'}]_{2}=1$ of  $\vec{r}_{M}^{\<a'\>}$. Accordingly, 
\bes
\begin{eqnarray}
&&|\nu^{\<a'\>}_{o'}|=\oplus_{k=1}^{o'-1}r_{k}^{\<a'\>}[N_{k}]_{2}=0\label{o'1}\\
&&\sum_{k=o'+1}^{M}r_{k}^{\<a'\>}[N_{k}]_{2}=0
\label{o'2}
\end{eqnarray}
\ees
With  Eq.~\eqref{o'1} and Eq.~\eqref{o'2}, it follows that
\begin{equation}
\check{N}_{\vec{r}_{M}^{\<a'\>}}=\max[\{\cdots\}_{i=1}^{o'-1},\widetilde{N}_{o'}, \{\cdots\}_{i=o'+1}^{M-1}]\geq \widetilde{N}_{o'}\geq \widetilde{N}_{M}-1,
\label{No1}
\end{equation}
where the last inequality is derived as follows:
\begin{eqnarray}
\widetilde{N}_{M}&&=\min[N_{M}, N^{k<M}_{o_{\min}}+1]=\min[N_{M}, N^{k<o'}_{o_{\min}}+1]\notag\\
&&\leq N^{k<o'}_{o_{\min}}+1=\min[N^{k<o'-1}_{o_{\min}}+1, N_{o'}+1]\notag\\
&&\leq\widetilde{N}_{o'}+1
\label{NM<No+1}
\end{eqnarray}

Moreover, if  $v^{\<a'\>}_{M}=1$, Eq.~\eqref{v} implies that there is another vector $\vec{r}_{M}^{<a''>}$ such that  $|\nu^{\<a''\>}_{M}|\equiv\oplus_{k=1}^{M-1}r_{k}^{\<a''\>}[N_{k}]_{2}{=1}$. Using the same argument, there must exist a $r_{o''}=1$ with $v^{\<a"\>}_{o''}=0$ with an odd sequence order $N_{o''}$, $o''<M$. It follows that 
\begin{equation}
\check{N}_{\vec{r}_{M}^{\<a''\>}}=\max[\{\cdots\}_{i=1}^{o''-1},\widetilde{N}_{o''}, \{\cdots\}_{i=o''+1}^{M}]\geq \widetilde{N}_{o''}\geq 1.
\label{No2}
\end{equation}

Therefore, with  Eqs.~\eqref{No1} and \eqref{No2},
\begin{equation}
\sum_{a=1}^{m}\check{N}_{\vec{r}_{M}^{\,\<a\>}} \geq \check{N}_{\vec{r}_{M}^{\<a'\>}}+\check{N}_{\vec{r}_{M}^{\<a''\>}} \geq \widetilde{N}_{M}
\end{equation}
which proves Eq.~\eqref{sumNa>NM}.

\end{proof}

%%%%%%%%%%%%%%%%%%%%%%%%%%%%%%%%%%%%%%%%%%%%%%%%%%%%%%%%%%%%%%%%%%%%
\section{Fourier expansions after linear change of variable}
\label{app:Fourier}

The $\ell^{\textrm{th}}$-layer modulation function $f_{\ell}(\theta )$, which alternates in sign with the new outer-most-layer pulse timing $\theta_{j_{\ell}}=\frac{j_{\ell}\pi}{N_{\ell}+1}$, has a period of $\frac{2\pi }{^{N_{\ell}+1}}$ since the outer-most-layer pulses are now  equally spaced.  Therefore,  $f_{\ell}(\theta )$ has a simple Fourier expansion
\begin{equation}
f_{\ell}(\theta )=\sum_{k=0}^{\infty }d_{k}^{\ell} \sin [(2k+1)(N_{\ell}+1)\theta ],
\label{Fourier fl}
\end{equation}
with $d_{k}^{\ell} = \frac{4}{(2k+1)\pi}$. The remaining modulation functions such as $f_{i}(\theta)$ with $i<\ell$, which switch sign with the inner-layer UDD pulse timing $\theta_{\,j_{\ell},\dots,j_{i}}$, all have the same period $\frac{\pi }{^{N_{\ell}+1}}$. This is because the inner ($\ell-1$)-layers NUDD pulse timing structure inside each $[\theta_{j_{\ell}-1}, \theta_{j_{\ell}})$ is still preserved and each are actually identical to each other, in that Eq.~\eqref{linear} just rescales  the outer-most layer interval $[\theta_{j_{\ell}-1}, \theta_{j_{\ell}})$ linearly.  In addition, due to the time-symmetric structure of NUDD pulse timings, it follows that inside each $[\theta_{j_{\ell}-1}, \theta_{j_{\ell}})$, $f_{i}(\theta)$ for $i<\ell$ is an even function when $N_{i}$ is even, and odd when $N_{i}$ is odd. Therefore, it follows that  the Fourier expansion of $f_{i}(\theta )$  has the following form:
\begin{eqnarray}
f_{i}(\theta )=\begin{dcases}
               \sum_{k=0}^{\infty}d_{k}^{i}\cos[2k(N_{\ell}+1)\theta ], & N_{i} \textrm{ even} \\
               \sum_{k=1}^{\infty}d_{k}^{i}\sin[2k(N_{\ell}+1)\theta ], & N_{i} \textrm{ odd} 
               \end{dcases}
\label{Fourier fi}
\end{eqnarray}
for $i<\ell$. Note that the Fourier expansion coefficients $d_{k}^{i}$ in the even and odd cases are, in fact, different.  However, we use the same notation for both since the exact values of these coefficients are irrelevant for our proof. Finally, the Fourier expansion of $G_{1}(\theta)$ is
\begin{equation}
G_{1}(\theta)=\sum_{k=0}^{\infty }\sum_{q=-1,1}g_{k,q}\sin [2k(N_{\ell}+1)\theta +q\theta ]
\label{G1}
\end{equation}
whose detailed calculations can be found in Ref. \cite{WanLidar:11}. In accordance with Definition \ref{def:type}, the function types of $f_{\ell}(\theta )$, $f_{\Omega_{i<\ell}}(\theta )$, and $G_{1}(\theta)$ [Eq.~\eqref{Fourier fl}-~\eqref{G1}] are identified as Eqs ~\eqref{psifl}-~\eqref{psiG}, respectively.

\vspace{1cm}
%%%%%%%%%%%%%%%%%%%%%%%%%%%%%%%%%%%%%%%%%%%%%%%%%%%%%%%%%%%%%%%%%%%%
\section{$\min[\{\check{N}_{\vec{r}_{\ell-1}\in \vec{v}_{\ell-1}^{\,1}}\}]=N^{k'<\ell}_{o_{\min}} $}
\label{app:min Nv1}

We prove Lemma \ref{lem: min Nv1}: $\min[\{\check{N}_{\vec{r}_{\ell-1}\in \vec{v}_{\ell-1}^{\,1} }\}]=N^{k'<\ell}_{o_{\min}} $.

\begin{proof}[Proof of Lemma \ref{lem: min Nv1}]

For all $\vec{r}_{\ell-1}\in \vec{v}_{\ell-1}^{\,1} $, their components satisfy, by definition, the condition $p_\oplus(1,\ell-1)=1$, which implies that the total number of non-zero components with odd sequence orders in $\vec{r}_{\ell-1}$ is odd. Suppose that before the $\ell$th-layer, the $o_{1}^{\rm th}$, $o_{2}^{\rm th}$, $\dots$, and $o_{j}^{\rm th}$ UDD layers where $0<o_{1}<o_{2}<\dots<o_{j}<\ell$ are the layers with odd sequence orders, namely, $[N_{o_{i}}]_{2}=1$ for $i=1,2,\dots, j$. 

Define  $\vec{e}_{o_{i}}$ as the $\vec{r}_{\ell-1}$-type errors with  $r_{o_{i}}=1$ and all $r_{j\neq o_{i}}=0$. For $\vec{e}_{o_{i}}$-type errors which obviously belong to the $\vec{v}_{\ell-1}^{\,1}$-type error,   we have $\check{N}_{\vec{e}_{o_{i}} }=\widetilde{N}_{o_{i}}$ according to  the decoupling order formula Eq.~\eqref{Nrl}.

For the remaining $\vec{v}_{\ell-1}^{\,1} $ vectors, they have  3 or more non-zero components associated with odd orders. For a given $\vec{r}_{\ell-1}\neq \vec{e}_{o_{i}}\in \vec{v}_{\ell-1}^{\,1} $, suppose the inner-most component with odd order is $r_{o_{i}}=1$.  Then its decoupling order follows 
\begin{eqnarray}
\check{N}_{\vec{r}_{\ell-1}\neq \vec{e}_{o_{i}}\in \vec{v}_{\ell-1}^{\,1} }&&=\max[\dots,\widetilde{N}_{o_{i}}+\sum_{k=o_{i}+1}^{\ell-1}r_{k}[N_{k}]_{2}, \dots]\notag\\
&&\geq \widetilde{N}_{o_{i}}=\check{N}_{\vec{e}_{o_{i}} }
\end{eqnarray}

Therefore, $\min[\{\check{N}_{\vec{r}_{\ell-1}\in \vec{v}_{1} }\}]$ will occur among the decoupling orders of the errors which has only one non-zero component with odd sequence order, i.e.,
\begin{eqnarray}
\min[\{\check{N}_{\vec{r}_{\ell-1}\in \vec{v}_{\ell-1}^{\,1}}\}]&=&\min[\check{N}_{\vec{e}_{o_{1}}},\check{N}_{\vec{e}_{o_{2}}},\dots,\check{N}_{\vec{e}_{o_{j}}}]\notag\\
&=&\min[\widetilde{N}_{o_{1}},\widetilde{N}_{o_{2}},\dots,\widetilde{N}_{o_{j}}]\notag \label{minNodd1}\\
&=& \min[N_{o_{1}},N_{o_{2}},\dots,N_{o_{j}}] \label{minNodd2}
\end{eqnarray}
Eq.~\eqref{minNodd2} is the same expression as {$N^{k'<i}_{o_{\min}}\equiv\min\{N_{k'}\ |\ [N_{k'}]_{2}=1\}$} [Eq.~\eqref{Nioddmin}], which completes the proof of Lemma \ref{lem: min Nv1}.

\end{proof}

\end{document}